\pgfplotsset{compat=1.14}
\newcommand{\ie}[0]{\emph{i.e.}\xspace}
\crefname{algo}{algorithm}{algorithms}
\newcounter{countersetupcrefforlines}
  \edef\temp{\noexpand\crefname{linecounter\arabic{countersetupcrefforlines}}{line}{lines}}
  \ifnum\value{countersetupcrefforlines}<11
\newcommand{\customlabel}[3][]{%
   \protected@write \@auxout {}{\string \newlabel {#2}{{#3}{\thepage}{#3}{#2}{}} }%
   \protected@write \@auxout {}{\string \newlabel {#2@cref}{{[#1][#3][]#3}{[1][\thepage][]\thepage}}}%
   \hypertarget{#2}{}%
}
\newcommand{\txpool}[0]{\ensuremath{\mathit{txpool}}}
\newcommand{\tx}[0]{\ensuremath{\mathit{tx}}}
\newcommand{\chain}[0]{\ensuremath{\mathsf{ch}}}
\newcommand{\chainfrozen}[1][]{\ensuremath{\mathsf{ch}^{\mathrm{frozen}\ifthenelse{\equal{#1}{}}{}{,#1}}}}
\newcommand{\Chain}[0]{\ensuremath{\mathsf{Ch}}}
\newcommand{\chainava}[0]{\ensuremath{\mathsf{chAva}}}
\newcommand{\chainfin}[0]{\ensuremath{\mathsf{chFin}}}
\newcommand{\chaincanmfc}[0]{{\chain^{\mathsf{MFC}}}}
\newcommand{\fastcands}[0]{\ensuremath{\mathsf{fast}^\mathrm{cands}}}
\newcommand{\fastcand}[0]{\ensuremath{\mathsf{fast}^\mathrm{cand}}}
\newcommand{\mfc}[0]{\ensuremath{\operatorname{\textsc{MFC}}}}
\newcommand{\mfcvote}[1]{\ensuremath{\mfc^{\voting{#1}}}}
\newcommand{\mfcpropose}[1]{\ensuremath{\mfc^{\proposing{#1}}}}
\newcommand{\specifyExec}[2]{\overset{#1}{#2}{}}
\newcommand{\mfcFFG}{\specifyExec{\FFGExec}{\mfc}{}}
\newcommand{\mfcNoFFG}{\specifyExec{\NoFFGExec}{\mfc}{}}
\newcommand{\mfcvoteFFG}[1]{\ensuremath{\mfcFFG^{\voting{#1}}}}
\newcommand{\mfcproposeFFG}[1]{\ensuremath{\mfcFFG^{\proposing{#1}}}}
\newcommand{\mfcvoteNoFFG}[1]{\ensuremath{\mfcNoFFG^{\voting{#1}}}}
\newcommand{\mfcproposeNoFFG}[1]{\ensuremath{\mfcNoFFG{}^{\proposing{#1}}}}
\newtheorem{theorem}{Theorem}
\newtheorem{lemma}{Lemma}
\newtheorem{definition}{Definition}
\newcommand{\mfcproposeExec}[2]{\ensuremath{\specifyExec{#1}{\mfc}^{\proposing{#2}}}}
\crefname{condition}{condition}{conditions}
\newcommand{\VFFG}{\overset{\FFGExec}{\V}{}}
\newcommand{\VNoFFG}{\overset{\NoFFGExec}{\V}{}}
\newcommand{\makeNoFFG}[2][]{\specifyExec{\NoFFGExec}{\ifthenelse{\equal{#1}{}}{#2}{#2_{#1}}}{}}
\newcommand{\makeFFG}[2][]{\specifyExec{\FFGExec}{\ifthenelse{\equal{#1}{}}{#2}{#2_{#1}}}{}}
\NewDocumentCommand{\removeparens}{m}
{
  \tl_if_eq:eeTF {\tl_head:n {#1}} { ( } {\tl_if_eq:eeTF {\tl_range:nnn {#1} {-1} {-1}} { ) } {\tl_range:nnn {#1} {2} {-2}} {#1}} {#1}
}
\newcommand{\proposing}[1]{{\ensuremath{\mathsf{propose}(\removeparens{#1})}}}
\newcommand{\voting}[1]{{\ensuremath{\mathsf{vote}(\removeparens{#1})}}}
\newcommand{\fastconfirming}[1]{{\ensuremath{\mathsf{fconf}(\removeparens{#1})}}}
\newcommand{\merging}[1]{{\ensuremath{\mathsf{merge}(\removeparens{#1})}}}
\newcommand{\C}[0]{\ensuremath{\mathcal{C}}}
\newcommand{\T}[0]{\ensuremath{\mathcal{T}}}
\newcommand{\J}[0]{\ensuremath{\mathcal{J}}}
\newcommand{\calS}[0]{\ensuremath{\mathcal{S}}}
\newcommand{\GJ}[0]{\ensuremath{\mathcal{GJ}}}
\newcommand{\GJfrozen}[1][]{\ensuremath{\GJ^{\mathrm{frozen}\ifthenelse{\equal{#1}{}}{}{,#1}}}}
\newcommand{\GF}[0]{\ensuremath{\mathcal{GF}}}
\newcommand{\slot}[0]{\ensuremath{\operatorname{\mathrm{slot}}}}
\newcommand{\wop}[0]{w.o.p.\xspace}
\long\def\blockcomment#1\endblockcomment{}
\newcommand{\GAT}[0]{\ensuremath{\mathsf{GAT}}}
\newcommand{\GST}[0]{\ensuremath{\mathsf{GST}}}
\newcommand{\V}[0]{\ensuremath{\mathcal{V}}}
\newcommand{\Vglobal}[0]{\ensuremath{\V_\mathsf{G}}}
\newcommand{\Vfrozen}[1][]{\ensuremath{\mathcal{V}^{\mathrm{frozen}\ifthenelse{\equal{#1}{}}{}{,#1}}}}
\newcommand{\X}[0]{\ensuremath{\mathcal{X}}}
\newcommand{\M}[0]{\ensuremath{\mathcal{M}}}
\newcommand{\FC}[0]{\ensuremath{\mathsf{FC}}}
\newcommand{\negl}[0]{\ensuremath{\operatorname{negl}}}
\newcommand{\Tconf}[0]{\ensuremath{T_\mathsf{conf}}}
\newcommand{\Tafter}[0]{\ensuremath{T_\mathsf{sec}}}
\newcommand{\Tdyn}[0]{\ensuremath{T_\mathsf{dyn}}}
\newcommand{\treorg}[0]{\ensuremath{{t_\mathsf{reorg}}}}
\newcommand{\Treorg}[0]{\ensuremath{{T_\mathsf{reorg}}}}
\newcommand{\genesis}[0]{\ensuremath{B_\mathrm{genesis}}\xspace}
\newcommand{\LMDGHOST}[0]{\textsf{LMD-GHOST}\xspace}
\newcommand{\TOBSVD}[0]{\textsf{TOB-SVD}\xspace}
\newcommand{\protocol}[0]{\textsf{Majorum}\xspace}
\newcommand{\FIL}[0]{\textsf{FIL}\xspace}
\newcommand{\LOGbft}[2]{%
    \ifthenelse{\equal{#1}{}}{%
        \ensuremath{\mathsf{LOG}_{\mathrm{bft}}^{#2}}%
    }{%
        \ensuremath{\mathsf{LOG}_{\mathrm{bft},#1}^{#2}}%
    }%
}
\newcommand{\ld}[1]{%
    \ifthenelse{\equal{#1}{}}{%
        \ensuremath{\mathrm{L}^{(c)}}%
    }{%
        \ensuremath{\mathrm{L}^{(#1)}}%
    }%
}
\newcommand{\bprop}[1]{%
    \ifthenelse{\equal{#1}{}}{%
        \ensuremath{\Hat{b}}%
    }{%
        \ensuremath{\Hat{b}_{#1}}%
    }%
}
\begin{document}

\title{\protocol: Ebb-and-Flow Consensus with Dynamic Quorums}

\author{Francesco D'Amato\\
  Ethereum Foundation\\
  \url{francesco.damato@ethereum.org}
  \and
  Roberto Saltini\\
  Ethereum Foundation\\
  \url{roberto.saltini@ethereum.org}
  \and
  Thanh-Hai Tran\\
  Independent Researcher\\
  \url{thanhhai1302@gmail.com}
  \and
  Yann Vonlanthen\\
  Ethereum Foundation\\
  \url{yann.vonlanthen@ethereum.org}
  \and
  Luca Zanolini\\
  Ethereum Foundation\\
  \url{luca.zanolini@ethereum.org}
}
\date{}

\maketitle
\thispagestyle{plain}
\pagestyle{plain}

\begin{abstract}
Dynamic availability is the ability of a consensus protocol to remain live despite honest participants going offline and later rejoining. A well-known limitation is that dynamically available protocols, on their own, cannot provide strong safety guarantees during network partitions or extended asynchrony. Ebb-and-flow protocols [SP21] address this by combining a dynamically available protocol with a partially synchronous finality protocol that irrevocably finalizes a prefix.

We present \protocol, an ebb-and-flow construction whose dynamically available component builds on a quorum-based protocol (\TOBSVD{}). Under optimistic conditions, \protocol finalizes blocks in as few as three slots while requiring only a \emph{single} voting phase per slot. In particular, when conditions remain favourable, each slot finalizes the next block extending the previously finalized one.
\end{abstract}

\section{Introduction and Related Work}
\label{sec:introduction}

Traditional Byzantine consensus protocols such as PBFT~\cite{DBLP:conf/osdi/CastroL99} and HotStuff~\cite{DBLP:conf/podc/YinMRGA19} typically assume a fixed set of validators that remain online; in such settings, going offline is treated as a fault.

Some permissionless systems require \emph{dynamic participation}: honest validators may go offline and later rejoin. This notion was implicit in Bitcoin~\cite{nakamoto2008bitcoin} and formalized by Pass and Shi via the \emph{sleepy model}~\cite{sleepy}. We call a protocol \emph{dynamically available} if it preserves safety and liveness under this type of churn.

A limitation is that dynamically available protocols, on their own, cannot provide strong safety guarantees during network partitions or extended asynchrony while also maintaining liveness~\cite{DBLP:journals/corr/abs-2304-14701}. Ebb-and-flow protocols~\cite{DBLP:conf/sp/NeuTT21} address this by combining a dynamically available chain-growth protocol with a partially synchronous finality layer that irrevocably finalizes a prefix. Ethereum follows this pattern: \LMDGHOST~\cite{zamfir} provides dynamically available chain growth, and Casper FFG~\cite{casper} finalizes a prefix (as in Gasper~\cite{gasper}).

Momose and Ren~\cite{DBLP:conf/ccs/Momose022} initiated the study of \emph{deterministically safe} dynamically available consensus via quorum techniques adapted to dynamic participation. This line of work led to multiple quorum-based dynamically available protocols with different trade-offs~\cite{DBLP:journals/iacr/MalkhiMR22,DBLP:conf/ccs/MalkhiM023,DBLP:conf/wdag/GafniL23,DBLP:conf/podc/DAmatoLZ24}.

In particular, D'Amato \emph{et al.}~\cite{streamliningSBFT} introduced \TOBSVD{}, a majority-quorum-based protocol targeting practicality at scale. \TOBSVD{} tolerates up to $50\%$ Byzantine faults and can decide with a single voting phase in the best case, improving over earlier multi-phase designs.

In this work we present \protocol, an ebb-and-flow protocol whose dynamically available component builds on \TOBSVD{}. Under optimistic conditions, \protocol achieves three-slot finality while requiring only a single voting round per slot. Relative to the only ebb-and-flow construction we are aware of that is deployed in production -- namely, Ethereum’s -- this reduces optimistic time to finality from 64 slots to 3. We further introduce, to our knowledge for the first time, a principled integration of two complementary confirmation rules within an ebb-and-flow protocol: traditional $\kappa$-deep confirmation and \emph{fast confirmation}.

\subsection{Technical Overview}

In large-scale consensus systems, each voting round is costly. With hundreds of thousands of validators, votes are typically aggregated and re-broadcast before they can influence the global view; consequently, even a single \emph{logical} voting phase can incur an end-to-end delay closer to $2\Delta$ than $\Delta$. Ethereum illustrates this effect: attestations are aggregated and disseminated before contributing to fork choice, yielding an effective $2\Delta$ latency per voting round. This motivates our first design goal: a dynamically available protocol that, under synchrony, makes progress with as few voting phases as possible.

Our second goal is to obtain the guarantees characteristic of ebb-and-flow protocols~\cite{ebbandflow}. Under synchrony (even with reduced participation), the system should remain live and produce a synchronously safe chain. Under a network partition or extended asynchrony, a \emph{finalized} prefix should remain forever safe, while only the non-finalized suffix may be reverted. Achieving both goals calls for a dynamically available protocol with an efficient voting structure that can be paired cleanly with a partially synchronous finality gadget.

For these reasons, we take \TOBSVD{} as our starting point. \TOBSVD{} is a dynamically available consensus protocol based on majority quorums; in the synchronous model, it achieves optimal Byzantine resilience within this design space and, in the best case, requires only a single voting phase per decision. In the remainder of this overview, we describe the modifications we apply to \TOBSVD{} and how we pair the resulting dynamically available protocol with a partially synchronous finality layer whose purpose is to \emph{finalize} -- that is, to provide asynchronous safety to -- a prefix of the produced chain, yielding \protocol.

The appeal of \protocol, and of ebb-and-flow protocols more generally, is that they can remain live under synchrony even in low participation regimes -- for instance, when a large fraction of validators are temporarily offline. Ethereum is a noteworthy deployed example of this design pattern: block production can continue under synchrony even when participation drops (although finality may stall). Moreover, Ethereum’s incentive structure encourages validators to stay online and participate promptly; nevertheless, dynamic participation remains practically relevant, for example due to crash failures caused by consensus-client bugs~\cite{Nijkerk2023LossOfFinality}.

\TOBSVD{} proceeds in \emph{slots}: at the start of slot~$t$, an elected proposer broadcasts a block (propose phase); after~$\Delta$, validators cast votes (vote phase); and after another $\Delta$, the protocol decides according to the majority of observed votes (decide phase). In the original protocol, proposals made in slot~$t$ are decided during the decide phase of slot~$t+1$. All safety guarantees are fully deterministic; no probabilistic arguments are used. Moreover, in \TOBSVD{}, proposals made by honest proposers will always be voted on by all awake honest validators.

A key insight is that the deterministic decision phase does not influence validator behavior; rather, it acts as an explicit confirmation mechanism. 
We therefore remove the decision phase and confirm chains directly during the vote phase via a \emph{$\kappa$-deep confirmation rule}, where $\kappa$ is chosen so that, within $\kappa$
slots, an honest proposer is elected with overwhelming probability, obtaining a variant that provides \emph{probabilistic} rather than deterministic safety (Section~\ref{sec:revisiting-tob}). However, this modification alone confirms only those chains whose proposals lie at least $\kappa$ slots in the past. To address this limitation, we introduce an optimistic mechanism for \emph{fast confirmations} (Section~\ref{sec:prob-tob}). Under favorable conditions, a chain can be confirmed in the very slot in which it is proposed: concretely, if all validators are awake and more than $\tfrac{2}{3}n$ of them vote for the same chain, that chain is \emph{fast-confirmed} and may be immediately adopted in place of the $\kappa$-deep prefix.

This new protocol constitutes the dynamically available component of the ebb-and-flow construction \protocol: once a block is confirmed in the dynamically available layer, this confirmation can be fed into the partially synchronous layer as a candidate for finalization.

Integrating these two layers is straightforward. We extend the voting message format of \TOBSVD{} by adding an additional component that closely mirrors the mechanism of Casper FFG -- the finality gadget used in Ethereum’s consensus protocol. Concretely, the voting phase of slot~$t$ serves two purposes: (1) voting for the proposal made in slot~$t$, and (2) contributing to the finalization of a previously confirmed chain within \TOBSVD{}.

Under this design, a proposal made by an honest proposer in slot~$t$ is optimistically fast-confirmed during the same slot. It then becomes the \emph{target} of the first (of two) voting rounds in slot~$t+1$. If it gathers more than $\tfrac{2}{3}n$ votes in this first round, we say that it is \emph{justified}. Finally, it becomes finalized in slot~$t+2$, during the second round of voting. Figure~\ref{fig:overview} illustrates the progression of the combined protocol. Although this represents roughly a 3$\times$ increase over same-slot finality, it avoids the need for multiple voting rounds within a single slot -- an approach that, at scale, would necessitate longer slot durations -- while retaining a single voting round per slot and thereby simplifying the protocol’s timing assumptions and overall design.

The remainder of this paper is organized as follows. Section~\ref{sec:model} presents the model and background. Section~\ref{sec:da-component} introduces the dynamically available component, reviewing \TOBSVD{} (Section~\ref{sec:recalling-tob}) and extending it to probabilistic safety and fast confirmations (Section~\ref{sec:revisiting-tob}; proofs in Appendix~\ref{appendix:analysis-prob-ga-fast}). Section~\ref{sec:ffg} presents the finality gadget (proofs in Appendix~\ref{appendix:proofs}). Section~\ref{sec:ga-based} gives the full ebb-and-flow protocol and its security analysis (Appendix~\ref{appendix:analysis-3sf}). We conclude in Section~\ref{sec:conclusion}.

\section{Model and Preliminary Notions}
\label{sec:model}

\subsection{System Model}

\noindent\textbf{Validators.}  We consider a set of $n$ \emph{validators} $v_1, \dots, v_n$, each with a unique cryptographic identity and known public key. Validators execute a common protocol. An adaptive probabilistic poly-time adversary $\mathcal{A}$ may corrupt up to $f$ validators, gaining access to their state. Once corrupted, a validator is \emph{adversarial}; otherwise it is \emph{honest}. Honest validators follow the protocol, while adversarial ones behave arbitrarily. All validators have equal \emph{stake}, and a proven misbehaving validator is \emph{slashed} (loses part of its stake). 

\noindent\textbf{Network.}  We assume partial synchrony: validators have synchronized clocks and message delays are initially unbounded, but after an unknown \emph{global stabilization time} (\GST) delays are bounded by $\Delta$. 
  
\noindent\textbf{Time.}  Time is divided into discrete \emph{rounds}. A \emph{slot} consists of $4\Delta$ rounds. For round~$r$, its slot is $\slot(r) := \lfloor r / 4\Delta \rfloor$.  

\noindent\textbf{Proposer Election Mechanism.}  In each slot $t$, a validator $v_p$ is chosen as \emph{proposer}, denoted $v_p^t$. If $t$ is clear, we omit the superscript. The proposer election mechanism must satisfy: \emph{Uniqueness} (exactly one proposer per slot), \emph{Unpredictability} (identity hidden until revealed), and \emph{Fairness} (each validator elected with probability $1/n$).

\noindent\textbf{Transactions Pool.}  We assume the existence of an ever growing set of transaction, called \emph{transaction pool} and denoted by $\txpool$, that any every validator has read access to. Consistently with previous notation, we use $\txpool^r$ to refer to the content of $\txpool$ at time~$r$.  

\noindent\textbf{Blocks and Chains.}  A \emph{block} is $B = (b, p)$, where $b$ is the block body (a batch of transactions) and $p \geq 0$ is the slot when $B$ is proposed. The \emph{genesis block} is $\genesis = (b_{-1}, -1)$ and has no parent. Each block references its parent, and a child must have a larger slot. A block induces a \emph{chain}, the path to genesis. Chains are identified with their tip blocks. We write $\chain_1 \prec \chain_2$ if $\chain_1$ is a strict prefix of $\chain_2$, and say that two chains \emph{conflict} if neither extends the other.  

The $\kappa$-deep prefix of chain $\chain$ at slot $t$, denoted $\chain^{\lceil \kappa, t}$, is the longest prefix $\chain'$ with $\chain'.p \leq t - \kappa$. When $t$ is clear, we write $\chain^{\lceil \kappa}$. We define $\chain_1 \leq \chain_2$ if $\chain_1.p \leq \chain_2.p$, with equality only if the chains are identical. We call $\chain.p$ the \emph{height} (or \emph{length}) of $\chain$. A transaction $\tx$ belongs to $\chain$ if it appears in a block of $\chain$, written $\tx \in \chain$. Similarly, $\mathit{TX} \subseteq \chain$ means all transactions in $\mathit{TX}$ are included in $\chain$.  

\noindent\textbf{Proposing and Voting.} Validators may \emph{propose} chains, generalizing block proposals. Proposals occur in the first round of each slot. For readability, define $\proposing{t} := 4\Delta t$. Each slot also has a voting round, where validators \emph{vote} on proposed chains. Define $\voting{t} := 4\Delta t + \Delta$.

\noindent\textbf{Gossip.} We assume a best-effort gossip primitive that eventually delivers messages to all validators. Honest-to-honest messages are always delivered and cannot be forged. Once an adversarial message is received by an honest validator $v_i$, it is also gossiped by $v_i$. Votes and blocks are always gossiped, whether received directly or inside another message. A validator receiving a vote also gossips the corresponding block, and receiving a proposal gossips its blocks and votes. A proposal for slot $t$ is gossiped only during the first $\Delta$ rounds of slot~$t$.  
 
\noindent\textbf{Views.} A \emph{view} at round $r$, denoted $\V^r$, is the set of all messages a validator has received by round $r$. We write $\V_i^r$ for the view of validator $v_i$.  
 
\noindent\textbf{Sleepiness.} In each round $r$, the adversary may declare which honest validators are \emph{awake} or \emph{asleep}. Asleep validators pause protocol execution; messages are queued until they wake. Upon waking at round $r$, a validator must complete a \emph{joining protocol} before becoming \emph{active}~\cite{goldfish} (see the paragraph below). Adversarial validators are always awake. Thus, \emph{awake}, \emph{asleep}, and \emph{active} apply only to honest validators. The sleep schedule is adversarially adaptive. In the sleepy model~\cite{sleepy}, awake and active coincide. Eventually, at an unknown \emph{global awake time} (\GAT), all validators stay awake forever.   

Let $H_r$, $A_r$, and $H_{r,r'}$ denote, respectively, the set of active validators at round $r$, the set of adversarial validators at round $r$, and the set of validators that are active \emph{at some point} in rounds $[r,r']$, i.e., 
$H_{r,r'} = \bigcup_{i=r}^{r'} H_i, \text{with } H_i \coloneqq \emptyset \text{ if } i < 0.$
We let $A_\infty = \lim_{t \to \infty} A_{\voting{t}}$ and note that $f = |A_\infty|$.  
Unless otherwise stated, we assume throughout that $f < \frac{n}{3}$.  For every slot $t$ after \GST, we require

\begin{align}
  &|H_{\voting{(t-1)}} \setminus A_{\voting{t}}| > |A_{\voting{t}}|
  \label{eq:pi-sleepiness}
\end{align}

In words, the number of validators active at round $\voting{(t-1)}$ and not corrupted by $\voting{t}$ must exceed the sum of adversarial validators at $\voting{t}$.

We impose this restriction to limit the number of honest validators the adversary can corrupt after the voting phase of slot~$t-1$ but before those votes are tallied in later slots. Without such a bound, the adversary could corrupt \emph{too many} honest validators immediately after the voting phase and force them to equivocate. 

The main difference from the original sleepy model is that our constraints apply only to \emph{active} validators, i.e., those that have completed a joining protocol and therefore have been awake for at least $\Delta$ rounds before \(\voting{t}\). This reflects the \(2\Delta\) stabilization period assumed by \TOBSVD{} for security; we refer to the original paper for a formal treatment~\cite{streamliningSBFT}. In practice, a validator is unlikely to enjoy buffered and instantaneous delivery of messages as soon as it wakes up: it must request recovery and wait for peers to return any decision-relevant messages it missed. We treat a validator as active only after these replies arrive, which takes at least \(2\Delta\) (one \(\Delta\) to send, one \(\Delta\) to receive) and may be much longer when substantial state must be fetched. Hence, an explicit \(2\Delta\) stabilization window is typically negligible relative to practical recovery times.


We call the adversary model defined above the \emph{generalized partially synchronous sleepy model}.  
When the context is clear, we refer to it simply as the \emph{sleepy model}.  
An execution in this model is said to be \emph{compliant} if and only if it satisfies the sleepiness Condition~\eqref{eq:pi-sleepiness}.

\noindent\textbf{Joining Protocol.}  
An honest validator waking at round $r$ must first run the \emph{joining protocol}. If $v_i$ wakes in round $r \in (\voting{(t-2)}+\Delta,\, \voting{(t-1)}+\Delta]$, it receives all queued messages and resumes execution, but sends nothing until round $\voting{t}$, when it becomes active (unless corrupted or put back to sleep). If $v_i$ is elected leader for slot $t$ but is inactive at proposal time, it does not propose in that slot. 

\subsection{Security}
\label{sec:security}

We treat $\lambda$ and $\kappa$ as security parameters for cryptographic primitives and the protocol, respectively, assuming $\kappa > 1$.  
We consider a finite time horizon $\Tconf$, polynomial in $\kappa$.  
An event occurs \emph{with overwhelming probability} (\wop) if it fails with probability $\negl(\kappa) + \negl(\lambda)$.  
Cryptographic guarantees hold with probability $\negl(\lambda)$, though we omit this explicitly in later sections.  

\begin{definition}[Safe protocol]
  \label{def:safety}
 We say that a protocol outputting a confirmed chain $\Chain$
 is \emph{safe} after time $\Tafter$ in executions $\mathcal{E}$, if and only if for any execution in $\mathcal{E}$, two rounds $r, r' \geq \Tafter$, and any two honest validators $v_i$ and $v_j$ (possibly $i=j$) at rounds $r$ and $r'$ respectively, either $\Chain_i^r \preceq \Chain_{j}^{r'}$ or $\Chain_j^{r'} \preceq \Chain_i^r$. If $\Tafter = 0$ and $\mathcal{E}$ includes all partially synchronous executions, then we say that a protocol is \emph{always safe}.
\end{definition}

\begin{definition}[Live protocol]
  \label{def:liveness}
 {We say that a protocol outputting a confirmed chain $\Chain$ is \emph{live} after time $\Tafter$ in executions $\mathcal{E}$, and has confirmation time $\Tconf$, if and only if for any execution in $\mathcal{E}$, any rounds $r \geq \Tafter$ and $r_i \geq r+\Tconf$, any transaction $\tx$ in the transaction pool at time $r$, and any validator $v_i$ active in round $r_i$,  $\tx \in \Chain^{r_i}_i$.} If $\Tafter = 0$ and $\mathcal{E}$ includes all partially synchronous executions, then we say that a protocol is \emph{always live}.
\end{definition}

\begin{definition}[Secure protocol~\cite{goldfish}]
 \label{def:security}
We say that a protocol outputting a confirmed chain $\Chain$ is \emph{secure} after time $\Tafter$, and has confirmation time $\Tconf$, if it is safe after time $\Tafter$ and live after time $\Tafter$ with confirmation time $\Tconf$. A protocol is always secure if it always both safe and live
\end{definition}

We now recall the notions of \emph{Dynamic Availability} and \emph{Reorg Resilience} from~\cite{rlmd}.  

\begin{definition}[Dynamic Availability]
 \label{def:dyn-ava}
A protocol is \emph{dynamically available} after time $\Tdyn$ if and only if it is secure after $\Tdyn$ with confirmation time $\Tconf = O(\kappa)$. We say a protocol is \emph{dynamically available} if it always secure when $\GST = 0$.  
\end{definition}

Note that Dynamic Availability presupposes a certain level of adversarial resilience.  
In the protocol of Section~\ref{sec:ga-based}, Dynamic Availability is attained provided Constraint~\eqref{eq:pi-sleepiness} holds.  

\begin{definition}[Reorg Resilience]
  \label{def:reorg-resilience}
A protocol is \emph{reorg-resilient} after slot $t_{\mathsf{reorg}}$ and time $T_{\mathsf{reorg}}$\footnote{In this work we always have $\Treorg \geq 4\Delta \treorg$.} in executions $\mathcal{E}$ if, for any $\mathcal{E}$-execution, round $r \geq T_{\mathsf{reorg}}$, and validator $v_i$ honest in $r$, any chain proposed in slot $t \geq t_{\mathsf{reorg}}$ by a validator honest in $\proposing{t}$ does not conflict with $\Chain^r_i$. We say a protocol is \emph{reorg-resilient} if this holds after slot $0$ and time $0$.  
\end{definition}

\noindent\textbf{Accountable Safety.}  
In addition to Dynamic Availability, we require accountability in the event of safety violations.  

\begin{definition}[Accountable Safety]
    \label{def:acc-safety}
A protocol has \emph{Accountable Safety} with resilience \(f^{\mathrm{acc}}>0\) if any safety violation admits a cryptographic proof -- constructed from the message transcript -- that identifies at least \(f^{\mathrm{acc}}\) misbehaving validators, while never falsely accusing an honest protocol-following participant.
\end{definition}

By the CAP theorem~\cite{DBLP:journals/corr/abs-2304-14701}, no consensus protocol can ensure both liveness with dynamic participation and safety under partitions.
Thus, Neu, Tas, and Tse~\cite{DBLP:conf/sp/NeuTT21} proposed \emph{ebb-and-flow} protocols, which output two chains: one live under synchrony and dynamic participation, and one safe under asynchrony. 

\begin{definition}[Secure ebb-and-flow protocol]
\sloppy{A secure \emph{ebb-and-flow protocol} outputs an available chain $\chainava$ that is dynamically available\footnote{When we say that a chain produced by protocol $\Pi$ satisfies property $P$, we mean that protocol $\Pi$ itself satisfies $P$.}, and a finalized chain $\chainfin$ that is always safe and live after $\max(\GST,\GAT){+O(\Delta)}$ with $\Tconf = O(\kappa)$. }
Hence, $\chainfin$ is secure after $\max(\GST,\GAT){+O(\Delta)}$ with $\Tconf = O(\kappa)$.  
Moreover, for every honest validator $v_i$ and round $r$, $\chainfin_i^r$ is a prefix of $\chainava_i^r$.
\end{definition}

Our protocol follows the ebb-and-flow approach~\cite{ebbandflow}: $\chainava$ comes from the dynamically available component (Sec.\ref{sec:da-component}), while $\chainfin$ is finalized by a PBFT-style protocol akin to Casper FFG\cite{casper} (Sec.~\ref{sec:ffg}).\\

\noindent\textbf{Fork-Choice Function.}  
A \emph{fork-choice function} $\FC$ is deterministic: given views, a chain, and a slot $t$, it outputs a chain $\chain$. We use the \emph{majority fork-choice function} (\Cref{alg:mfc}), which selects the longest chain supported by a majority of the voting weight.  



\section{The Dynamically Available Component}
\label{sec:da-component}

We begin by presenting the dynamically available component of our protocol.  
In \Cref{sec:recalling-tob}, we recall \TOBSVD{}, summarizing its operation and guarantees.  
In \Cref{sec:revisiting-tob}, we describe our modifications that yield a probabilistically safe, dynamically available consensus protocol.  
We then introduce the notion of fast confirmations and prove its Dynamic Availability (Appendix~\ref{appendix:analysis-prob-ga-fast}), which enables integration into the ebb-and-flow protocol presented in \Cref{sec:ga-based} and fully detailed in \Cref{appendix:analysis-3sf}.  

\subsection{Recalling \TOBSVD}
\label{sec:recalling-tob}

We recall the deterministically safe, dynamically available consensus protocol of D'Amato \emph{et al.}~\cite{streamliningSBFT}.  
We refer the reader to the original work for full technical details, and summarize here its operation, guarantees, and the aspects relevant to our modifications toward a probabilistically safe variant.  
After recalling the original protocol, we show how to introduce fast confirmations and prove Dynamic Availability, which enables its use in our ebb-and-flow protocol.

{\TOBSVD}~\cite{streamliningSBFT} proceeds as follows. During each slot~$t$, a proposal (\(B\)) is made in the first round through a [\textsc{propose}, $B$, $t$, $v_i$] message, and a decision is taken in the third round. During the second round, every active validator \(v_i\) casts a \([\textsc{vote}, \chain, \cdot, t, v_i]\) message for a chain \(\chain\) (The third component in the vote message is omitted here; it will be introduced later when describing the ebb-and-flow protocol.) Importantly, proposals made in
slot~$t$ are decided during the decide phase of slot~$t+1$. All safety
guarantees are fully deterministic.

Let $\V$ be a validator’s view at slot $t$. We define:
\[
\V^{\chain,t} := 
\big\{[\textsc{vote}, \chain', \cdot, \cdot, v_k] \in \V' : 
\V' = \FIL_{\text{lmd}}(\FIL_{1\text{-exp}}(\FIL_{\text{eq}}(\V),t)) 
\ \land\  \chain \preceq \chain'\big\},
\]
where $\FIL_{\text{eq}}$ removes equivocations, $\FIL_{1\text{-exp}}$ discards \emph{expired} votes (i.e., votes older than one slot, which are no longer considered), and $\FIL_{\text{lmd}}$ retains only the latest vote from each validator.

Thus, $\V^{\chain,t}$ is the set of latest, non-expired, non-equivocating votes supporting chains that extend $\chain$.  

The protocol employs a \emph{majority fork-choice function} $\mfc$ (defined formally in \Cref{alg:mfc}): Given views $\V$ and $\V'$, a base chain $\chain^C$, and slot $t$, it returns the longest $\chain \succeq \chain^C$ such that
$|\V^{\chain,t} \cap (\V')^{\chain,t}| > \tfrac{1}{2}\,|\mathsf{S}(\V',t)|,$
where $\mathsf{S}(\V',t)$ is the set of validators appearing with non-expired votes in $\V'$ at slot $t$.  
If no such chain exists, $\mfc$ returns $\chain^C$.

\begin{algo}[h!]
  \caption{$\mfc$, the majority fork-choice function}
  \label{alg:mfc}
  \vbox{
  \small
  \begin{numbertabbing}\reset
    xxxx\=xxxx\=xxxx\=xxxx\=xxxx\=xxxx\=MMMMMMMMMMMMMMMMMMM\=\kill     
  \textbf{function} $\mfc(\V,\V', \chain^C, t)$ \label{}\\
  \> \textbf{return} the longest chain $\chain \succeq \chain^C$ such that\label{}\\
  \>\> $\chain = \chain^C \lor \left|\V^{\chain,t}\cap (\V')^{\chain,t} \right|> \frac{\left|\mathsf{S}(\V',t)\right|}{2}$\label{}\\
  \textbf{function} $\mathsf{S}(\V,t)$\label{}\\
      \> \textbf{return} $ \{v_k: [\textsc{vote}, \cdot, \cdot, \cdot, v_k] \in \V' \land {\V'} = \mathsf{FIL}_{1\text{-exp}}(\V,t)\}$ \label{}\\[-5ex]     
  \end{numbertabbing}
  }
  \end{algo}

The protocol executed by validator $v_i$ proceeds in four phases:
\begin{enumerate}
    \item \textbf{Propose ($4\Delta t$):}  
    If $v_p$ is the proposer for slot $t$, it broadcasts 
$      [\textsc{propose}, \chain_p, t, v_p]
$    through gossip, where $\chain_p \succeq \mfc(\V_p,\V_p,\genesis,t)$ and $\chain_p.p = t$.  
    In practice, $\mfc(\V_i,\V_i,\genesis,t)$ is the parent of $\chain_p$, but we allow the more general case where $\chain_p$ is any extension of this parent with $\chain_p.p = t$.  

    \item \textbf{Vote ($4\Delta t + \Delta$):}  
    Each validator computes
$      \chaincanmfc := \mfc(\V_i^\mathrm{frozen},\V_i,\genesis,t),
$    then gossips a vote message 
     $ [\textsc{vote}, \chain', \cdot, t, v_i],$
    where $\chain'$ is a chain proposed in slot $t$ extending $\chaincanmfc$ with $\chain'.p = t$, or $\chain' = \chaincanmfc$ if no valid proposal exists.  

    \item \textbf{Decide ($4\Delta t + 2\Delta$):}  
    Set 
      $\Chain_i = \mfc(\V''_i,\V_i,\genesis,t), $
    update $\V''_i = \V_i$, and store $\V''_i$.  

    \item \textbf{Freeze ($4\Delta t + 3\Delta$):}  
    Set $\Vfrozen_i = \V_i$ and store it for the next slot.  
\end{enumerate}

\subsection{Probabilistically-Safe Variant} 
\label{sec:revisiting-tob}

A probabilistically safe variant of the protocol just presented is obtained by observing that decisions do not influence validator behavior: they serve only as a confirmation mechanism for deterministic safety.  
In slot $t$, the decision phase stores $\V''$ for use in the next slot but otherwise has no effect.  
Removing this phase simplifies the protocol while preserving probabilistic safety. The main difference -- aside from removing the decide phase and thus reducing 
the slot duration from $4\Delta$ to $3\Delta$ -- is the introduction of a new 
probabilistic confirmation rule. During the vote phase, the protocol designates 
as \emph{confirmed} the $\kappa$-deep prefix of the chain output by $\mfc$. 
The parameter $\kappa$ is chosen so that, within $\kappa$ slots, an honest 
proposer is elected with overwhelming probability, ensuring that consensus 
reliably progresses and that this prefix will not later be reverted. 

\begin{enumerate}
    \item \textbf{Propose ($3\Delta t$):}  
    If $v_i$ is proposer for slot $t$, it gossips  
      $[\textsc{propose}, \chain_p, t, v_i],$
    with $\chain_p \succeq \mfc(\V_i,\V_i,\genesis,t)$ and $\chain_p.p = t$.  

    \item \textbf{Vote ($3\Delta t + \Delta$):}  
    Compute 
      $\chaincanmfc := \mfc(\V_i^\mathrm{frozen},\V_i,\genesis,t),$
    then gossip a vote message $[\textsc{vote}, \chain', \cdot, t, v_i]$,  
    where $\chain'$ is a chain proposed in slot $t$ extending $\chaincanmfc$ with $\chain'.p = t$,  
    or $\chain' = \chaincanmfc$ if no valid proposal exists.  
    Set the confirmed chain as the $\kappa$-deep prefix 
     $ \Chain_i = \left(\chaincanmfc\right)^{\lceil \kappa}$.
    \item \textbf{Freeze ($3\Delta t + 2\Delta$):}  
    Set $\Vfrozen_i = \V_i$, store it, and advance to slot $t+1$.  
\end{enumerate}

Under synchrony, Constraint~\eqref{eq:pi-sleepiness} (adapted to slots of duration $3\Delta$) holds.  
With an honest proposer in slot $t$, the proposed chain $\chain_p$ extends the lock (\ie, $\chain^{3\Delta t + \Delta}_i$), ensuring all honest validators vote for $\chain_p$.  
This alignment of votes yields Reorg Resilience, which in turn provides probabilistic safety under the $\kappa$-deep rule.  
The details are shown in Appendix, in the proof of the final protocol variant.  

\subsection{Probabilistically-Safe Variant with Fast Confirmations} 
\label{sec:prob-tob}

The protocol described above confirms only chains proposed at least $\kappa$ slots in the past.
We now extend it with a mechanism for \emph{fast confirmations}: under certain favorable conditions, a chain can be confirmed in the very slot in which it was proposed.
Concretely, if all validators are awake and more than $\tfrac{2}{3}n$ of them vote for the same chain, that chain is \emph{fast-confirmed} and may be adopted immediately in place of the $\kappa$-deep prefix. 

The advantage of this confirmation rule, compared to the original one, is that 
chains proposed in slot~$t$ can now be confirmed already in slot~$t$, rather 
than waiting until slot~$t+1$ as in the original design. The drawback is that, 
unlike the original rule, this fast confirmation requires full participation in 
the vote phase. However, since the protocol still retains the $\kappa$-deep 
confirmation mechanism, lack of full participation simply causes the protocol 
to fall back to confirming the $\kappa$-deep prefix of the chain output by 
$\mfc$.

This capability (formalized in \Cref{algo:prob-ga-fast}) serves as a building block for our ebb-and-flow protocol.  

Observe that $\kappa$-deep confirmation would suffice, but it yields poor latency, whereas fast confirmation would suffice but makes no progress under poor participation.\\

\noindent\textbf{The Protocol.} \Cref{algo:prob-ga-fast} defines 
$\texttt{fastconfirmsimple}(\V,t) \mapsto (\chain^C, Q^C),$
where $\chain^C$ is a chain supported by more than $\tfrac{2}{3}n$ validators voting in slot $t$ according to view $\V$, and $Q^C$ is a \emph{quorum certificate} consisting of the corresponding \textsc{vote} messages.  
If no such chain exists, it returns $(\genesis,\emptyset)$.  

The protocol also relies on the function $\texttt{Extend}(\chain,t)$: 
If $\chain.p < t$, then $\texttt{Extend}(\chain,t)$ produces a chain 
$\chain'$ of length~$t$ that extends $\chain$ and includes all transactions 
in $\txpool^\proposing{t}$. 
If instead $\chain.p \geq t$, the behavior of the function is left unspecified.

Each honest validator $v_i$ maintains three local objects. 
First, the \emph{message set} $\Vfrozen_i$, which is the snapshot of $\V_i$ 
taken at time $3\Delta$ in each slot. 
Second, the \emph{chain variable} $\chainfrozen_i$, defined as the greatest 
fast-confirmed chain at time $3\Delta$ in slot $t$; this variable is updated 
between $0$ and $\Delta$ of slot $t+1$ using the \textsc{propose} message for slot $t+1$. 
Finally, the \emph{confirmed chain} $\Chain_i$, which is updated at time $\Delta$ 
in each slot and may be updated again at $2\Delta$.

For notation, if $\mathsf{var}$ is a state variable, we write $\mathsf{var}^r$ for its value after round~$r$, or $\Vfrozen[r]_i$ if it already has a superscript.  For readability, we abbreviate $\fastconfirming{t} := 4\Delta t + 2\Delta$ and $\merging{t} := 4\Delta t + 3\Delta$.  

\Cref{algo:prob-ga-fast} proceeds as follows:
\begin{enumerate}
  \item \textbf{Propose ($4\Delta t$):} \sloppy{The proposer $v_p$ broadcasts  
  $[\textsc{propose}, \chain_p, \chain^C, Q^C, \cdot, t, v_p]$,  
  where $\chain_p = \texttt{Extend}(\chaincanmfc, t)$ and $\chaincanmfc = \mfc(\V_p,\V_p,\genesis,t)$. }
  The tuple $(\chain^C, Q^C)$ is computed via $\texttt{fastconfirmsimple}(\V_p,t)$.  

  \item \textbf{Vote ($4\Delta t + \Delta$):} Upon receiving a valid proposal (with $Q^C$ certifying $\chain^C$ and $\chain_p.p = t$), validator $v_i$ updates $\chainfrozen_i$.  
  It then computes $\mfc$ using: (i) $\Vfrozen[\voting{t}]_i$, (ii) $\V^\voting{t}_i$, and (iii) $\chainfrozen_i$.  
  By construction, the voting-phase output is always a prefix of the proposer’s chain (the \emph{Graded Delivery} property~\cite{streamliningSBFT}).  
  If the proposal extends this chain, $v_i$ votes for it; otherwise it votes for its local fork-choice.  
  It also updates $\Chain_i$ to the longer of: (i) the $\kappa$-deep prefix of its $\mfc$ output, and (ii) the previous $\Chain_i$, unless the latter is not a prefix of the former.  

  \item \textbf{Fast Confirm ($\fastconfirming{t}$):} 
  If at least $2/3n$ validators vote for a chain $\chain^\mathrm{cand}$, then $v_i$ fast confirms it by setting $\Chain_i = \chain^\mathrm{cand}$.  

  \item \textbf{Merge ($\merging{t}$):}  
  At round $4\Delta t + 3\Delta$, $v_i$ updates $\Vfrozen_i$ and $\chainfrozen_i$ for use in slot $t+1$.  
\end{enumerate}

\begin{algo}[th!]
  \caption{Probabilistically-safe variant of \TOBSVD{} with fast confirmations - code for $v_i$}
  \label{algo:prob-ga-fast}
  \begin{numbertabbing}\reset
  xxxx\=xxxx\=xxxx\=xxxx\=xxxx\=xxxx\=MMMMMMMMMMMMMMMMMMM\=\kill
    {\textbf{Output}} \label{}\\
    \>{$\Chain_i \gets \genesis$: confirmed chain of validator $v_i$}\label{}\\
    \textbf{State} \label{}\\
    \> $\V_i^\text{frozen}  \gets \{\genesis\}$: snapshot of $\V$ at time $4\Delta t + 3\Delta$ \label{}\\
    \> $\chain^\text{frozen}_i \gets \genesis $: snapshot of the fast-confirmed chain at time $4\Delta t + 3\Delta$ \label{} \\
    \textbf{function} $\texttt{fastconfirmsimple}(\V,t)$\label{}\\
    \> \textbf{let} $\fastcands :=\{\chain \colon |\{v_j\colon \exists \chain' \succeq \chain : \ [\textsc{vote}, \chain', \cdot, t,\cdot] \in \V \}| \geq \frac{2}{3}n\}$\label{}\\
    \> \textbf{if} $\fastcands \neq \emptyset$ \textbf{ then}\label{}\\
    \>\>\textbf{let} $\fastcand := \max\left(\fastcands\right)$\label{}\\
    \>\> \textbf{let} $Q := \{[\textsc{vote},\chain',\cdot,t,\cdot]\in \V : \chain' \succeq \fastcand \}$\label{}\\
    \>\> \textbf{return} $(\fastcand,Q)$ \label{}\\
    \> \textbf{else}\label{}\\
    \>\> \textbf{return} $(\genesis,\emptyset)$\label{}\\    
    \textsc{Propose}\\
    \textbf{at round} $4\Delta t$ \textbf{do} \label{}\\
    \> \textbf{if} $v_i = v_p^t$ \textbf{then} \label{}\\
    \>\> \textbf{let} $ (\chain^C,Q^C) := \texttt{fastconfirmsimple}(\V_i,t-1)$\label{}\\
    \>\> \textbf{let} $\chaincanmfc := \mfc(\V_i, \V_i, \chain^C, t)$ \label{}\\
    \>\> \textbf{let} $ \chain_p := {\mathsf{Extend}(\chaincanmfc,t)}$ \label{line:algga-no-ffg-new-block}\\
    \>\> send message $[\textsc{propose}, \chain_p, \chain^C, Q^C, \cdot, t, v_i]$ through gossip \label{}\\
    \textsc{Vote}\\
    \textbf{at round} $4\Delta t + \Delta$ \textbf{do} \label{}\\
    \> \textbf{let} $\chaincanmfc := \mfc(\Vfrozen_i, \V_i, \chain^\text{frozen}_i, t)$ \label{}\\
    \> $\Chain_i \gets \max(\{\chain \in \{\Chain_i,{(\chaincanmfc)^{\lceil\kappa}}\}: \chain \preceq {\chaincanmfc}\})$\label{line:algga-no-fin-vote-chainava}\\
    \> \textbf{let} $ \chain := $ \textsc{propose}d chain {from slot $t$} extending ${\chaincanmfc}$ and \label{}\\
    \>\> with $\chain.p =t$, if there is one, or ${\chaincanmfc}$ otherwise\label{line:algga-no-fin-vote-comm}\\
    \>  send message $[\textsc{vote}, \chain, \cdot, t, v_i]$ through gossip \label{line:algga-no-fin-vote}\\
    \textsc{Fast Confirm}\\
    \textbf{at round} $4\Delta t + 2\Delta$ \textbf{do} \label{line:algga-no-ffg-on-confirm}\\
    \> \textbf{let} $ (\fastcand,Q) := \texttt{fastconfirmsimple}(\V_i,t)$\label{}\\
    \> \textbf{if} $Q \neq \emptyset$ \textbf{then}\label{line:algga-no-ffg-if-set-chaava-to-bcand}\\
    \>\> $\Chain_i \gets \fastcand$\label{line:algga-no-ffg-set-chaava-to-bcand}\\
    \textsc{Merge}\\
    \textbf{at round} $4\Delta t + 3\Delta$ \textbf{do} \label{}\\
    \>  $\V^\text{frozen}_i \gets \V_i$\label{}\\ 
    \> $(\chainfrozen_i,\cdot) \gets \texttt{fastconfirmsimple}(\V_i,t)$\label{line:algga-no-ffg-merge-ch-frozen}\\
    \\
    \textbf{upon} receiving a message $[\textsc{propose}, {\chain_p}, \chain^C_p, Q^C_p, \cdot, t, v_p]$ \label{}\\
    \>\> \textbf{at any round in} $[4\Delta t, 4\Delta t + \Delta]$ \textbf{do}\label{}\\
      \> \textbf{if} $\mathrm{valid}([\textsc{propose}, {\chain_p}, \chain^C_p, Q^C_p, \cdot, t, v_p] ) \land \chain^\text{frozen}_i \preceq \chain^C_p$ \textbf{then} \label{line:algga-no-ffg-upon}\\  
    \>\> $\chain^\text{frozen}_i \gets \chain^C_p$ \label{algo:3sf-ga-setchainfrozen-to-chainc}\\[-5ex]
  \end{numbertabbing}
\end{algo}
In Appendix~\ref{appendix:analysis-prob-ga-fast}, we prove that \Cref{algo:prob-ga-fast} satisfies   
Reorg Resilience (\Cref{thm:reorg-res-prop-tob}) and
Dynamic Availability (\Cref{thm:dyn-avail-fast-conf-tob}).

We first establish that if all active validators in slot~$t$ send a \textsc{vote} for a chain extending $\chain$, then every validator active in slot~$t{+}1$ will also vote for a chain extending $\chain$, since $\mfc$ necessarily outputs such an extension (\Cref{lem:keep-voting-tob-fast-conf}).  
It follows that if any validator fast confirms a chain $\chain$ in slot~$t$, then all validators active in later slots will output an extension of $\chain$ via $\mfc$ (\Cref{lem:one-fast-confirm-all-vote-fast-conf}).  
A similar guarantee holds for chains proposed by honest validators (\Cref{lem:vote-proposal-fast-conf}).  

These results imply that the confirmed chain of any honest validator at round $r_i$ -- even if the validator is inactive at $r_i$ -- is always a prefix of the $\mfc$ output of any validator active in later rounds (\Cref{lem:ga-confirmed-always-canonical}).  
This property is the key ingredient in proving Reorg Resilience.  

We further show that an honest proposer can include all transactions from the pool in its proposal (\Cref{lem:ga-mfc-proposer-shorter-than-t}).  
Combined with the above lemmas, this yields Dynamic Availability \wop (\Cref{thm:dyn-avail-fast-conf-tob}).  

We prove that fast confirmations are achievable (\Cref{thm:fast-liveness-tob}).  
Although not a formal security property, this result is central to the practical responsiveness of the protocol.  

\section{Finality Component}
\label{sec:ffg}

The finality component of our protocol finalizes, in each slot, a chain that extends the one finalized in the previous slot.  
We introduce the following terminology.  \\

\noindent\textbf{Checkpoints.}
A \emph{checkpoint} is a tuple $\C=(\chain,c)$, where $\chain$ is a chain and
$c$ is the slot in which $\chain$ is proposed for justification.  
By construction $c \ge \chain.p$.
We use $c$ to order checkpoints: $\C \le \C'$ iff $\C.c \le \C'.c$, with
equality only for identical checkpoints.

\noindent\textbf{Justification.}
A set of \textsc{fin-vote}s forms a \emph{supermajority link}
$\C \to \C'$ if at least $2n/3$ validators issue \textsc{fin-vote}s
from $\C$ to $\C'$.
A checkpoint is \emph{justified} if either
(i) $\C=(B_\text{genesis},0)$ or
(ii) it is reachable from $(B_\text{genesis},0)$ by a chain of
supermajority links.
A chain $\chain$ is justified iff some justified checkpoint
$\C$ satisfies $\C.\chain=\chain$.
Given a view $\V$, $\C$ is justified in $\V$ if $\V$ contains the
corresponding supermajority links.
The justified checkpoint with largest slot in $\V$ is
the \emph{greatest justified checkpoint} $\GJ(\V)$, whose chain is the
\emph{greatest justified chain}.  Ties break arbitrarily.

\noindent\textbf{Finalization.}
A checkpoint $\C$ is \emph{finalized} if it is justified and there is a
supermajority link $\C \to \C'$ with $\C'.c = \C.c+1$.
A chain is finalized if it belongs to some finalized checkpoint.
By definition $(B_\text{genesis},0)$ is finalized.
Function $\mathsf{F}(\V,\C)$ 
(\Cref{alg:justification-finalization}) returns true iff $\C$ is finalized in $\V$.
The finalized checkpoint of highest slot is $\GF(\V)$.
A chain $\chain$ is finalized in $\V$ iff $\chain \preceq \GF(\V).\chain$.

\noindent\textbf{Finality Votes.}
Validators issue two types of votes.  
The standard \textsc{vote} message embeds a \textsc{fin-vote}, a tuple
$[\textsc{fin-vote},\C_1,\C_2,v_i]$ linking a source $\C_1$ to a target
$\C_2$.
A \textsc{fin-vote} is \emph{valid} if
$\C_1.c < \C_2.c$ and $\C_1.\chain \preceq \C_2.\chain$.
We write such a vote simply as $\C_1 \to \C_2$.

\noindent\textbf{Equivocations.} A validator $v_i$ \emph{equivocates} if it sends two \textsc{vote} messages in the same slot $t$ for different chains $\chain \ne \chain'$.  

\begin{algo}[t!]
  \caption{Justification and Finalization}
  \label{alg:justification-finalization}
  \vbox{
  \small
  \begin{numbertabbing}\reset
    xxxx\=xxxx\=xxxx\=xxxx\=xxxx\=xxxx\=MMMMMMMMMMMMMMMMMMM\=\kill
  {\textbf{function} $\mathrm{valid}(\C_1 \to \C_2)$ }\label{}\\
  \> {\textbf{return} }{$\land\; \C_1.\chain \preceq \C_2.\chain$}\label{}\\
  \> \hphantom{\textbf{return} }{$\land\; \C_1.c < \C_2.c$}\label{}\\
  \\
  \textbf{function} $\mathsf{J}(\C,\V)$ \label{}\\
    \> \textbf{return} $\lor\; \C = (\genesis,0)$\label{}\\
    \> \hphantom{\textbf{return}} $\lor\; \exists \calS, \M \subseteq \V: \land\; \mathsf{J}(\calS,\V)$\label{}\\
    \> \hphantom{\textbf{return}} $\hphantom{\lor\; \exists \calS, \M \subseteq \V:\,}\land\;\mathrm{valid}(\calS\to\C)$\label{}\\
    \> \hphantom{\textbf{return}} $\hphantom{\lor\; \exists \calS, \M \subseteq \V:\,}\land\; |\{v_k : [\textsc{vote},\cdot,\calS \to \C,\cdot,v_k] \in \M\}|\geq \frac{2}{3}n$
    \label{}\\
    \\
  \textbf{function} $\mathsf{F}(\C,\V)$ \label{}\\
    \> \textbf{return} $\lor\; \C = (\genesis,0)$\label{}\\
    \> \hphantom{\textbf{return}} $\lor\; \land\; \mathsf{J}(\C,\V)$\label{}\\
    \> \hphantom{\textbf{return}} $\hphantom{\lor\;}\land\; \exists \T, \M \subseteq \V: \land\; \T.c = \C.c+1$\label{}\\
    \> \hphantom{\textbf{return}} $\hphantom{\lor\;\land\; \exists \T, \M \subseteq \V:} \land\; \mathrm{valid}(\C\to\T)$\label{}\\
    \> \hphantom{\textbf{return}} $\hphantom{\lor\;\land\; \exists \T, \M \subseteq \V:} \land\; |\{v_k : [\textsc{vote},\cdot,\C \to \T,\cdot,v_k] \in \M\}|\geq \frac{2}{3}n$\label{}\\[-5ex]      
  \end{numbertabbing}
  }
\end{algo}

A validator \(v_i\) is subject to slashing (as introduced in Section~\ref{sec:model}) for two \emph{distinct} \textsc{fin-vote}s \(\C_1 \to \C_2\) and \(\C_3 \to \C_4\) if either of the following conditions holds: {\(\mathbf{E_1}\) (Double voting)} if \(\C_2.c = \C_4.c\), implying that a validator must not send distinct \textsc{fin-vote}s for the same checkpoint slot; or {\(\mathbf{E_2}\) (Surround voting)} if $\mathcal{C}_3.c < \mathcal{C}_1.c < \mathcal{C}_2.c < \mathcal{C}_4.c$. \\

In Appendix~\ref{appendix:proofs}, we prove that if a checkpoint $(\chain_j, c_j)$ is justified while another checkpoint $(\chain_f, c_f)$ is already finalized with $c_j \ge c_f$, then either $\chain_j$ extends $\chain_f$, or at least $n/3$ validators must have violated one of the two slashing conditions (\Cref{lem:accountable-jutification}).
In particular, we prove that if two conflicting chains are ever finalized, then at least $n/3$ validators must have misbehaved (Lemma~\ref{lem:accountable-safety}). 

\section{\protocol}
\label{sec:ga-based}

\begin{figure}
    \centering
    \includegraphics[width=1\linewidth]{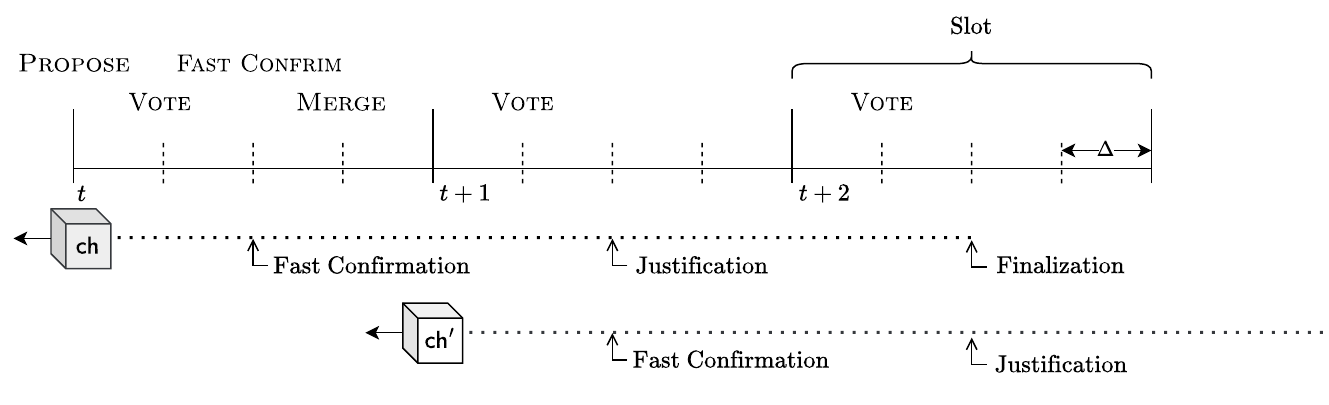}
    \caption{An overview of the slot structure is shown, highlighting phases relevant to the fast confirmation and finalization of a proposal. A proposal $\chain$ is made in the first slot $t$ (top left) and becomes fast-confirmed within the same slot. It is then justified in slot $t+1$ and finalized in slot $t+2$. Similarly, the next proposal $\chain’$ made in slot $t+1$ is fast-confirmed once $\chain$ is justified, and it is justified once $\chain$ is finalized. This illustrates how the confirmation and finalization process is pipelined across consecutive slots.
}
    \label{fig:overview}
\end{figure}

\begin{algo}[h!]
\caption{State and output variables of \protocol (Algorithm~\ref{alg:3sf-tob-noga})}
\label{alg:state}
\begin{numbertabbing}\reset
xxxx\=xxxx\=xxxx\=xxxx\=xxxx\=xxxx\=MMMMMMMMMMMMMMMMMMM\=\kill
  {\textbf{Output}} \label{}\\
  \> {$\chainava_i \gets \genesis$: available chain}\label{}\\
  \> {$\chainfin_i \gets \genesis$: finalized chain}\label{line:algotb-set-chfin-init}\\
  \textbf{State} \label{}\\
  \> $\V_i^\text{frozen}  \gets \{\genesis\}$: snapshot of $\V$ at time $4\Delta t + 3\Delta$ \label{}\\
  \> $\chain^\text{frozen}_i \gets \genesis $: snapshot of the fast-confirmed chain at time $4\Delta t + 3\Delta$ \label{} \\
  \> $\GJ_i^\text{frozen} \gets (\genesis, 0)$: latest frozen greatest justified checkpoint \label{}\\[-5ex]
\end{numbertabbing}
\end{algo}

\begin{algo}[h!]
\caption{\protocol~-- code for validator $v_i$}
\label{alg:3sf-tob-noga}
\begin{numbertabbing}\reset
xxxx\=xxxx\=xxxx\=xxxx\=xxxx\=xxxx\=MMMMMMMMMMMMMMMMMMM\=\kill
  \textbf{function} $\texttt{fastconfirm}(\V,t)$\label{}\\
  \> \textbf{let} $ (\chain^C, Q) := \texttt{fastconfirmsimple}(\V, t)$\label{}\\
  \> \textbf{if} $\chain^C \succeq \GJ(\V).\chain$ \textbf{then}  \textbf{return} $(\chain^C, Q)$\label{}\\
  \> \textbf{else} \textbf{return} $(\GJ(\V).\chain,\emptyset)$\label{}\\
  \textsc{Propose}\\
  \textbf{at round} $4\Delta t$ \textbf{do} \label{}\\
  \> \textbf{if} $v_i = v_p^t$ \textbf{then} \label{}\\
  \>\> \textbf{let} $ (\chain^C,Q^C) := \texttt{fastconfirm}(\V_i,t-1)$\label{}\\
  \>\> \textbf{let} $\chaincanmfc := \mfc(\V_i, \V_i, \chain^C, t)$ and $ \chain_p := {\mathsf{Extend}(\chaincanmfc, t)}$ \label{}\\
  \>\> send message [\textsc{propose}, $\chain_p$, $\chain^C$, $Q^C$, $\GJ(\V_i)$,  $t$, $v_i$] through gossip \label{}\\
  \textsc{Vote}\\
  \textbf{at round} $4\Delta t + \Delta$ \textbf{do} \label{}\\
  \> \textbf{let} $\chaincanmfc := \mfc(\Vfrozen_i, \V_i, \chain^\text{frozen}_i, t)$ \label{line:algtob-set-mfc}\\
  \> $\chainava_i \gets \max(\{\chain \in \{\chainava_i,{(\chaincanmfc)^{\lceil\kappa}},\GJ_i^\text{frozen}.\chain\}: \chain \preceq {\chaincanmfc}\})$\label{line:algtob-vote-chainava}\\
  \> {$\chainfin_i \gets \max(\{\chain \colon \chain \preceq \chainava_i \land \chain \preceq \GF(\V_i).\chain\})$}\label{line:algotb-set-chfin-vote} \\
  \> {\textbf{if} $\GJ_i^\text{frozen}.c = t-1$} \label{line:algtob-set-target-checkpoint-if}\\
  \>\> \textbf{let} $ \T := (\chainava_i,t)$\label{line:algtob-set-target-checkpoint} \\
  \> {\textbf{else}} \label{}\\
  \>\> {\textbf{let} $ \T := (\GJ_i^\text{frozen}.\chain,t)$} \label{line:algtob-rejustification}\\
  \> \textbf{let} $ \chain := $ \textsc{propose}d chain {from slot $t$} extending ${\chaincanmfc}$ \label{}\\
  \>\> and with $\chain.p =t$, if there is one, or ${\chaincanmfc}$ otherwise\label{line:algtob-vote-comm}\\  
  \>  send message [\textsc{vote}, $\chain$, $\GJ_i^\text{frozen} \to \T$, $t$, $v_i$] through gossip \label{line:algtob-vote}\\
  \textsc{Fast Confirm}\\
  \textbf{at round} $4\Delta t + 2\Delta$ \textbf{do} \label{line:algotb-at-confirm}\\
  \> \textbf{let} $ (\fastcand,\cdot) := \texttt{fastconfirm}(\V_i,t)$\label{line:algtob-set-fastcand-fconf}\\
  \> \textbf{if} $\chainava_i \nsucceq \fastcand$ \textbf{then} $\chainava_i \gets \fastcand$\label{line:algtob-set-chaava-to-bcand}\\
  \> {$\chainfin_i \gets \GF(\V_i).\chain$}\label{line:algotb-set-chfin-fast}\\  
  \textsc{Merge}\\
  \textbf{at round} $4\Delta t + 3\Delta$ \textbf{do} \label{}\\
  \>  $\V^\text{frozen}_i \gets \V_i$\label{}\\ 
  \> $(\chainfrozen_i,\cdot) \gets \texttt{fastconfirm}(\V_i,t)$\label{line:algtob-merge-ch-frozen}\\
  \> $\GJfrozen_i \gets \GJ(\V_i)$\label{line:algtob-merge-gj-frozen} \\
  \\
  \textbf{upon} receiving a message
    $[\textsc{propose}, \chain_p, \chain^C_p, Q^C_p, \GJ_p, t, v_p]$ \label{}\\
    \>\> \textbf{at any round in} $[4\Delta t, 4\Delta t + \Delta]$ \textbf{do} \label{line:algtob-upon}\\  
  \> \textbf{when round} $4\Delta t + \Delta$ \textbf{do}\label{}\\
  \>\> \textbf{if} $\text{valid}([\textsc{propose},\chain_p, \chain^C_p, Q^C_p, \GJ_p, t, v_p]) \land \mathsf{J}(\GJ_p,\V_i)  \land$ \label{}\\
  \>\>\>\> $\GJ_p \ge \GJ_i^\text{frozen}$ \textbf{then} \label{line:algtob-prop-if}\\
  \>\>\> $\GJ_i^\text{frozen} \gets \GJ_p$ \label{line:algtob-prop-merge-gj}\\
  \>\>\> \textbf{if} $\chain^\text{frozen}_i \not \succeq \GJ_p.\chain$ \textbf{then}\label{line:algtob-prop-check-chfrozen-gj}\\
  \>\>\>\> $\chain^\text{frozen}_i \gets \GJ_p.\chain$ \label{line:algtob-prop-set-ch-frozen-to-gj}\\
  \>\>\> \textbf{if} $\chain^\text{frozen}_i \preceq \chain^C_p$ \textbf{then}
    \label{line:algtob-if-setchainfrozen-to-chainc}\\
  \>\>\>\> $\chain^\text{frozen}_i \gets \chain^C_p$ \label{algo:3sf-noga-setchainfrozen-to-chainc}\\[-5ex]
\end{numbertabbing}
\end{algo}

We now present \protocol (\Cref{alg:3sf-tob-noga}), which integrates the dynamically available, reorg-resilient protocol of \Cref{algo:prob-ga-fast} with the finality component from \Cref{sec:ffg}, yielding a secure ebb-and-flow protocol.  
We describe \protocol by highlighting the main differences from \Cref{algo:prob-ga-fast}. \Cref{fig:overview} presents an overview of the happy path execution and corresponding slot structure.

The function $\texttt{fastconfirm}(\V,t)$ extends $\texttt{fastconfirmsimple}(\V,t)$ from \Cref{algo:prob-ga-fast}: it returns the same output if the fast-confirmed chain extends $\GJ(\V).\chain$, and otherwise returns $(\GJ(\V).\chain,\emptyset)$.  

Compared to \Cref{algo:prob-ga-fast}, validators maintain one additional variable: 
the \emph{checkpoint variable} $\GJfrozen_i$, defined as the greatest justified 
checkpoint in $v_i$’s view at time $3\Delta$ of slot $t{-}1$. 
This variable is updated during the interval $[0,\Delta]$ of slot $t$ based on 
the \textsc{propose} message.

\protocol outputs two chains:  
\begin{enumerate}
  \item \textbf{Available chain $\chainava_i$:} corresponds to the confirmed chain $\Chain_i$ of \Cref{algo:prob-ga-fast}.  
  \item \textbf{Finalized chain $\chainfin_i$:} set at fast confirmation rounds to $\GF(\V_i).\chain$, and at vote rounds to the longest chain $\chain$ such that $\chainfin_i \preceq \chain \preceq \GF(\V_i).\chain$ and $\chainfin_i \preceq \chainava_i$.  
\end{enumerate}

Relative to \Cref{algo:prob-ga-fast}, the main changes are:  
\begin{enumerate}
    \item \textbf{Propose:} The proposer for slot $t$ includes in its \textsc{propose} message the greatest justified checkpoint in its current view.  
    \item \textbf{Vote:} A proposal $[\textsc{propose}, \chain_p, \chain^C, Q^C, \GJ_p, t, v_p]$ is valid only if $\GJ_p$ is justified in $v_i$’s view.  
    To allow \textsc{fin-vote}s to accumulate, validators delay processing such proposals until $\voting{t}$.  
    If valid and $\GJ_p \succ \GJfrozen_i$, then $v_i$ updates both $\GJfrozen_i$ and $\chainfrozen_i$ (ensuring $\chainfrozen_i \succeq \GJfrozen_i$).  
    The available chain $\chainava_i$ is then set to the maximum among: (i) its previous value, (ii) the $\kappa$-deep prefix of the current fork-choice, and (iii) $\GJfrozen_i.\chain$, discarding any option not extending the fork-choice.  
    Each validator also includes a \textsc{fin-vote} in its \textsc{vote} message, with source $\GJfrozen_i$ and target depending on its slot: if $\GJfrozen_i$ is from slot $t{-}1$, the target is $\chainava_i$; otherwise, the target is $\GJfrozen_i.\chain$ lifted to slot $t$.  
    \item \textbf{Fast Confirm:} If $\texttt{fastconfirm}(\V_i,t)$ returns a candidate $\chain^\mathrm{cand}$, then $v_i$ sets $\chainava_i := \chain^\mathrm{cand}$.  
    \item \textbf{Merge:} At round $4\Delta t + 3\Delta$, $v_i$ updates $\GJfrozen_i$ to the greatest justified checkpoint in its current view.  
\end{enumerate}

For $\GST=0$, Appendix~\ref{appendix:analysis-3sf} shows that $\chainava$ in \protocol is  
dynamically available \wop (Theorem~\ref{thm:dyn-avail-fast-conf-tob}) and reorg-resilient (Theorem~\ref{thm:reorg-res-prop-tob}).

At a high level, assuming $\GST=0$, $f<n/3$, and Constraint~\eqref{eq:pi-sleepiness}, Appendix~\ref{sec:analysis-tob-sync} shows that integrating finality does not alter the core behavior of \Cref{algo:prob-ga-fast}.  
Concretely, we prove that: (i) honest validators send identical messages in both protocols except for the finality component, (ii) their fork-choice outputs coincide, and (iii) every $\chainava$ in \protocol is a confirmed chain of \Cref{algo:prob-ga-fast}.  
Thus, all lemmas and theorems proved in Appendix~\ref{appendix:analysis-prob-ga-fast} for \Cref{algo:prob-ga-fast} also apply to \protocol.  

Finally, in Appendix~\ref{sec:analysis-tob-psync}, we show that after 
time $\max(\GST,\GAT)+4\Delta$, \protocol guarantees both 
accountable safety and liveness. In particular, it satisfies three key 
properties. First, \emph{accountable safety}: honest validators are never 
slashed (Lemma~\ref{lem:never-slashed-3sf-tob-noga}); hence, the existence 
of two conflicting finalized chains implies that at least $\tfrac{n}{3}$ 
validators violated a slashing condition 
(Theorem~\ref{thm:accountable-safety}). Second, \emph{monotonicity}: the 
finalized chain $\chainfin_i$ of any honest validator grows monotonically 
(Lemma~\ref{lem:chfin-ga-always-grows}). Third, \emph{liveness}: after 
$\max(\GST,\GAT)+4\Delta$, if two honest proposers are elected in consecutive 
slots, then the first proposer’s chain is finalized within 2--3 slots 
(Theorem~\ref{thm:liveness-chfin}), depending on whether rejustification is 
needed. We conclude in Theorem~\ref{thm:ga-ebb-and-flow} that \protocol satisfies the properties of a secure ebb-and-flow protocol \wop  \\

\noindent\textbf{Communication Complexity} \protocol has expected communication complexity $\mathcal{O}(Ln^3)$, where $L$ is the block size.  
This matches the bound of the underlying dynamically available protocol \TOBSVD{}, since honest validators forward all received messages in each round, contributing to the $\mathcal{O}(Ln^3)$ cost.  
The finality gadget introduces no additional overhead, as it is embedded in the \textsc{vote} message at round $4\Delta t + \Delta$.  
Thus, the overall complexity remains identical to that of \TOBSVD{}.

\section{Conclusion}
\label{sec:conclusion}

We presented \protocol, a secure ebb-and-flow consensus protocol that achieves security and efficiency with only a single voting phase per slot.
Building on Momose and Ren~\cite{DBLP:conf/ccs/Momose022} and subsequent work, we adapted \TOBSVD{} as a dynamically available foundation and simplified it to probabilistic safety, yielding a protocol that is both simpler and more practical while retaining its guarantees.
Combined with a finality protocol, \protocol finalizes chains within two or three slots under optimistic conditions, while preserving the communication complexity of its underlying dynamically available protocol.

\bibliography{main}
\bibliographystyle{plainurl}

\appendix
\section{Analysis of Algorithm~\ref{algo:prob-ga-fast}}
\label{appendix:analysis-prob-ga-fast}

\begin{lemma}
\label{lem:2honestproposers}
Consider a sequence of \( \kappa \) independent slots, where in each slot an honest validator is elected as proposer with probability at least $p = \frac{h_0}{n},$
where \( h_0 \) is the number of active honest validators and \( n \) is the total number of validators. For a finite time horizon \( \Tconf \), polynomial in \( \kappa \), if \( p \) is non-negligible, then with overwhelming probability, i.e., with probability at least \( 1 - \negl(\kappa) - \negl(\lambda) \), there exists at least one pair of consecutive slots in which honest proposers are elected.
\end{lemma}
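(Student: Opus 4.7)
The plan is to bound the probability that no pair of consecutive slots among the $\kappa$ has two honest proposers by exhibiting many \emph{independent} opportunities and applying a standard exponential tail bound.

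First, I would partition the $\kappa$ consecutive slots into $m := \lfloor \kappa/2 \rfloor$ disjoint adjacent pairs $(2j-1,\,2j)$ for $j=1,\ldots,m$. By the Uniqueness and Fairness properties of the proposer election together with the stated independence across slots, the events
\[
Y_j := \{\text{both proposers of pair $j$ are honest}\}
\]
are mutually independent, and each satisfies $\Pr[Y_j] \ge p^2$ where $p = h_0/n$. Taking disjoint pairs is crucial: it sidesteps the dependence between overlapping pairs $(i,i+1)$ and $(i+1,i+2)$ that would otherwise obstruct a clean product bound.

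Second, using independence and the inequality $1-x \le e^{-x}$,
\[
\Pr\!\Big[\bigcap_{j=1}^{m} \overline{Y_j}\Big] \;\le\; (1-p^2)^m \;\le\; e^{-p^2 m} \;\le\; e^{-p^2(\kappa-1)/2}.
\]
Since the event ``some pair of (not necessarily disjoint) adjacent slots has two honest proposers'' contains ``some disjoint pair succeeds'', the failure probability of the lemma is at most $e^{-p^2\lfloor \kappa/2 \rfloor}$. A union bound over the $O(\Tconf)$ windows of $\kappa$ consecutive slots in the polynomial time horizon preserves negligibility, handling the ``time horizon'' aspect of the statement.

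Third, I would argue that the bound $e^{-p^2\lfloor \kappa/2 \rfloor}$ is $\negl(\kappa)$ under the hypothesis that $p$ is non-negligible: since the sleepiness condition of the model keeps the active honest fraction bounded away from $0$, $p$ is effectively bounded below by a positive constant, so $p^2\kappa/2 = \Omega(\kappa)$ and the exponential is negligible in $\kappa$. Adding the $\negl(\lambda)$ contribution from the cryptographic soundness of $\VRFVerify$ used in the election yields the stated \wop guarantee.

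The main obstacle is the third step: turning ``$p$ non-negligible'' into ``$e^{-p^2\kappa/2}$ negligible in $\kappa$''. Strictly speaking, non-negligibility alone ($p \ge 1/\poly(\kappa)$) is not enough when the polynomial in the denominator has degree $\ge 1/2$; one must appeal to the model's lower bound on honest active participation, or state the lemma with $\kappa$ chosen as a suitable function of $p$. The combinatorial and probabilistic steps above are otherwise routine.
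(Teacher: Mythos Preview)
Your proposal is correct and follows essentially the same approach as the paper: partition into $\lfloor \kappa/2 \rfloor$ disjoint adjacent pairs, bound the failure probability by $(1-p^2)^{\lfloor \kappa/2 \rfloor}$, and add the $\negl(\lambda)$ term for cryptographic failures. Your treatment is in fact more careful---you explicitly justify the disjoint-pair trick, add the union bound over the time horizon, and flag the subtlety that ``non-negligible'' alone does not literally guarantee negligibility of $e^{-p^2\kappa/2}$, a point the paper's proof simply asserts without discussion.
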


\begin{proof}
Let \( X_i \) denote the event that the \( i \)-th slot has an honest proposer. By assumption, each \( X_i \) occurs independently with probability at least \( p \). Let \( E \) be the event that no two consecutive slots have honest proposers. We conservatively upper bound \( P(E) \) by partitioning the sequence into \( \lfloor \kappa / 2 \rfloor \) disjoint groups of two consecutive slots. In each such pair, the probability that neither of the two adjacent pairs  \( (i, i+1) \) both contain honest proposers is at most \( 1 - p^2 \).

Therefore, the probability of event \( E \) is at most:
\[
P(E) \leq (1 - p^2)^{\lfloor \kappa / 2 \rfloor}.
\]

If \( p \) is non-negligible and \( \kappa \) is treated as a security parameter, then this upper bound becomes negligible in \( \kappa \); i.e., \( P(E) \leq \negl(\kappa) \).

Since cryptographic failures occur with probability \( \negl(\lambda) \) (Section~\ref{sec:security}), it follows that the probability of failing to observe two consecutive honest proposers is at most \( \negl(\kappa) + \negl(\lambda) \).

Hence, with overwhelming probability, there exists at least one pair of consecutive slots in which honest proposers are elected.
\end{proof}

For any slot $t$, we define:
\begin{itemize}
    \item \sloppy{$\mfcpropose{t}_i := \mfc(\V^{\proposing{t}}_i, \V^{\proposing{t}}_i, \chain^C, t)$, where $(\chain^C, \cdot) = \texttt{fastconfirmsimple}(\V^{\proposing{t}}_i, t{-}1)$;}
    \item $\mfcvote{t}_i := \mfc(\Vfrozen[\voting{t}]_i, \V^{\voting{t}}_i, \chainfrozen[\voting{t}]_i, t)$.
\end{itemize}

\begin{lemma}
    \label{lem:keep-voting-tob-fast-conf}
    If, in slot $t$, all validators in $H_{\voting{t}}$ send \textsc{vote} messages for chains extending chain \(\chain\),
    then, for any validator $v_i\in H_{\voting{(t+1)}}$, $\mfcpropose{(t+1)}_i \succeq \chain$ and $\mfcvote{(t+1)}_i \succeq \chain$, which implies that, in slot $t+1$, all validators in $H_{\voting{(t+1)}}$ send \textsc{vote} messages for chains extending \(\chain\).
\end{lemma}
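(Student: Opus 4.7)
The plan is to combine synchronous delivery with constraint~\eqref{eq:pi-sleepiness} to ensure that, in slot~$t{+}1$, the votes feeding $\mfc$ contain a strict majority for extensions of $\chain$. First I would observe that, under $\GST=0$, every validator $v_i$ active at $\proposing{(t+1)}$ or $\voting{(t+1)}$ has already received the slot-$t$ \textsc{vote} messages of every validator in $H_{\voting{t}}$. For validators continuously awake, this follows from the $\Delta$-delivery bound; for validators that wake up during slot~$t$, the joining protocol flushes queued messages before activation, and the freeze at round $4\Delta t + 3\Delta$ captures those votes, since they were gossipped $2\Delta$ rounds earlier at $\voting{t}$. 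In particular, these votes live in each of $\V^{\proposing{(t+1)}}_i$, $\V^{\voting{(t+1)}}_i$, and $\Vfrozen[\voting{(t+1)}]_i$.

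Next I would do the counting. Write $H^\star := H_{\voting{t}} \setminus A_{\voting{(t+1)}}$; constraint~\eqref{eq:pi-sleepiness} applied to slot $t+1$ gives $|H^\star| > |A_{\voting{(t+1)}}|$. Each $v_k \in H^\star$ cast exactly one \textsc{vote} in slot~$t$ for a chain $\succeq \chain$, and, being still honest at $\voting{(t+1)}$, has not yet voted in slot~$t{+}1$ when $\mfc$ is invoked. Its slot-$t$ vote therefore survives $\mathsf{FIL}_{\mathrm{eq}}$, $\mathsf{FIL}_{1\text{-exp}}$, and $\mathsf{FIL}_{\mathrm{lmd}}$, placing $v_k$ in $\V^{\chain,t+1} \cap (\V')^{\chain,t+1}$. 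The only other validators that can appear in $\mathsf{S}(\V',t{+}1)$ belong to $A_{\voting{(t+1)}}$, or are honest validators first active in slot~$t{+}1$ (who contribute no vote prior to the fork-choice call). Hence $|\mathsf{S}(\V',t{+}1)| \le |H^\star| + |A_{\voting{(t+1)}}| < 2|H^\star|$, so $H^\star$ already supplies strict majority support for every chain $\succeq \chain$ that is a common prefix of the chains voted by $H^\star$.

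Then I would verify that the base chain $\chain^C$ passed to $\mfc$ is comparable with $\chain$; otherwise the conclusion $\mfc \succeq \chain$ could still fail. For $\mfcpropose{(t+1)}_i$, $\chain^C$ is produced by $\texttt{fastconfirmsimple}(\cdot,t)$, which requires $\ge \tfrac{2}{3}n$ supporters; with $f < n/3$, at least one $v_k \in H^\star$ supports $\chain^C$, forcing $\chain^C$ to be a prefix of some chain extending $\chain$ and hence comparable to $\chain$. The same argument applies to $\chainfrozen[\voting{(t+1)}]_i$ used in $\mfcvote{(t+1)}_i$, whether that value was installed by the slot-$t$ merge or updated from a valid slot-$(t{+}1)$ \textsc{propose}. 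With $\chain^C$ comparable to $\chain$, the longest majority-supported chain $\succeq \chain^C$ returned by $\mfc$ must extend $\chain$. The concluding assertion about slot-$(t{+}1)$ \textsc{vote} messages is then immediate from the voting rule of Algorithm~\ref{algo:prob-ga-fast}: each $v_i \in H_{\voting{(t+1)}}$ votes either for a valid proposal extending its $\mfc$ output or for that output itself, and in either case for a chain $\succeq \chain$.

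I expect the main obstacle to be the $\chain^C$-comparability step rather than the counting; once~\eqref{eq:pi-sleepiness} is invoked, the strict majority is immediate, but ruling out a fast-confirmed chain — stored in $\chainfrozen$ or carried in an adversarial \textsc{propose} — that would force the fork-choice to branch away from $\chain$ requires carefully leveraging the $\tfrac{2}{3}n$ quorum overlap with $H^\star$ together with the $f < n/3$ bound. The remaining ingredients — gossip, the joining protocol, and the voting rule — slot in routinely.
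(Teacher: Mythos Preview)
Your proposal is correct and follows essentially the same approach as the paper: combine synchronous delivery of the honest slot-$t$ votes with constraint~\eqref{eq:pi-sleepiness} to obtain a strict majority for $\chain$ among the non-expired voters, and separately argue (via the $\tfrac{2}{3}n$-quorum overlap with honest voters under $f<n/3$) that the fast-confirmed base chain fed to $\mfc$ is compatible with $\chain$. The paper organizes the second step as a case split on $\chainfrozen_i\in\{\genesis\}$ versus $\chainfrozen_i\succeq\chain$ and invokes the counting argument only in the first case, whereas you do the counting once and then establish comparability of $\chainfrozen_i$ with $\chain$; the two arguments are interchangeable.
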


\begin{proof}
Let $v_i$ be any honest validator in \(H_{\voting{(t+1)}}\).
Due to the joining protocol, this implies that $v_i$ is awake at round $\merging{t} = 4\Delta (t+1)-\Delta$.
Clearly the same holds at time $\proposing{(t+1)}$: $\chain^{\text{frozen},\proposing{(t+1)}}_i = \genesis \lor \chain^{\text{frozen},\proposing{(t+1)}}_i\succeq \chain$.
Now we consider each case separately and for each of them prove that $\mfcpropose{(t+1)}_i \succeq \chain$.
\begin{description}
  \item[Case 1: {$\chainfrozen[\proposing{t+1}]_i = \genesis$}.]
  Note that due to the Lemma's and synchrony assumptions, all the \textsc{vote} messages sent by validators in \(H_{\voting{t}} \setminus A_{\voting{(t+1)}}\) are included in $\V_i^{\proposing{(t+1)}}$.
   Hence, at time $\proposing{(t+1)}$, \(\left|\V_i^{\chain,t+1} \right| \geq \left|H_{\voting{t}}    \setminus A_{\voting{(t+1)}}\right|\).
  Similarly, $H_{\voting{t}} \setminus A_{\voting{(t+1)}} \subseteq \mathsf{S}(\V_i,t+1)$.
  Also, $\mathsf{S}(\V_i,t+1)\setminus \left(H_{\voting{t}} \setminus A_{\voting{(t+1)}}\right) \subseteq A_{\voting{(t+1)}} \cup \left(H_{\voting{(t-1+1)},\voting{(t-1)}}\setminus H_{\voting{t}}\right)$ as any \textsc{vote} in $\mathsf{S}(\V_i,t+1)$ that is not from a validator in $H_{\voting{t}} \setminus A_{\voting{(t+1)}}$ must be either from a validator Byzantine at time $\voting{(t+1)}$ or from a validator active at some point between $\voting{(t-1+1)}$ and $\voting{(t-1)}$ but not active at time $\voting{t}$.
  Then, we have 
 
\begin{align*}
\left|\mathsf{S}(\V_i,t+1)\right| 
&= \left|H_{\voting{t}} \setminus A_{\voting{(t+1)}} \right| 
+ \left|\mathsf{S}(\V_i,t+1) \setminus \left(H_{\voting{t}} \setminus A_{\voting{(t+1)}}\right)\right| \\
&\leq \left|H_{\voting{t}} \setminus A_{\voting{(t+1)}} \right| 
+ \left|A_{\voting{(t+1)}} \cup \left(H_{\voting{(t-1+1)},\voting{(t-1)}} \setminus H_{\voting{t}}\right)\right| \\
&< 2\left|H_{\voting{t}} \setminus A_{\voting{(t+1)}} \right|
\end{align*}

  where the last inequality comes from \Cref{eq:pi-sleepiness}.

  {Hence, at time $\proposing{(t+1)}$, \(\left|\V_i^{\chain,t+1} \right|\)\(\geq \left|H_{\voting{t}} \setminus A_{\voting{(t+1)}}\right|  > \frac{\left|\mathsf{S}(\V_i,t+1)\right|}{2}\).}
  
  Therefore, $\mfcpropose{(t+1)}_i \succeq \chain$.  
  \item[Case 2: {$\chainfrozen[\proposing{t+1}]_i \succeq \chain$}.] 
  \sloppy{{Given that by definition, $\mfcpropose{(t+1)}_i \succeq \chainfrozen[\proposing{t+1}]_i $, we have that $\mfcpropose{(t+1)}_i \succeq \chain$.}}
\end{description} 

\sloppy{Now let us move to proving that $\mfcvote{(t+1)}_i  \succeq \chain$.
Note that if a $[\textsc{propose}, B, \chain_p^C, Q^C_p, t+1, v_p]$ message is valid, then $\chain_p^C = \genesis \lor \chain_p^C \succeq \chain$.
\Cref{algo:3sf-ga-setchainfrozen-to-chainc} implies that $\chainfrozen[\voting{(t+1)}]_i = \chainfrozen[\merging{t}]_i \lor \chainfrozen[\voting{(t+1)}]_i = \chain_p^C$.}
Then, from $\chain^{\text{frozen},\merging{t}}_i = \genesis \lor \chain^{\text{frozen},\merging{t}}_i\succeq \chain$ and \Cref{algo:3sf-ga-setchainfrozen-to-chainc}, it follows that, 
at time $\voting{(t+1)}$, $\chain^{\text{frozen},\voting{(t+1)}}_i = \genesis \lor \chain^{\text{frozen},\voting{(t+1)}}_i \succeq \chain$.
Now we consider each case separately. 

\begin{description}
  \item[Case 1: {$\chainfrozen[\voting{t+1}]_i = \genesis$}.]
  Note that due to the Lemma's and synchrony assumptions, all the \textsc{vote} messages sent by validators in \(H_{\voting{t}} \setminus A_{\voting{(t+1)}}\) are included in $\V^{\text{frozen},\voting{(t+1)}}_i\cap \V_i^{\voting{(t+1)}}$.
  From here the proof is effectively the same as the one for Case 1 for $\mfcpropose{(t+1)}_i$, just with $\V_i^{\proposing{(t+1)}}$ replaced by $\V^{\text{frozen},\voting{(t+1)}}_i\cap \V_i^{\voting{(t+1)}}$.
  \item[Case 2: {$\chainfrozen[\voting{t+1}]_i \succeq \chain$}.] 
  Similarly to proof of Case 2 for $\mfcpropose{(t+1)}_i$, given that by definition, $\mfcvote{(t+1)}_i \succeq \chainfrozen[\voting{t+1}]_i $, we have that $\mfcvote{(t+1)}_i \succeq \chain$.l
\end{description} 

From $\mfcvote{(t+1)}_i  \succeq \chain$ follows that, in slot $t+1$,  $v_i$ sends a \textsc{vote} message for a chain extending~$\chain$.
\end{proof}

\begin{lemma}
    \label{lem:one-fast-confirm-all-vote-fast-conf}
    \sloppy{If an honest validator fast confirms a chain $\chain$ in slot $t$ ($\ie$, there exists $Q$ such that $(\chain, Q) = \texttt{fastconfirmsimple}(\V^{\fastconfirming{t}}_i,t) \land Q \neq \emptyset$), then, for any slot $t'>t$ and validator $v_i \in H_{\voting{(t')}}$, $\mfcpropose{t'}_i \succeq \chain$ and $\mfcvote{t'}_i \succeq \chain$, which implies that, all validators in $H_{\voting{(t')}}$ send a \textsc{vote} message for a chain extending $\chain$.}
\end{lemma}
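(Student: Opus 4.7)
The plan is to prove the statement by induction on $t' > t$, handling the base case $t' = t{+}1$ directly from the fast-confirm condition and invoking \Cref{lem:keep-voting-tob-fast-conf} for the inductive step. The key observation for the base case is that the at least $\tfrac{2}{3}n$ \textsc{vote} messages witnessing the fast confirmation propagate to every $v_i \in H_{\voting{t+1}}$ well before they are needed: they lie in $\V_j$ at round $\fastconfirming{t}$, and since $v_j$ re-gossips every vote it sees, they are delivered to every validator awake at $\fastconfirming{t}$ by round $\merging{t} = \fastconfirming{t}+\Delta$. By the joining protocol, any $v_i \in H_{\voting{t+1}}$ wakes no later than round $\voting{t}+\Delta = \fastconfirming{t}$, so $\V_i^{\merging{t}}$ contains these votes, and --- although $v_i$ is only awake and not yet active at $\merging{t}$ --- it still executes the local Merge step (messages are simply not sent).

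I would then track $\chainfrozen_i$ from Merge onwards. At \Cref{line:algga-no-ffg-merge-ch-frozen}, $v_i$ sets $\chainfrozen_i \gets \max(\fastcands)$, computed from $\V_i^{\merging{t}}$. Since $\chain \in \fastcands$ in $v_i$'s view, and $\fastcands$ is totally ordered by $\preceq$ --- two conflicting candidates would require $\tfrac{4}{3}n$ distinct slot-$t$ voters and hence at least $n/3$ equivocators, contradicting $f<n/3$ together with the fact that honest validators do not equivocate --- the returned chain extends $\chain$, so $\chainfrozen[\merging{t}]_i \succeq \chain$. The only update to $\chainfrozen_i$ between $\merging{t}$ and $\voting{t+1}$ is the upon-\textsc{propose} rule (\Cref{algo:3sf-ga-setchainfrozen-to-chainc}), which only overwrites $\chainfrozen_i$ with a chain extending it; hence $\chainfrozen[\voting{t+1}]_i \succeq \chain$. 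Because $\mfc(\cdot,\cdot,\chain^C,\cdot) \succeq \chain^C$ by definition, we obtain $\mfcvote{t+1}_i \succeq \chainfrozen[\voting{t+1}]_i \succeq \chain$. Applying the same gossip-plus-total-ordering argument to $\V_i^{\proposing{t+1}}$ shows that $(\chain^C,\cdot) := \texttt{fastconfirmsimple}(\V_i^{\proposing{t+1}},t)$ gives $\chain^C \succeq \chain$, whence $\mfcpropose{t+1}_i \succeq \chain^C \succeq \chain$. Since each $v_i \in H_{\voting{t+1}}$ votes (\Cref{line:algga-no-fin-vote-comm,line:algga-no-fin-vote}) for either a slot-$(t{+}1)$ proposed extension of $\mfcvote{t+1}_i$ or $\mfcvote{t+1}_i$ itself, every such $v_i$ sends a \textsc{vote} for a chain extending $\chain$.

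For the inductive step $t' \to t'{+}1$, the inductive hypothesis ensures that every $v_j \in H_{\voting{t'}}$ sends a \textsc{vote} in slot~$t'$ for a chain extending $\chain$, which is exactly the premise of \Cref{lem:keep-voting-tob-fast-conf}; the lemma then yields $\mfcpropose{t'+1}_i \succeq \chain$ and $\mfcvote{t'+1}_i \succeq \chain$ for every $v_i \in H_{\voting{t'+1}}$, and in particular that every such $v_i$ again votes for a chain extending $\chain$, closing the induction.

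The main obstacle I anticipate is the bookkeeping in the base case: verifying that even a validator entering $H_{\voting{t+1}}$ by waking at the latest permitted round ($\fastconfirming{t}$) has the fast-confirm-witnessing votes queued in its view by the Merge step, and that neither the upon-\textsc{propose} update nor $v_i$'s not-yet-active status at $\merging{t}$ prevents the local $\chainfrozen_i$ update. Once the timing is pinned down and the totally-ordered-$\fastcands$ claim under $f<n/3$ is in place, the rest is a chain of definitional implications.
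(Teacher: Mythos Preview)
Your proposal is correct and follows essentially the same approach as the paper: induction on $t'$, with the base case $t'=t{+}1$ handled by showing that the fast-confirm quorum propagates to every $v_i\in H_{\voting{t+1}}$ in time for the Merge step (so that $\chainfrozen[\voting{t+1}]_i\succeq\chain$ and hence $\mfcvote{t+1}_i\succeq\chain$), and the inductive step handled by \Cref{lem:keep-voting-tob-fast-conf}. Your base case is in fact slightly more complete than the paper's, since you also argue $\mfcpropose{t+1}_i\succeq\chain$ explicitly and are careful about the joining-protocol timing; the paper simply asserts that no conflicting quorum can form and that $\chainfrozen[\voting{t+1}]_i=\chain$, glossing over the $\mfcpropose{t+1}$ claim in the base case.
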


\begin{proof}
The proof is by induction on $t'$.

\begin{description}
  \item[Base Case: $t'=t+1$.] 
  Assume that at round $\fastconfirming{t}$, an honest validator in \(H_{\voting{t}}\) fast confirms a chain $\chain$.
  Given that we assume $f < \frac{n}{3}$, conflicting quorum certificates cannot form in the same slot.
  Then, due to the joining protocol and the synchrony assumption, at time $\merging{t} = 4\Delta (t+1) - \Delta$, $\chain^\text{frozen, \merging{t}}_i = \chain$ for any validator $v_i \in H_{\voting{(t+1)}}$.

  Similarly, if a $[\textsc{propose}, \chain_p, \chain_p^C, Q^C_p, t+1, v_p]$ message is valid, then $\chain_p^C = \genesis \lor \chain_p^C = \chain$.
  This and \Crefrange{line:algga-no-ffg-upon}{algo:3sf-ga-setchainfrozen-to-chainc} imply that, at time $\voting{(t+1)}$, $\chain^{\text{frozen},\voting{(t+1)}}_i =  \chain$.

  Hence, because $\mfcvote{(t+1)}_i \succeq \chainfrozen[\voting{t+1}]_i $, $\mfcvote{(t+1)}_i \succeq \chain$, implying that all validators in $H_{\voting{(t+1)}}$ send \textsc{vote} messages extending chain $\chain$.
  \item[Inductive Step: $t' > t+1.$] 
  Here we can just apply \Cref{lem:keep-voting-tob-fast-conf} to conclude the proof.\qed
\end{description}
\end{proof}

\begin{lemma}
\label{lem:vote-proposal-fast-conf}
Let $t$ be a slot with an honest proposer $v_p$ and assume that~$v_p$ sends a $[\textsc{propose}, \chain_p, \chain^C_p, Q^C_p, t, v_p]$ message. 
Then, for any slot $t'\geq t$, all validators in $H_{\voting{(t')}}$ send a \textsc{vote} message for a chain extending $\chain_p$.
Additionally, for any slot $t'' > t$ and any validator $v_i\in H_{\voting{(t'')}}$, $\mfcpropose{t''}_i\succeq \chain_p$ and $\mfcvote{t''}_i\succeq \chain_p$.
\end{lemma}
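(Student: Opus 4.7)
The plan is to split the statement into two pieces. First handle the base case $t' = t$ (all voters at $\voting{t}$ vote for chains extending $\chain_p$), then obtain the rest by induction using \Cref{lem:keep-voting-tob-fast-conf}.

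For the base case, fix any $v_i \in H_{\voting{t}}$. In the vote phase, $v_i$ computes $\chaincanmfc = \mfc(\Vfrozen_i, \V_i, \chainfrozen_i, t)$ and then, by \Cref{line:algga-no-fin-vote-comm}, votes for any proposed chain of slot $t$ that extends this $\chaincanmfc$. The key step is therefore to show that $\chaincanmfc \preceq \chain_p$: once this holds, $v_i$ will see $\chain_p$ in the vote phase (the proposer is honest, so under synchrony the \textsc{propose} message is delivered by $\voting{t}$), and hence vote for $\chain_p$. To get $\chaincanmfc \preceq \chain_p$, I would rely on the Graded Delivery property already invoked in the paper: the proposer's own $\chain_p$ extends $\mfc(\V_p,\V_p,\chain^C,t)$ for $\chain^C$ obtained from $\texttt{fastconfirmsimple}(\V_p,t-1)$, and all votes supporting chains that extend $\chain_p$'s ancestors are contained, by synchrony and the freezing at round $\merging{(t-1)}$, in $\Vfrozen_i \cap \V_i$. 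Together with the fact that $\chainfrozen_i$ from the previous slot is a prefix of $\chain^C_p$ (enforced by the upon-clause at \Crefrange{line:algga-no-ffg-upon}{algo:3sf-ga-setchainfrozen-to-chainc}), the local $\mfc$ output cannot deviate from a prefix of $\chain_p$.

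Given the base case, the rest is immediate. Apply \Cref{lem:keep-voting-tob-fast-conf} with $\chain := \chain_p$ to the slot $t$: we conclude $\mfcpropose{(t+1)}_i \succeq \chain_p$ and $\mfcvote{(t+1)}_i \succeq \chain_p$ for every $v_i \in H_{\voting{(t+1)}}$, and that all such validators vote for chains extending $\chain_p$ in slot $t+1$. An easy induction on $t'' \ge t+1$, reapplying \Cref{lem:keep-voting-tob-fast-conf} at each step with $\chain := \chain_p$, yields both $\mfcpropose{t''}_i \succeq \chain_p$ and $\mfcvote{t''}_i \succeq \chain_p$ for every $v_i \in H_{\voting{(t'')}}$, and the corresponding voting conclusion for every $t' \ge t$.

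The main obstacle is the base case, because it is the only step that does not plug directly into \Cref{lem:keep-voting-tob-fast-conf}: we must argue that the proposer's view and every other active validator's frozen view produce compatible $\mfc$ outputs. The subtlety is that the two views are not identical, so the comparison has to be made through the filtered, non-equivocating, non-expired vote sets in the definition of $\mfc$, and it must account for the latest update of $\chainfrozen_i$ at the beginning of slot $t$ via the \textsc{propose} message. Once this synchrony-based alignment is nailed down, the inductive part is routine.
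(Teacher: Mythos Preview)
Your proposal follows essentially the same approach as the paper: induction on $t'$, with the base case $t'=t$ handled via Graded Delivery and the update of $\chainfrozen_i$ through the upon-clause, and the inductive step handled by repeated application of \Cref{lem:keep-voting-tob-fast-conf}.

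One slip in your base-case sketch: you write that ``all votes supporting chains that extend $\chain_p$'s ancestors are contained, by synchrony and the freezing at round $\merging{(t-1)}$, in $\Vfrozen_i \cap \V_i$.'' The Graded Delivery argument actually runs in the opposite direction: what matters is that $\Vfrozen_i$ (taken at $\merging{(t-1)}$) is contained in $\V_p^{\proposing{t}}$ by synchrony, so any majority that $v_i$ sees in $\Vfrozen_i \cap \V_i$ is also visible to the proposer. This is what forces $\chain_p \succeq \mfcvote{t}_i$ when the majority condition holds. The paper makes this explicit via a two-case split on whether $\mfcvote{t}_i$ satisfies the majority condition or equals $\chainfrozen_i$; you should do the same, since in the second case the conclusion comes instead from $\chain_p \succeq \chain^C_p = \chainfrozen[\voting{t}]_i$. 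Also note that the paper uses $f<\tfrac{n}{3}$ (not just the upon-clause) to argue that $\chain^C_p \succeq \chainfrozen[\merging{(t-1)}]_i$, since two distinct fast-confirmation quorums cannot coexist in slot $t-1$.
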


\begin{proof}
The proof is by induction on $t'$.

\begin{description}
  \item[Base Case: $t' = t$.] 
  \sloppy{Suppose that in slot \(t\), an honest proposer $v_p$ sends a $[\textsc{propose}, \chain_p, \chain^C_p, Q^C_p t, v_p]$ message.}
  Consider an honest validator \(v_i \in H_{\voting{t}}\).
  Note that due to the synchrony assumption, $\V^{\merging{(t-1)}}_i \subseteq \V^{\proposing{t}}_p$.
  Given that we assume $f < \frac{n}{3}$, this further implies that $\chain^{\text{frozen},\merging{(t-1)}}_i \neq \genesis \implies \chain^C_p  = \chain^{\text{frozen},\merging{(t-1)}}_i $.
  Hence, clearly $\chain^C_p \succeq \chain^{\text{frozen},\merging{(t-1)}}_i$.
  Therefore, due to \Crefrange{line:algga-no-ffg-upon}{algo:3sf-ga-setchainfrozen-to-chainc}, $\chain^{\text{frozen},\voting{t}}_i = \chain^C_p$.
  
  We know that either \(\left|\left(\Vfrozen_i\right)^{\mfcvote{t}_i,t} \cap \V_i^{\mfcvote{t}_i,t} \right| > \frac{\left|\mathsf{S}(\V_i,t)\right|}{2}\) or $\mfcvote{t}_i = \chain^\text{frozen}_i$.
  Let us consider each case separately.

  \begin{description}
    \item[Case 1: \(\left|\left(\Vfrozen\right)^{\mfcvote{t}_i,t} \cap \V_i^{\mfcvote{t}_i,t} \right| > \frac{\left|\mathsf{S}(\V_i,t)\right|}{2}\).]  
    By the Graded Delivery property~\cite{streamliningSBFT}, this implies that at time $\proposing{t}$, $\left|\V_p^{\mfcvote{t}_i,t}\right|>\frac{\left|\mathsf{S}(\V_p,t)\right|}{2}$ meaning that, due to \Cref{line:algga-no-ffg-new-block}, $\chain_p\succeq \mfcvote{t}_i$ and hence, due to \Cref{line:algga-no-fin-vote,line:algga-no-fin-vote-comm}, in slot $t$, $v_i$ sends a \textsc{vote} message for $\chain_p$.
  
    \item[Case 2: $\mfcvote{t}_i = \chain^\text{frozen}_i$.]
    Due to \Cref{line:algga-no-fin-vote,line:algga-no-fin-vote-comm}, $v_i$ still sends a \textsc{vote} for $\chain_p$ as $\chain_p \succeq \chain^C_p = \chain^\text{frozen}_i = \mfcvote{t}_i$.
  \end{description}
  \item[Inductive Step: $t' > t$.]
  Here we can just apply \Cref{lem:keep-voting-tob-fast-conf} to conclude the proof.\qed
\end{description}
\end{proof}

\begin{lemma}
  \label{lem:ga-confirmed-always-canonical}
  Let $r_i$ be any round and $r_j$ be any round such that $r_j\ge r_i$ and $r_j \in \{\proposing{\slot(r_j)}, \voting{\slot(r_j)}\}$. Then, for any validator~$v_i$ honest in round $r_i$ and any validator $v_j \in H_{r_j}$, $\Chain^{r_i}_i \preceq \mfc^{r_j}_j$.
\end{lemma}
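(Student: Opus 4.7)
The plan is to trace $\Chain^{r_i}_i$ back to the round $r^*$ at which it was last assigned, and then to push that chain forward to $r_j$ using the propagation lemmas. In \Cref{algo:prob-ga-fast}, $\Chain_i$ is modified only at the vote phase (\Cref{line:algga-no-fin-vote-chainava}) or at the fast-confirm phase (\Cref{line:algga-no-ffg-set-chaava-to-bcand}); if neither has fired by round $r_i$, then $\Chain^{r_i}_i = \genesis$ and the conclusion is immediate. Otherwise, let $r^*$ be the latest round at most $r_i$ at which $\Chain_i$ was assigned its current value, and split into two cases according to which of the two lines produced the assignment.

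In the fast-confirm case, $r^* = \fastconfirming{s^*}$ and a quorum certificate for $\Chain^{r_i}_i$ was assembled in slot $s^*$. \Cref{lem:one-fast-confirm-all-vote-fast-conf} then gives $\mfcpropose{t'}_j \succeq \Chain^{r_i}_i$ and $\mfcvote{t'}_j \succeq \Chain^{r_i}_i$ for every slot $t' > s^*$ and every $v_j \in H_{\voting{t'}}$. Because $r_j \ge r_i \ge r^* = 4\Delta s^* + 2\Delta$ and $r_j$ must be a propose or vote round, the earliest admissible $r_j$ is $\proposing{(s^*+1)}$, so $\slot(r_j) \ge s^* + 1$ and the lemma applies.

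In the $\kappa$-deep case, $r^* = \voting{s^*}$ and $\Chain^{r_i}_i = (\mfcvote{s^*}_i)^{\lceil \kappa}$, so every block of $\Chain^{r_i}_i$ has slot at most $s^* - \kappa$. By \Cref{lem:2honestproposers}, with overwhelming probability there is an honest proposer at some slot $t^* \in (s^* - \kappa, s^*)$ whose proposal $\chain_{p, t^*}$ satisfies $\chain_{p, t^*}.p = t^* > s^* - \kappa$. Applying \Cref{lem:vote-proposal-fast-conf} with $t'' = s^*$ yields $\mfcvote{s^*}_i \succeq \chain_{p, t^*}$, so $(\mfcvote{s^*}_i)^{\lceil \kappa}$ and $\chain_{p, t^*}$ are both prefixes of the same chain; since the former ends no later than slot $s^* - \kappa < t^*$, we conclude $\Chain^{r_i}_i \preceq \chain_{p, t^*}$. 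A second application of \Cref{lem:vote-proposal-fast-conf} then gives $\mfc^{r_j}_j \succeq \chain_{p, t^*} \succeq \Chain^{r_i}_i$, using $\slot(r_j) \ge s^* > t^*$.

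The main obstacle will be the $\kappa$-deep case, as this is where the probabilistic qualifier enters: we rely on \Cref{lem:2honestproposers} to guarantee an honest proposer inside the window of $\kappa$ slots preceding $r^*$, so the conclusion only holds with overwhelming probability. A secondary subtlety is that the vote-round update can \emph{replace} $\Chain_i$ with $(\chaincanmfc)^{\lceil \kappa}$ even when the previous value is not a prefix of $\chaincanmfc$; this is absorbed into the $\kappa$-deep case, but I must also check that $v_i \in H_{\voting{s^*}}$ at $r^*$ so the propagation lemmas apply to $\mfcvote{s^*}_i$, and that $v_j \in H_{r_j}$ together with $r_j \in \{\proposing{\slot(r_j)}, \voting{\slot(r_j)}\}$ suffices to invoke \Cref{lem:vote-proposal-fast-conf}, which is phrased for $v_j \in H_{\voting{\slot(r_j)}}$.
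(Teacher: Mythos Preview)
Your proposal is correct and follows essentially the same approach as the paper. The paper frames it as a contradiction argument (take the minimal $r_i$ for which the conclusion fails, observe this forces $\Chain^{r_i-1}_i \neq \Chain^{r_i}_i$, then split on whether $r_i$ is a vote or fast-confirm round), whereas you trace back directly to the last change; structurally the two case splits and the invocations of \Cref{lem:one-fast-confirm-all-vote-fast-conf} and \Cref{lem:vote-proposal-fast-conf} are identical.

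One small wording issue: you define $r^*$ as ``the latest round at most $r_i$ at which $\Chain_i$ was assigned its current value,'' but \Cref{line:algga-no-fin-vote-chainava} fires at every vote round, and may reassign the \emph{same} value (when the old $\Chain_i$ is a prefix of $\chaincanmfc$ and dominates the $\kappa$-deep prefix). Under a literal reading of your definition, $r^*$ could then be a vote round with $\Chain^{r_i}_i \neq (\mfcvote{s^*}_i)^{\lceil\kappa}$. What you want is the latest round at which $\Chain_i$ \emph{changed}; at such a vote round the $\max$ must have selected $(\chaincanmfc)^{\lceil\kappa}$, and your $\kappa$-deep case goes through. The paper's minimality framing (``$\Chain^{r_i-1}_i \neq \Chain^{r_i}_i$'') handles this automatically, and its Case~1.2 is precisely the observation that if the vote-round update kept the old value, one should look earlier. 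Your concern about $v_j \in H_{r_j}$ versus $H_{\voting{\slot(r_j)}}$ when $r_j$ is a propose round is legitimate, but the paper's proof glosses over the same point.
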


\begin{proof}

We proceed by contradiction.
\sloppy{Let $r_i$ be the smallest round such that there exist two honest validators $v_i$ and $v_j$, and round $r_j$ such that $r_j\ge r_i$ and $r_j \in \{\proposing{\slot(r_j)}, \voting{\slot(r_j)}\}$ and 
$\Chain^{r_i}_i \npreceq \mfc^{r_j}_j$, that is, $r_i$ is the first round where the chain confirmed by an honest validator conflicts with the output of $\mfc$ of (another) honest validator at a propose or vote round $r_j \geq r_i$.}
Given the minimality of $r_i$, $\Chain^{r_i-1}_i \neq \Chain^{r_i}_i$ which then implies that $v_i \in H_{r_i}$.
This can only happen if $r_i$ is either a voting or a fast confirmation round.
Let $t_i= \mathrm{slot}(r_i)$ and proceed by analyzing each case separately. 
\begin{description}
\item[Case 1: $r_i$ is a vote round.] 
Due to \Cref{line:algga-no-fin-vote-chainava}, $\Chain^{r_i}_i \succeq \left(\mfcvote{t_i}_i\right)^{\lceil \kappa}$.
Let us now consider two sub cases.
  \begin{description}
    \item[Case 1.1: $\Chain^{r_i}_i = \left(\mfcvote{t_i}_i\right)^{\lceil \kappa}$.] 
    With overwhelming probability (\Cref{lem:2honestproposers}), there exists at least one slot \(t_p\) in the interval \([t_i - \kappa, t_i)\) with an honest proposer $v_p$.
    Let $\chain_p$ be the chain \textsc{propose}d by $v_p$ in slot $t_p$.
    Given that $t_p < t_i$, \Cref{lem:vote-proposal-fast-conf} implies that 
    $\mfcvote{t_i}_i \succeq \chain_p$.
    Then, because $t_p\geq t_i-\kappa$, we have that $\chain_p \succeq  \left(\mfcvote{t_i}_i\right)^{\lceil \kappa} = \Chain^{r_i}_i$.
    Because $t_p < \slot(r_j)$, \Cref{lem:vote-proposal-fast-conf} also implies that $\chain^{r_j}_j \succeq \chain_p \succeq \Chain^{r_i}_i$ leading to a contradiction.
    \item[Case 1.2: $\Chain^{r_i}_i \succ \left(\mfcvote{t_i}_i\right)^{\lceil \kappa}$.] 
    This case implies that $\Chain^{r_i}_i  = \Chain^{r_i-1}_i$.
    From the minimality of $r_i$ we reach a contradiction.
  \end{description}
  \item[Case 2: $r_i$ is a fast confirmation round.]
  Note that this implies that $t_i < \slot(r_j)$.
  Because of the minimality of $r_i$, we only need to consider the case that $(\Chain^{r_i}_i, Q) = \texttt{fastconfirmsimple}(\V^{r_i}_i,t_i) \land Q \neq \emptyset$.
  Therefore, we can apply \Cref{lem:one-fast-confirm-all-vote-fast-conf} to conclude that $\chain^{r_j}_j \succeq \Chain^{r_i}_i$ reaching a contradiction.\qed
\end{description}
\end{proof}

\begin{theorem}[Reorg Resilience]\label{thm:reorg-res-prop-tob}
  \Cref{algo:prob-ga-fast} is reorg-resilient. 
\end{theorem}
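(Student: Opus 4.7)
My plan is to use \Cref{lem:ga-confirmed-always-canonical} together with \Cref{lem:vote-proposal-fast-conf} to exhibit, for every validator $v_i$ honest at round $r$ and every chain $\chain_p$ proposed in slot $t$ by a validator $v_p$ honest at $\proposing{t}$, a single chain $C$ that has both $\Chain^r_i$ and $\chain_p$ as prefixes; once such a $C$ is found, the two chains are $\preceq$-comparable and therefore non-conflicting. The natural candidate for $C$ is an output of $\mfc$ computed by an honest active validator at a propose or vote round at or after slot $t$, since the propose step of \Cref{algo:prob-ga-fast} forces $\chain_p \succeq \mfcpropose{t}_p$ and \Cref{lem:vote-proposal-fast-conf} propagates this extension forward in time to every honest validator active at later voting rounds.

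I would then split on whether $r \le \proposing{t}$ or $r > \proposing{t}$. If $r \le \proposing{t}$, I would instantiate \Cref{lem:ga-confirmed-always-canonical} with $r_j := \proposing{t}$ and $v_j := v_p$: because the propose action was actually performed, $v_p \in H_{\proposing{t}}$, so the lemma gives $\Chain^r_i \preceq \mfcpropose{t}_p$, and combining with $\mfcpropose{t}_p \preceq \chain_p$ yields $\Chain^r_i \preceq \chain_p$. If instead $r > \proposing{t}$, I would choose a future voting round $r_j := \voting{t''}$ with $t'' > t$ and $\voting{t''} \ge r$, together with any $v_j \in H_{\voting{t''}}$. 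Then \Cref{lem:vote-proposal-fast-conf} delivers $\mfcvote{t''}_j \succeq \chain_p$ while \Cref{lem:ga-confirmed-always-canonical}, applied with the same $r_j$, delivers $\Chain^r_i \preceq \mfcvote{t''}_j$, so $\mfcvote{t''}_j$ serves as the desired common extension.

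The main obstacle I expect is the second case: one must ensure that some honest active validator actually exists at the chosen future voting round, irrespective of how large $r$ is. Under a compliant execution, Constraint~\eqref{eq:pi-sleepiness} provides $|H_{\voting{(t''-1)}} \setminus A_{\voting{t''}}| > |A_{\voting{t''}}| \ge 0$, so some honest validator is active just before $\voting{t''}$ and not corrupted by then; after choosing $t''$ sufficiently far into the future and invoking the joining-protocol machinery, this carries over to $H_{\voting{t''}} \ne \emptyset$. This is the one place where the sleepiness model genuinely has to be unpacked; once the existence of $v_j$ is granted, the theorem reduces to assembling the two earlier lemmas via the common-extension observation, with no new probabilistic reasoning required.
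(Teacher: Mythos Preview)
Your proposal is correct and uses essentially the same approach as the paper: exhibit a common extension of $\Chain^{r}_i$ and $\chain_p$ by taking the $\mfc$ output of some honest active validator at a future round, invoking \Cref{lem:vote-proposal-fast-conf} for the $\chain_p$ side and \Cref{lem:ga-confirmed-always-canonical} for the $\Chain^{r}_i$ side. The paper avoids your case split by directly choosing a slot $t_j > \max(t_p,\slot(r_i))$ and any $v_j \in H_{\voting{t_j}}$, which handles both of your cases at once; your Case~1 is valid but unnecessary, and your explicit justification of $H_{\voting{t''}} \neq \emptyset$ via Constraint~\eqref{eq:pi-sleepiness} is something the paper silently assumes here (it makes the argument explicit only later, in the safety proof of \Cref{thm:dyn-avail-fast-conf-tob}).
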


\begin{proof}
  Take a slot $t_p$ with an honest proposer $v_p$ who sends a \textsc{propose} message for chain $\chain_p$.
  Take also any round $r_i$ and validator $v_i$ honest in round $r_i$.
  Now let $t_j$ be any slot such that $t_j > \max(t_p,\slot(r_i))$.
  Pick any validator $v_j \in H_{\voting{(t_j)}}$.
  \Cref{lem:vote-proposal-fast-conf} implies that $\mfcvote{t_j}_j \succeq \chain_p$.
  \Cref{lem:ga-confirmed-always-canonical} implies that $\mfcvote{t_j}_j \succeq \Chain^{r_i}_i$.
  Hence, $\chain_p$ does not conflict with $\Chain^{r_i}_i$.
\end{proof}

{
\begin{lemma}\label{lem:ga-mfc-proposer-shorter-than-t}
  For any slot $t$ and  validator $v_i \in H_\proposing{t}$, $\mfcpropose{t}_i.p < t$.
\end{lemma}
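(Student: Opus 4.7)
The plan is to show that at round $\proposing{t} = 4\Delta t$, the majority fork-choice function $\mfc$ cannot output a chain of height $\geq t$, because (i) the base chain $\chain^C$ passed to $\mfc$ has height at most $t-1$, and (ii) no strict extension of $\chain^C$ with height $\geq t$ meets the strict-majority condition in the view $\V^{\proposing{t}}_i$. Both halves rely on the observation that at round $\proposing{t}$ no honest validator has yet issued a slot-$t$ \textsc{vote}, so any honest vote surviving the non-expiration filter at slot $t$ carries slot label $t-1$ and hence references a chain of height at most $t-1$.

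To establish (i), I would observe that $\chain^C = \texttt{fastconfirmsimple}(\V^{\proposing{t}}_i, t-1)$ is either $\genesis$ or a chain supported by at least $\tfrac{2}{3}n$ validators' slot-$(t-1)$ \textsc{vote} messages. Since $f<n/3$, at least $n/3$ of those supporters are honest, and inspection of the vote-phase code of \Cref{algo:prob-ga-fast} shows that an honest slot-$(t-1)$ vote is for a chain of height $\leq t-1$ (either the local $\chaincanmfc$ from slot $t-1$, or a proposal with $\chain.p = t-1$); consequently $\chain^C.p \leq t-1$.

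For (ii), I would exploit $\GST=0$ to argue that every \textsc{vote} message sent at $\voting{(t-1)}$ by a validator in $H_{\voting{(t-1)}}$ is already in $\V^{\proposing{t}}_i$. Among those, the votes from validators still honest at $\proposing{t}$, i.e.\ in $H_{\voting{(t-1)}}\setminus A_{\proposing{t}}$, are for chains of height $\leq t-1$ and so do not support $\chain$ whenever $\chain.p \geq t$. Since no honest slot-$t$ vote has yet been cast, the only surviving votes that could support such $\chain$ come from adversarial validators. Writing $h := |H_{\voting{(t-1)}}\setminus A_{\proposing{t}}|$ and $a$ for the number of adversarial validators with a surviving vote in the filtered view, one has $|\mathsf{S}(\V^{\proposing{t}}_i,t)|\geq h+a$ while the support count for $\chain$ is at most $a$, so the majority condition would require $2a > h+a$, i.e.\ $a > h$. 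This contradicts Constraint~\eqref{eq:pi-sleepiness}: together with the monotonicity $A_{\proposing{t}} \subseteq A_{\voting{t}}$ it yields $h \geq |H_{\voting{(t-1)}}\setminus A_{\voting{t}}| > |A_{\voting{t}}| \geq a$.

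The main obstacle is this counting step, since Constraint~\eqref{eq:pi-sleepiness} is stated at $\voting{t}$ but must be applied at the earlier round $\proposing{t}$; one has to carry the bound across using monotonicity of the corruption set. Once this is done, $\mfcpropose{t}_i$ is either $\chain^C$ (height $\leq t-1$) or a strict extension whose height is bounded by $t-1$ through (ii), yielding $\mfcpropose{t}_i.p \leq t-1 < t$.
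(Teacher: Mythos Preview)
Your argument is circular. In part (i) you assert by ``inspection of the vote-phase code'' that an honest slot-$(t-1)$ vote is for a chain of height at most $t-1$, in particular in the case where the vote is for the local $\chaincanmfc$. But the code gives no direct bound on the height of $\chaincanmfc$ at $\voting{(t-1)}$: it equals $\mfc(\Vfrozen_i,\V_i,\chainfrozen_i,t-1)$, and both the base chain $\chainfrozen_i$ and any majority-supported extension are determined by votes from slot $t-2$ (plus possibly adversarial slot-$(t-1)$ votes already present). Bounding the height of those earlier honest votes is precisely the statement you are proving, shifted down one slot. The same unproved premise reappears in part (ii), where you need honest slot-$(t-1)$ votes not to support any chain of height $\geq t$.

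The paper closes this gap with an explicit induction on $t$. It first reduces the claim, via the Validity property of Graded Agreement and $\genesis.p<0$, to showing that no validator honest at $\proposing{t}$ has ever cast a \textsc{vote} for a chain of height $\geq t$; it then proves the latter inductively, the step being that the hypothesis yields $\mfcvote{t-1}_i.p < t-1$, so honest slot-$(t-1)$ votes (for either $\chaincanmfc$ or a proposal with $\chain.p=t-1$) have height at most $t-1$. Your counting argument in (ii) could be made to work once wrapped in such an induction, but as written it assumes its own conclusion.
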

\begin{proof}
  Due to the Validity property of Graded Agreement~\cite{streamliningSBFT} and the fact that $\genesis.p <0$,
  it is sufficient to prove that no validator honest in round $\proposing{t}$ has ever sent a \textsc{vote} message for a chain $\chain$ with $\chain.p \geq t$.
  The proof is by induction.
  \begin{description}
    \item[Base Case: $t=0$.] Obvious as no validator in $H_\proposing{t}$ has ever sent any message.
    \item[Inductive Step: $t >0$.] 
    Due to the inductive hypothesis and the fact that honest nodes do not send any \textsc{vote} message in round $\proposing{(t-1)}$, by the Validity property of Graded Agreement, we know that $\mfcvote{t-1}.p < t-1$.
    Then, the proof follows from \Cref{line:algga-no-fin-vote-comm}.\qed
  \end{description}
\end{proof}
}

\begin{theorem}
    \label{thm:dyn-avail-fast-conf-tob}
\Cref{algo:prob-ga-fast} is dynamically available \wop.
\end{theorem}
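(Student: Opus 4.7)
The plan is to verify the two requirements of Dynamic Availability (\Cref{def:dyn-ava})---namely safety and liveness with $\Tconf = O(\kappa)$---both holding with overwhelming probability in the sleepy model with $\GST = 0$. The backbone of both arguments is \Cref{lem:ga-confirmed-always-canonical}, which already packages the probabilistic guarantee from \Cref{lem:2honestproposers} via the $\kappa$-window honest-proposer condition, so no fresh probabilistic reasoning is needed on top of it.

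For safety, I would fix two honest validators $v_i, v_j$ at rounds $r_i \leq r_j$ and reduce the claim to exhibiting a common chain of which both $\Chain^{r_i}_i$ and $\Chain^{r_j}_j$ are prefixes. Picking any later vote round $r^* \geq r_j$ at which some $v_k \in H_{r^*}$ is active (existence of such a $v_k$ is implicit in the sleepy-model assumptions used throughout the algorithm's analysis), two applications of \Cref{lem:ga-confirmed-always-canonical}---one to $(r_i, r^*)$ and one to $(r_j, r^*)$---give $\Chain^{r_i}_i \preceq \mfcvote{\slot(r^*)}_k$ and $\Chain^{r_j}_j \preceq \mfcvote{\slot(r^*)}_k$, so the two confirmed chains are prefixes of a common chain and hence comparable.

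For liveness, I would fix a round $r$ and a transaction $\tx \in \txpool^r$. By \Cref{lem:2honestproposers}, \wop\ there is a slot $t_p \in [\slot(r){+}1,\, \slot(r){+}\kappa{+}1]$ with an honest proposer $v_p$. Since \Cref{lem:ga-mfc-proposer-shorter-than-t} gives $\mfcpropose{t_p}_p.p < t_p$, $\mathsf{Extend}$ applies on \Cref{line:algga-no-ffg-new-block} and the proposed chain $\chain_p$ contains every transaction in $\txpool^{\proposing{t_p}} \supseteq \txpool^r$, so $\tx \in \chain_p$. By \Cref{lem:vote-proposal-fast-conf}, for every $t' > t_p$ and every $v_k \in H_{\voting{t'}}$ we have $\mfcvote{t'}_k \succeq \chain_p$; taking $t' = t_p + \kappa$ places $\chain_p$ inside the $\kappa$-deep prefix, and the update on \Cref{line:algga-no-fin-vote-chainava} yields $\chain_p \preceq \Chain_k$, so $\tx \in \Chain_k$. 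For any honest $v_i$ active at a later round $r_i$, the same conclusion propagates through intermediate vote rounds. The total confirmation time is at most $2\kappa$ slots, i.e.\ $O(\kappa)$ rounds, as required.

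The main obstacle I anticipate is the bookkeeping around \Cref{line:algga-no-fin-vote-chainava}: the max there is taken only over elements that remain prefixes of $\chaincanmfc$, so one must check that the previously committed $\Chain_i$ is never excluded---equivalently, that $\Chain_i$ grows monotonically along each validator's timeline. This is exactly what \Cref{lem:ga-confirmed-always-canonical} delivers when applied with $v_i$ against itself at consecutive vote rounds; combined with \Cref{lem:one-fast-confirm-all-vote-fast-conf}, it also ensures that any chain written by the unguarded fast-confirmation update on \Cref{line:algga-no-ffg-set-chaava-to-bcand} is carried forward rather than dropped at the next vote round. Once this monotonicity is threaded through both update paths, the safety and liveness arguments proceed as outlined.
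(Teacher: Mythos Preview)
Your proposal is correct and follows essentially the same approach as the paper: safety by applying \Cref{lem:ga-confirmed-always-canonical} twice at a common later vote round (the paper invokes Constraint~\eqref{eq:pi-sleepiness} to guarantee an active validator exists there), and liveness by combining \Cref{lem:2honestproposers}, \Cref{lem:ga-mfc-proposer-shorter-than-t}, and \Cref{lem:vote-proposal-fast-conf} to place the honest proposal inside the $\kappa$-deep prefix. The one tactical difference is that the paper dispatches the fast-confirmation case in the liveness argument directly---since $f<\tfrac{n}{3}$ and, by \Cref{lem:vote-proposal-fast-conf}, all honest slot-$t_i$ votes extend $\chain_p$, any fast-confirmed chain must also extend $\chain_p$---rather than via the monotonicity bookkeeping you outline, so the concerns in your last paragraph can be bypassed.
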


\begin{proof}
We begin by proving the Liveness of the protocol with a confirmation time of {$T_{\text{conf}} = 8\kappa\Delta + \Delta$}.
We prove liveness by first considering the $\kappa$-deep rule only.
Take a round \( r \) at slot \( t = \text{slot}(r) \), another round {\( r_i \geq r + 8\kappa\Delta + \Delta \geq  4\Delta(t+2\kappa)+\Delta = \voting{(t+2\kappa)}\)}, and an honest validator \( v_i \in H_{r_i} \).
Let $t_i = \mathrm{slot}(r_i)$.
Due to the joining protocol, we know that the first active round for $v_i$, at or after $\voting{(t+2\kappa)}$, is a vote round.
There is a high probability of finding a slot \( t_p  \in [t + 1, t + \kappa]\) hosted by an honest proposer (\Cref{lem:2honestproposers}).
Let $\chain_p$ be the chain \textsc{propose}d by the honest proposer $v_p$ in slot $t_p$.
{Due to \Cref{lem:ga-mfc-proposer-shorter-than-t} and Property 1, we know that $\chain_p$ includes all of the transaction in $\txpool^\proposing{t_p}$.
Given that $\proposing{t_p} \geq r$, $\txpool^r \subseteq \txpool^\proposing{t_p}$, which implies that $\chain_p$ includes all of the transactions in $\txpool^r$.
}
Given that $\mathrm{slot}(r_i)>t_p$, as a consequence of \Cref{lem:vote-proposal-fast-conf}, $\mfcvote{t_i}_i \succeq \chain_p$.
Note that \Cref{line:algga-no-fin-vote-chainava} implies that $\Chain^{r_i}_i \succeq \left(\mfcvote{t_i}_i\right)^{\lceil \kappa,t_i}$.
Then, because \( t_p \leq t + \kappa \leq t_i - \kappa \), \( \chain_p \preceq \left(\mfcvote{t_i}_i\right)^{\lceil \kappa,t_i}\) and hence $\chain_p \preceq \Chain_i^{r_i}${, which implies that $\Chain_i^{r_i}$ includes any transaction in $\txpool^r$}.

We now want to show that fast confirmation does not interfere.
Given that $t_i \geq t_j$, from \Cref{lem:vote-proposal-fast-conf}, we know that any \textsc{vote} sent in slot $t_i$ by honest validators are for chains extending $\chain_p$, given that we assume $f< \frac{n}{3}$, if $r_i$ is a fast confirmation round and  $v_i$  sets $\Chain^{r_i}_i$, then $\Chain^{r_i}_i\succeq\chain_p$ still holds.

We now show safety.
Take any two rounds $r_i$ and $r_j$ and validators $v_i$ and $v_j$ honest in round $r_i$ and $r_j$ respectively.
Now let $t_k$ be any slot such that $t_k > \max(\slot(r_j),\slot(r_i))$.
By Constraint~\eqref{eq:pi-sleepiness} we know that $H_{\voting{(t_k)}}$ is not empty.
So, pick any validator $v_k \in H_{\voting{(t_k)}}$.
\Cref{lem:ga-confirmed-always-canonical} implies that $\mfcvote{t_k}_k \succeq \Chain^{r_i}_i$ and $\mfcvote{t_k}_k \succeq \Chain^{r_j}_j$.
Hence, $\Chain^{r_i}_i$ does not conflict with $\Chain^{r_j}_j$.
\end{proof}

\begin{lemma}[Liveness of fast confirmations]
\label{thm:fast-liveness-tob}
Take a slot $t$ in which \(|H_{\voting{t}}| \geq \frac{2}{3}n\).
If in slot $t$ an honest validator sends a \textsc{propose} message for chain $\chain_p$, then, for any validator $ v_i \in H_{\fastconfirming{t}}$, $\Chain^{\fastconfirming{t}}_i \succeq \chain_p$.
\end{lemma}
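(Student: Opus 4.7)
The plan is to invoke \Cref{lem:vote-proposal-fast-conf} to pin down what honest validators vote for in slot $t$, then use synchrony and gossip to show the resulting votes reach every validator in $H_{\fastconfirming{t}}$ in time to make $\chain_p$ a fast-confirmation candidate, and finally use a quorum-intersection argument to show that the maximum of $\fastcands$ extends $\chain_p$.

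First, I would apply \Cref{lem:vote-proposal-fast-conf} with $t' = t$: since an honest proposer sends $[\textsc{propose}, \chain_p, \cdot, \cdot, \cdot, t, v_p]$, every validator in $H_{\voting{t}}$ sends, at round $\voting{t}$, a \textsc{vote} for some chain extending $\chain_p$. Together with $|H_{\voting{t}}| \geq \tfrac{2}{3}n$ this yields at least $\tfrac{2}{3}n$ such \textsc{vote} messages. Since $\fastconfirming{t} = \voting{t} + \Delta$ and we are working under synchrony for the dynamically available component, the gossip assumption ensures all these \textsc{vote}s lie in $\V_i^{\fastconfirming{t}}$ for every $v_i \in H_{\fastconfirming{t}}$ (any such $v_i$ has been active throughout the joining window and so observes on-time deliveries). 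Consequently, when $v_i$ evaluates $\texttt{fastconfirmsimple}(\V_i, t)$, the chain $\chain_p$ lies in the set $\fastcands$, and the corresponding $Q$ is nonempty.

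The main step, which I expect to be the only real content of the proof, is to argue that $\fastcand := \max(\fastcands) \succeq \chain_p$, so that \Cref{line:algga-no-ffg-set-chaava-to-bcand} assigns $\Chain_i \gets \fastcand \succeq \chain_p$. Suppose for contradiction some $\chain^* \in \fastcands$ conflicts with $\chain_p$. Then, in the view used to compute $\fastcands$, at least $\tfrac{2}{3}n$ validators have a \textsc{vote} in slot $t$ extending $\chain_p$ and at least $\tfrac{2}{3}n$ have a \textsc{vote} in slot $t$ extending $\chain^*$; by inclusion--exclusion, at least $\tfrac{n}{3}$ validators equivocated within slot $t$. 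Since $f < \tfrac{n}{3}$ and honest validators do not equivocate in-slot, this is impossible. Hence every chain in $\fastcands$ is comparable with $\chain_p$, so $\fastcand \succeq \chain_p$, and the fast-confirmation step sets $\Chain^{\fastconfirming{t}}_i \succeq \chain_p$, as required.
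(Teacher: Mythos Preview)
Your proof is correct and follows essentially the same skeleton as the paper's: invoke \Cref{lem:vote-proposal-fast-conf} at $t'=t$, use synchrony to deliver the resulting votes by $\fastconfirming{t}$, and conclude that \Cref{line:algga-no-ffg-if-set-chaava-to-bcand,line:algga-no-ffg-set-chaava-to-bcand} set $\Chain_i$ to a chain extending $\chain_p$.

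The one noteworthy difference is how you handle the conclusion of \Cref{lem:vote-proposal-fast-conf}. The paper exploits that, in the base case of that lemma's proof, honest validators in slot~$t$ vote \emph{exactly} for $\chain_p$ (not merely an extension); with that, $\fastcand = \chain_p$ is almost immediate, since any strictly longer or conflicting chain can only gather the $< \tfrac{n}{3}$ adversarial votes. You instead take the lemma's statement at face value (votes \emph{extend} $\chain_p$) and recover the conclusion via a quorum-intersection argument ruling out any $\chain^* \in \fastcands$ conflicting with $\chain_p$. Your route is slightly longer but arguably cleaner, as it relies only on the lemma's stated interface rather than on what its proof happens to establish internally; the paper's route is terser but leans on that stronger internal fact.
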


\begin{proof}
By assumption we have that \(|H_{\voting{t}}| \geq \frac{2}{3}n\), implying from \Cref{lem:vote-proposal-fast-conf} that every honest validator in \(H_{\voting{t}} \) sends a \textsc{vote} message for $\chain_p$. It follows that any validator $ v_i \in H_{\fastconfirming{t}}$ receives a quorum of \textsc{vote} messages for $\chain_p$. 
Hence, due to \Cref{line:algga-no-ffg-if-set-chaava-to-bcand,line:algga-no-ffg-set-chaava-to-bcand},$\Chain^{\fastconfirming{t}}_i \succeq \chain_p$.
\end{proof}

\section{Analysis of the Finality-Component}
\label{appendix:proofs}

\begin{lemma}
\label{lem:accountable-jutification}
Assume $f \in [0,n]$ and let
$\C_\mathsf{f}=(\chain_\mathsf{f},c_\mathsf{f})$ be a finalized checkpoint and $\C_\mathsf{j}=(\chain_\mathsf{j},c_\mathsf{j})$ be a justified checkpoint with $c_\mathsf{j} \geq c_\mathsf{f}$ {according to any two views}.
Either $\chain_\mathsf{j} \succeq \chain_\mathsf{f}$ or at least $\frac{n}{3}$ validators can be detected to have violated either $\mathbf{E_1}$ or $\mathbf{E_2}$.
\end{lemma}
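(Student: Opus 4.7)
The strategy is the classical Casper FFG accountable safety argument: assuming $\chain_\mathsf{j}\not\succeq\chain_\mathsf{f}$, somewhere along the chain of supermajority links justifying $\C_\mathsf{j}$ a link must either double-vote with or surround the finalization link $\C_\mathsf{f}\to\C_\mathsf{f}'$, so quorum intersection $|V|+|V'|-n\geq \tfrac{n}{3}$ on the two $\tfrac{2n}{3}$-sized vote sets isolates $\tfrac{n}{3}$ accountable validators. I would first dispose of trivialities: if $\C_\mathsf{f}=(\genesis,0)$ then $\chain_\mathsf{j}\succeq\genesis=\chain_\mathsf{f}$, so assume $\C_\mathsf{f}\neq(\genesis,0)$. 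Since validity of a supermajority link $\C_s\to\C_t$ requires $\C_s.c<\C_t.c$ with non-negative slots, every non-genesis justified checkpoint sits at slot $\geq 1$; in particular $c_\mathsf{f}\geq 1$ and $\C_\mathsf{f}$ is finalized via a concrete link $\C_\mathsf{f}\to\C_\mathsf{f}'$ with $\C_\mathsf{f}'.c=c_\mathsf{f}+1$. Likewise $c_\mathsf{j}\geq c_\mathsf{f}\geq 1$ rules out $\C_\mathsf{j}=(\genesis,0)$, so its justification expands as $(\genesis,0)=\C_0\to\C_1\to\cdots\to\C_k=\C_\mathsf{j}$ with each link a supermajority link, chains $\preceq$-monotone, and slots strictly increasing.

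Next I would case-split. If $c_\mathsf{j}=c_\mathsf{f}$, then $\C_\mathsf{j}\neq\C_\mathsf{f}$ (else $\chain_\mathsf{j}=\chain_\mathsf{f}$) are two distinct non-genesis targets at slot $c_\mathsf{f}$, so intersecting their $\tfrac{2n}{3}$-vote sets yields $\geq \tfrac{n}{3}$ violators of $\mathbf{E_1}$. If $c_\mathsf{j}>c_\mathsf{f}$, take the smallest $i$ with $\C_i.c>c_\mathsf{f}$; then $\C_{i-1}.c\leq c_\mathsf{f}$ and one of three patterns holds. (a) If $\C_i.c=c_\mathsf{f}+1$ and $\C_i\neq \C_\mathsf{f}'$, the links $\C_{i-1}\to\C_i$ and $\C_\mathsf{f}\to\C_\mathsf{f}'$ are two distinct supermajority links with the same target slot $c_\mathsf{f}+1$, witnessing $\mathbf{E_1}$. (b) If $\C_i.c>c_\mathsf{f}+1$ and $\C_{i-1}.c<c_\mathsf{f}$, then $\C_{i-1}.c<\C_\mathsf{f}.c<\C_\mathsf{f}'.c<\C_i.c$, so $\C_{i-1}\to\C_i$ surrounds $\C_\mathsf{f}\to\C_\mathsf{f}'$, witnessing $\mathbf{E_2}$. (c) If $\C_i.c>c_\mathsf{f}+1$ and $\C_{i-1}.c=c_\mathsf{f}$ with $\C_{i-1}\neq\C_\mathsf{f}$, the supermajority links targeting $\C_{i-1}$ (well-defined because $c_\mathsf{f}\geq 1$ forces $\C_{i-1}\neq\C_0$) and $\C_\mathsf{f}$ are two distinct targets at slot $c_\mathsf{f}$, again $\mathbf{E_1}$. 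In the complementary branches ($\C_i=\C_\mathsf{f}'$ in (a) or $\C_{i-1}=\C_\mathsf{f}$ in (c)) the path passes through $\C_\mathsf{f}$ or $\C_\mathsf{f}'$, so chain monotonicity gives $\chain_\mathsf{j}\succeq\chain_\mathsf{f}$, contradicting the hypothesis.

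The main obstacle will be the boundary-case bookkeeping: one must ensure that whenever two distinct checkpoints sit at the same slot in the argument above, both are actually backed by supermajority links (so that quorum intersection meaningfully gives $\tfrac{n}{3}$ slashable validators). This is exactly where $c_\mathsf{f}\geq 1$ combined with the strict-slot-increase on links excludes genesis-level coincidences; once this is in hand, the quorum-intersection count $\geq \tfrac{n}{3}$ is uniform across every remaining branch, and the offending fin-votes are cryptographically recoverable from the message transcript that certified the two supermajority links.
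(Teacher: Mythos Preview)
Your argument is correct and is essentially the same Casper-style accountable-safety proof the paper gives: both identify a supermajority link that either double-votes with or surrounds the finalization link $\C_\mathsf{f}\to\C_\mathsf{f}'$, and then apply quorum intersection. The only organizational difference is that you explicitly walk the justification chain $\C_0\to\cdots\to\C_k$ and look at the first index crossing $c_\mathsf{f}$, whereas the paper instead picks the minimal slot $c'_\mathsf{j}>c_\mathsf{f}$ carrying a justified checkpoint whose chain does not extend $\chain_\mathsf{f}$ and cases on the source slot of its incoming link; your crossing-index choice absorbs the paper's Case~2.1 (source slot $>c_\mathsf{f}$) into the index minimality, but the remaining cases line up one-to-one.
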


\begin{proof}
{First, we show} that no checkpoint $\C'=(\chain',c_\mathsf{f})$, with $\chain' \neq \chain_\mathsf{f}$, {can ever be justified.}
If that is not the case, clearly $\geq \frac{n}{3}$ validators are slashable for violating $\mathbf{E_1}$: For the justification of \((\chain_\mathsf{f},c_\mathsf{f})\), it is required to have a supermajority of \textsc{fin-vote}s with the chain of the target checkpoint being $\chain_\mathsf{f}$. Similarly, the justification of \((\chain',c_\mathsf{f})\) {requires to have a supermajority of \textsc{fin-vote}s with the chain of the target checkpoint being $\chain'$}. Given that we need a supermajority set of links for justification, the intersection of the sets of voters contributing to the justification of $\C_\mathsf{f}$ and $\C'$ include at least $\frac{n}{3}$ validators {which have sent two \textsc{fin-vote}s $\calS_\mathsf{f} \to \T_\mathsf{f}$ and $\calS' \to \T'$ with $\T_\mathsf{f}.c = \T'.c = c_\mathsf{f}$ and $\T_\mathsf{f}.\chain \neq \T'.\chain$ thereby violating condition $\mathbf{E_1}$}.

Now, by contradiction, assume that $\chain_\mathsf{j} \not\succeq \chain_\mathsf{f}$ and that there does not exist a set of at least $\frac{n}{3}$ validators that can be detected to have violated either $\mathbf{E_1}$ or $\mathbf{E_2}$.
Let \(c'_\mathsf{j} > c_\mathsf{f}\) be the smallest slot for which a checkpoint \(\C'_\mathsf{j} = (\chain'_\mathsf{j}, c'_\mathsf{j})\) is justified with $\chain'_\mathsf{j} \not\succeq \chain_\mathsf{f}$, \ie, either $\chain'_\mathsf{j}$ conflicts with $\chain_\mathsf{f}$ or $\chain'_\mathsf{j}$ is a strict prefix of $\chain_\mathsf{f}$. Given our assumptions we know that one such a checkpoint exists.

Let $(A_i,c_i) \to (\chain'_\mathsf{j},c'_\mathsf{j})$ and $(\chain_\mathsf{f},c_\mathsf{f}) \to (C,c_\mathsf{f}+1)$ be the \textsc{fin-vote}s involved in the justification of $(\chain'_\mathsf{j},c'_\mathsf{j})$ and in the finalization of $\chain_\mathsf{f}$, respectively, sent by a validator~$v_i$. We observe two cases:

\begin{description}
    \item[Case 1: $c'_\mathsf{j}=c_\mathsf{f}+1$.] If $\chain'_\mathsf{j}$ conflicts with $\chain_\mathsf{f}$, then $A_i \preceq \chain'_\mathsf{j}$ implies $A_i \neq \chain_\mathsf{f}$, and thus the two votes are different. Conversely, if $\chain'_\mathsf{j}$ is a strict prefix of $\chain_\mathsf{f}$, then $A_i \preceq \chain'_\mathsf{j} {\prec} \chain_\mathsf{f}$, and thus $A_i \neq \chain_\mathsf{f}${, which implies that the two votes are different in this case as well}. {Hence, in} both cases, validator~$v_i$ {violates condition $\mathbf{E_1}$ and therefore} is slashable.

    \item[Case 2: $c'_\mathsf{j}>c_\mathsf{f}+1$.] We have to consider three cases:

\begin{description}
    \item[Case 2.1: $c_i > c_\mathsf{f}$.] 
    {By the definition of justification, if the checkpoint $(\chain'_\mathsf{j},c'_\mathsf{j})$ is justified, then the checkpoint $(A_i, c_i)$ must also be justified.} However, {given that $c_i < c'_\mathsf{j}$,} under the assumption of the minimality of $c'_\mathsf{j}$, we have $\chain_\mathsf{f} \preceq A_i$. Given that $A_i \preceq \chain'_\mathsf{j}$, this leads to the contradiction $\chain_\mathsf{f} \preceq \chain'_\mathsf{j}$.

    \item[Case 2.2: $c_i = c_\mathsf{f}$.]
    As argued in the proof of the case above, the checkpoint $(A_i, c_i)$ is justified.
    As proved at the beginning of this proof, either $A_i = \chain_\mathsf{f}$, or at least $\frac{n}{3}$ validators have violated condition $\mathbf{E_1}$.
    Consequently $\chain_\mathsf{f} = A_i \preceq \chain'_\mathsf{j}$ leading to a contradiction.

    \item[Case 2.3: $c_i < c_\mathsf{f}$.] 
    Considering $c_i < c_\mathsf{f} < c_\mathsf{f}+1 < c'_\mathsf{j}$, this situation constitutes a violation of $\mathbf{E_2}$ due to the existence of surrounding voting.
\end{description}
\end{description}
{Given that justifications require supermajority link, the intersection of the set of voters involved in the finalization of $\C_\mathsf{f}$ and justification of $\C'_\mathsf{j}$ includes at least $\frac{n}{3}$ validators which, by the reasoning outlined above, would violated either $\mathbf{E_1}$ or $\mathbf{E_2}$.}
This leads to contradicting our assumptions which concludes the proof.
\end{proof}

\begin{lemma}
\label{lem:accountable-safety}
Assume $f \in [0,n]$.
If two conflicting chains are finalized according to any two respective views, then at least $\frac{n}{3}$ validators can be detected to have violated either $\mathbf{E_1}$ or $\mathbf{E_2}$.
\end{lemma}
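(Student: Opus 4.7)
The plan is to obtain Lemma~\ref{lem:accountable-safety} as an almost immediate corollary of Lemma~\ref{lem:accountable-jutification}. Suppose that two conflicting chains $\chain_1$ and $\chain_2$ are finalized, according to (possibly different) views. Let $\C_1 = (\chain_1, c_1)$ and $\C_2 = (\chain_2, c_2)$ be the corresponding finalized checkpoints. Without loss of generality, reorder them so that $c_1 \le c_2$; by symmetry this costs nothing. The key observation I will use is that, by the very definition of finalization (see function $\mathsf{F}$ in Algorithm~\ref{alg:justification-finalization}), every finalized checkpoint is in particular justified, so $\C_2$ is a justified checkpoint.

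Next, I will invoke Lemma~\ref{lem:accountable-jutification} with $\C_\mathsf{f} \coloneqq \C_1$ and $\C_\mathsf{j} \coloneqq \C_2$: the hypotheses are satisfied because $\C_1$ is finalized, $\C_2$ is justified, and $c_2 = c_\mathsf{j} \ge c_\mathsf{f} = c_1$. The lemma yields the dichotomy: either $\chain_2 \succeq \chain_1$, or at least $\tfrac{n}{3}$ validators can be detected to have violated $\mathbf{E_1}$ or $\mathbf{E_2}$. Since $\chain_1$ and $\chain_2$ are assumed to conflict, neither can be a prefix of the other, so the first alternative is ruled out and the second must hold, which is exactly the conclusion.

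I do not anticipate a genuine obstacle: the heavy lifting — namely the case analysis on how a late justified checkpoint can avoid extending an earlier finalized one, and the counting of slashable validators from intersecting supermajority quorums — has already been carried out inside Lemma~\ref{lem:accountable-jutification}. The only small points to double-check are (i) that Lemma~\ref{lem:accountable-jutification} is stated for two (possibly distinct) views, which matches the ebb-and-flow setting where the two finalizations may be witnessed in different validators' views, and (ii) that the without-loss-of-generality ordering of $c_1$ and $c_2$ is truly symmetric, which it is because "two conflicting finalized chains" is a symmetric predicate in $\chain_1$ and $\chain_2$.
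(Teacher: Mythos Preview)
Your proposal is correct and mirrors the paper's proof: order the two finalized checkpoints by slot, use that finalization implies justification, and invoke Lemma~\ref{lem:accountable-jutification} with the earlier one as $\C_\mathsf{f}$ and the later one as $\C_\mathsf{j}$. The only nuance the paper makes explicit is that, by definition, a finalized \emph{chain} is merely a prefix of some finalized checkpoint's chain, so it first passes to checkpoints $(\chain'_i,c_i)$ with $\chain'_i \succeq \chain_i$ (noting that these extensions still conflict) before applying the lemma; your write-up implicitly assumes $\chain_i$ is itself the chain component of a finalized checkpoint, which is a harmless shortcut.
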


\begin{proof}
Let $\chain_1$ and $\chain_2$ be two conflicting chains according to view $\V_1$ and $\V_2$ respectively.
By the definition of finalized chains, this implies that there exists two checkpoints $\C_1 = (\chain'_1, c_1)$ and $\C_2 = (\chain'_2, c_2)$ that are finalized according to view $\V_1$ and $\V_2$ respectively, such that $\chain'_1 \succeq \chain_1$ conflicts with $\chain'_2 \succeq \chain_2$.
Assume without loss of generality that $c_2 \geq c_1$.
Given that finalization implies justification, we can apply \Cref{lem:accountable-jutification} to conclude that if $\chain'_2 \not\succeq \chain'_1$, then at least $\frac{n}{3}$ validators can be detected to have violated either $\mathbf{E_1}$ or $\mathbf{E_2}$.
\end{proof}

\section{Analysis of \protocol}
\label{appendix:analysis-3sf}

\subsection{Synchrony}\label{sec:analysis-tob-sync}

\def\FFGExec{\ensuremath{{e_{\textsc{fin}}}}}
\def\NoFFGExec{\ensuremath{{e_{\textsc{fin}}}}}

Throughout this part of the analysis, we assume $\GST=0$, that less than one-third of the entire validator set is ever controlled by the adversary (\ie, $f<\frac{n}{3}$), and that Constraint~\eqref{eq:pi-sleepiness} holds.

We start by formalizing the concept of validators sending the same messages except only for their finality component.
Moreover, we define the equivalence between an execution in \Cref{algo:prob-ga-fast} and one in \Cref{alg:3sf-tob-noga}.
These definitions will be leveraged upon in the subsequent Lemmas and Theorems.

\begin{definition}\label{def:dyn-equiv}
  We say that two messages are \emph{dynamically-equivalent} if and only if they differ in at most their finality-related components, \ie, they either are \textsc{propose} messages and differ at most in the greatest justified checkpoint component or they are \textsc{vote} messages and differ at most in the \textsc{fin-vote} component.
  
  We say that two sets of messages are dynamically-equivalent if and only if they have the same set of equivalence classes under type-equivalence, \ie,
  for any message $m$ in any of the two sets, there exists a message $m'$ in the other set such that $m$ and $m'$ are dynamically-equivalent.

  Given two executions $e$ and $e'$ and round $r$, we say that the two   executions are \emph{honest-output-dynamically-equivalent up to round $r$} if and only if for any round $r_j<r$ and validator $v_j$ honest in round $r_j$ in both executions, the set of messages sent by $v_j$ in round $r_j$ in execution $e$ is dynamically-equivalent to the set of messages sent by $v_j$ in round $r_j$ in execution $e'$.
\end{definition}

In the remainder of this section, we use the notation $\specifyExec{e}{\X}$, where $\X$ is any variable or definition, 
to explicitly indicate the value of $\X$ in execution $e$.
Due to the difference in the fast confirmation function employed by the two algorithms, we do need to explicitly specify the meaning of $\mfcproposeExec{e}{t}_i$ with $t$ being any slot.
If $e$ is an execution of \Cref{algo:prob-ga-fast}, then $\mfcproposeExec{e}{t}_i$ corresponds the definition provided in \Cref{appendix:analysis-prob-ga-fast}.
If $e$ is an execution of \Cref{alg:3sf-tob-noga}, then $\mfcproposeExec{e}{t}_i := \specifyExec{e}{\mfc}(\V^{\proposing{t}}_i,\V^{\proposing{t}}_i,\chain^C,t)$ with $(\chain^C,\cdot) = \texttt{fastconfirm}(\V^\proposing{t}_i,t-1)$.

{
\begin{definition}\label{def:exec-equiv}
    Let $\FFGExec$ by a compliant execution of \Cref{alg:3sf-tob-noga} and $\NoFFGExec$ be a compliant execution of \Cref{algo:prob-ga-fast}.
    We say that $\FFGExec$ and $\NoFFGExec$ are \emph{\mfc-equivalent} if and only if the following constraints hold:
  \begin{enumerate}

    \item\label[condition]{cond:0b-2-t} $\FFGExec$ and $\NoFFGExec$ are honest-output-dynamically-equivalent up to any round
    \item for any slot $t_i$,
    \begin{enumerate}[label*=\arabic*]
      \item \label[condition]{cond:3-2} $\mfcproposeNoFFG{t_i}_i = \mfcproposeFFG{t_i}_i$
      \item\label[condition]{cond:4-2} $\mfcvoteNoFFG{t_i}_i = \mfcvoteFFG{t_i}_i$
      \item\label[condition]{cond:5-2} for any slot $t_j \leq t_i$ and validator $v_j \in H_{\voting{(t_j)}}$, there exists a slot $t_k \leq t_j$ and a validator $v_k \in H_{\voting{(t_k)}}$ such that $\Chain^{\voting{t_k}}_k \succeq \chainava^{\voting{t_j}}_j$.
      \item\label[condition]{cond:6-2} for any slot $t_j \leq t_i$ and validator $v_j \in H_{\voting{(t_j)}}$, there exists a slot $t_k \leq t_j$ and a validator $v_k \in H_{\voting{(t_k)}}$ such that $\Chain^{\voting{t_k}}_k \succeq \chainava^{\fastconfirming{t_j}}_j$.
    \end{enumerate}
    where chain $\chainava_j$ is in \FFGExec{}, and chain $\Chain_i$ is in \NoFFGExec{}.
  \end{enumerate}
\end{definition}
}

{
\begin{lemma}\label{lem:equiv-ga2}
    Let $\FFGExec$ by any compliant execution of \Cref{alg:3sf-tob-noga}.  
    There exists a compliant execution $\NoFFGExec$ of \Cref{algo:prob-ga-fast} 
    such that $\FFGExec$ and $\NoFFGExec$ are \mfc-equivalent.
\end{lemma}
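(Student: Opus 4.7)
The plan is to construct $\NoFFGExec$ by replaying $\FFGExec$: fix the same adversarial strategy, sleep/wake schedule, message-delivery schedule, proposer election, and transaction-pool content; replace every honest-validator message by the dynamically-equivalent message that \Cref{algo:prob-ga-fast} would emit given the same view (its \textsc{vote}/\textsc{propose} body modulo finality components); and project every adversarial message by stripping its \textsc{fin-vote} or $\GJ_p$ component. Because \eqref{eq:pi-sleepiness} constrains only active honest and Byzantine validators, $\NoFFGExec$ inherits compliance from $\FFGExec$. The argument then proceeds by strong induction on rounds $r$ with the joint invariant that up to round $r$ the two executions are honest-output-dynamically-equivalent (\cref{cond:0b-2-t}) and that, for every slot $t_i$ with $\voting{t_i}\le r$, each honest validator's $\Vfrozen_i$ and $\chainfrozen_i$ agree across the two executions at round $\merging{t_i}$; combined with identical base-chain inputs this immediately yields \cref{cond:3-2,cond:4-2}. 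Since every operator used in \Cref{algo:prob-ga-fast} (the equivocation/expiry/LMD filters, $\mfc$, and $\texttt{fastconfirmsimple}$) ignores the \textsc{fin-vote} payload, it suffices to show that the $\GJ$-based bookkeeping added in \Cref{alg:3sf-tob-noga} never perturbs the values these operators compute.

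That reduction --- the only substantive work --- is where I expect the main obstacle to lie, because the added machinery has two potentially disruptive effects: the $\GJ(\V).\chain$ guard inside $\texttt{fastconfirm}$, and the rewrite of $\chainfrozen_i$ to $\GJ_p.\chain$ at lines~\ref{line:algtob-prop-check-chfrozen-gj}--\ref{line:algtob-prop-set-ch-frozen-to-gj}. I would neutralise both by an auxiliary claim, proved jointly with the main induction: for every honest validator $v_i$ and round $r$, the chain $\GJ(\V_i^r).\chain$ is a prefix of the chain $\mfc$ would return on $v_i$'s current view. If this holds, the $\texttt{fastconfirm}$ guard is vacuous on honest inputs (the fast-confirmed chain already extends $\GJ$), the assignment at line~\ref{line:algtob-prop-set-ch-frozen-to-gj} is immediately subsumed by line~\ref{algo:3sf-noga-setchainfrozen-to-chainc}, and raising the base-chain argument of $\mfc$ to an existing prefix of its output does not change its value; consequently $\texttt{fastconfirm}$ and $\texttt{fastconfirmsimple}$ coincide on honest calls and the frozen state evolves identically in the two executions.

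To establish the auxiliary claim I would use that a justified checkpoint $(\chain,c)$ accumulates $\tfrac{2}{3}n$ \textsc{fin-vote}s with target $(\chain,c)$, more than $\tfrac{n}{3}$ of which are honest. By inspection of lines~\ref{line:algtob-set-target-checkpoint} and \ref{line:algtob-rejustification}, each such honest \textsc{fin-vote} targets either the issuer's $\chainava_i$ at its voting round (fresh justification) or a chain that was already justified (rejustification). An inductive descent on the slot of first justification, combined with \cref{cond:5-2} at earlier rounds and \Cref{lem:vote-proposal-fast-conf,lem:one-fast-confirm-all-vote-fast-conf,lem:ga-confirmed-always-canonical} applied in $\NoFFGExec$, shows that $\chain$ is a prefix of every subsequent $\mfc$ output in $\NoFFGExec$ and hence, by the main invariant, in $\FFGExec$ as well.

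Finally, \cref{cond:5-2,cond:6-2} follow by inspecting how $\chainava_j$ is set in \Cref{alg:3sf-tob-noga}. At line~\ref{line:algtob-vote-chainava} it is the maximum over three candidates that are prefixes of $\mfcvote{t_j}_j$: its previous value (handled by the induction hypothesis), $(\mfcvote{t_j}_j)^{\lceil\kappa}$ (which by \cref{cond:4-2} matches $\Chain_j$ at line~\ref{line:algga-no-fin-vote-chainava} of $\NoFFGExec$), and $\GJfrozen_j.\chain$ (covered by the auxiliary claim, which traces this chain to an earlier fast- or $\kappa$-deep confirmation in $\NoFFGExec$). At line~\ref{line:algtob-set-chaava-to-bcand} the fast-confirmation assignment coincides with line~\ref{line:algga-no-ffg-set-chaava-to-bcand} of $\NoFFGExec$ because $\texttt{fastconfirm}$ and $\texttt{fastconfirmsimple}$ agree on honest inputs. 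In each case $\chainava_j$ in $\FFGExec$ is a prefix of some $\Chain^{\voting{t_k}}_k$ in $\NoFFGExec$ with $t_k\le t_j$, closing the induction.
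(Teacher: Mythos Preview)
Your overall strategy matches the paper's: construct $\NoFFGExec$ by mirroring the corruption schedule, sleep schedule, proposer election, and (dynamically-equivalent) message delivery of $\FFGExec$; then run a slot-indexed induction whose key auxiliary fact is that every chain justified in an honest validator's view is already a prefix of that validator's $\mfc$ output in $\NoFFGExec$. This is precisely the content of the paper's added inductive conditions, and your tracing argument (honest \textsc{fin-vote} targets come from some honest $\chainava$, which in turn is dominated by some $\Chain$ in $\NoFFGExec$ by \cref{cond:5-2}, whence \Cref{lem:ga-confirmed-always-canonical} carries the prefix forward) is essentially the paper's argument.

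There is, however, a concrete error in the invariant you propose to carry. You assert that $\chainfrozen_i$ agrees across the two executions at $\merging{t_i}$ and that ``$\texttt{fastconfirm}$ and $\texttt{fastconfirmsimple}$ coincide on honest calls.'' Both are false. In any slot with no fast confirmation, $\texttt{fastconfirmsimple}$ returns $(\genesis,\emptyset)$ whereas $\texttt{fastconfirm}$ returns $(\GJ(\V).\chain,\emptyset)$, so $\chainfrozen_i$ differs between the two executions. Likewise, a Byzantine proposer may send $\chain^C_p$ and $\GJ_p$ with $\chain^C_p \not\succeq \GJ_p.\chain$ (the validity check only requires $Q^C_p$ to certify $\chain^C_p$ and $\mathsf{J}(\GJ_p,\V_i)$ to hold), so line~\ref{line:algtob-prop-set-ch-frozen-to-gj} is not always subsumed by line~\ref{algo:3sf-noga-setchainfrozen-to-chainc}. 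The paper does not maintain equality of $\chainfrozen_i$; it maintains only equality of the $\mfc$ \emph{outputs} (\cref{cond:3-2,cond:4-2}), and explicitly handles the case where $\chainava$ or $\chainfrozen$ is set to $\GJ(\V).\chain$ rather than to a fast-confirmed chain.

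The fix is already in your hands: drop the $\chainfrozen_i$-equality invariant and instead argue directly that, whatever base chain is fed to $\mfc$ in either execution, it is a prefix of the genesis-based $\mfc$ output on the (dynamically-equivalent) view. In $\NoFFGExec$ this holds because the base is $\genesis$ or a fast-confirmed chain from the previous slot; in $\FFGExec$ the base is one of those or $\GJ.\chain$, and your auxiliary claim covers the latter. Then the observation you already state --- raising the base chain to a prefix of the output leaves $\mfc$ unchanged --- gives \cref{cond:3-2,cond:4-2} without requiring the frozen states to match.
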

}
\def\ANoFFG{\ensuremath{A^{\NoFFGExec}}}
\def\AFFG{\ensuremath{A^{\FFGExec}}}
\begin{proof}
  Let $\FFGExec$ be any compliant execution of \Cref{alg:3sf-tob-noga} and let $\AFFG$ be the adversary in such an execution.
  Below, we specify the decisions made by an adversary $\ANoFFG$ in an execution $\NoFFGExec$ of \Cref{algo:prob-ga-fast}.
  Later we show that this leads to $\NoFFGExec$ satisfying all the conditions in the Lemma.
  \begin{enumerate}[label=(\roman*)]
    \item For any round $r$, the set of validators corrupted by $\ANoFFG$ in round $r$ corresponds to set of validators corrupted by $\AFFG$ in the same round $r$. 
    \item For any round $r$, the set of validators put to sleep by $\ANoFFG$ in round $r$ corresponds to set of validators put to sleep by $\AFFG$ in the same round $r$. 
    \item For any slot $t$, if an honest validator is the proposer for slot $t$ in $\FFGExec$, then it is also the proposer for slot $t$ in $\NoFFGExec$.
    \item\label{cond:adv-decision-4} For any round $r$, if $\FFGExec$ and $\NoFFGExec$ are honest-output-dynamically-equivalent up to round $r$, then $\ANoFFG$ schedules the  delivery of messages to the validators honest in round $r$ such that, for any validator $v_i$ honest in round $r_i$, the set of messages received by $v_i$ in execution $\FFGExec$ is dynamically-equivalent to the set of messages received by $v_i$ in execution $\NoFFGExec$.
  \end{enumerate}  

  Now, we prove by induction on $t_i$ that the execution $\NoFFGExec$ induced by $\ANoFFG$ satisfies all \Cref{def:exec-equiv}'s conditions.
  To do so, we add the following conditions to the inductive hypothesis.
  \begin{enumerate}[start=3]
    \item[]%
    \begin{enumerate}[start=5,label*=\arabic*]
      \item \label[condition]{cond:1-2} For any $\J$ such that $\mathsf{J}(\J,\VFFG^{\proposing{t_i}}_i)$, $\mfcproposeNoFFG{t_i}_i \succeq \J.\chain$.
      \item\label[condition]{cond:2-2} For any $\J$ such that $\mathsf{J}(\J,\VFFG^{\voting{t_i}}_i)$, $\mfcvoteNoFFG{t_i}_i \succeq \J.\chain$.
    \end{enumerate}
  \end{enumerate} 
  and rephrase \Cref{cond:0b-2-t} as follows
  \begin{enumerate}[start=3]
    \item[]%
    \begin{enumerate}[start=7,label*=\arabic*]
      \item\label[condition]{cond:0bt-2}  $\FFGExec$ and $\NoFFGExec$ are honest-output-dynamically-equivalent up to round $4\Delta (t_i+1)$.
    \end{enumerate}
  \end{enumerate}

  Note that \Cref{cond:0bt-2} holding for any $t_i$ implies \Cref{cond:0b-2-t}.

  To reduce repetitions in the proof, we treat the base case of the induction within the inductive step.
  Also, we let $\T^\voting{t_\ell}_j$ be the target checkpoint used by validator $v_j$ when sending an \textsc{fin-vote} in slot $t_\ell$.

  Then, take any $t_i \geq 0$ and assume that, if $t_i > 0$, then the Lemma and the additional conditions \Cref{cond:1-2,cond:2-2,cond:0bt-2} hold for slot $t_i-1$.
  We prove that they also hold for slot $t_i$.
  We start by proving \Cref{cond:1-2,cond:2-2} for slot $t_i$, 
  then we move to \Cref{cond:3-2,cond:4-2,cond:5-2,cond:6-2} in this order and conclude with proving \Cref{cond:0bt-2}.

 \begin{description}
  \item[\Cref{cond:1-2,cond:2-2}.]
  Let $\J$ be any checkpoint such that  $\mathsf{J}(\J,\VFFG^{\voting{t_i}}_i)$.
  Because the target checkpoint of an \textsc{fin-vote} sent by an honest validator $v_\ell$ in round $r_{\ell}$ corresponds to $(\makeFFG{\T}^{r_{\ell}}_\ell.\chain,\cdot)$ and we assume $f<\frac{n}{3}$,
  there exists a slot $t_k \in [0,t_i)$ and a validator $v_k \in H_{\voting{(t_k)}}$ such that $\T^{\voting{t_k}}_k \succeq \J.\chain$.
  By the inductive hypothesis, \Cref{cond:5-2} implies that there exists a slot $t_m \in [0,t_k]$ and validator $v_m \in H_{\voting{(t_m)}}$ such that $\Chain^{\voting{t_m}}_m \succeq \chainava^{\voting{t_k}}_k$.
  \sloppy{Also, due to \Cref{line:algtob-vote-chainava,line:algtob-rejustification,line:algtob-set-target-checkpoint}, celarly  $\chainava^{\voting{t_k}}_k \succeq \T^{\voting{t_k}}_k$.
  Given that $t_m \leq t_k < t_i$, from \Cref{lem:ga-confirmed-always-canonical}, we know that $\mfcproposeNoFFG{t_i} \succeq \Chain^{\voting{t_m}}_m \succeq \chainava^{\voting{t_k}}_k \succeq \T^{\voting{t_k}}_k \succeq \J.\chain$ and that $\mfcvoteNoFFG{t_i} \succeq \Chain^{\voting{t_m}}_m \succeq \chainava^{\voting{t_k}}_k \succeq \T^{\voting{t_k}}_k \succeq \J.\chain$.
  Given that the set of justified checkpoints in $\VFFG^{\voting{t_i}}_i$ is a superset of the set of justified checkpoints in $\VFFG^{\proposing{t_i}}_i$, both \Cref{cond:1-2} and \Cref{cond:2-2} are proven.}
  \item[\Cref{cond:3-2,cond:4-2}.]
  \sloppy{We know that $\FFGExec$ and $\NoFFGExec$ are honest-output-dynamically-equivalent up to round $\proposing{t_i}$.}
  If $t_i > 0$, then, by the inductive hypothesis, \Cref{cond:0bt-2} implies this, if $t=0$, then this is vacuously true.
  Hence, due to \cref{cond:adv-decision-4} of $\ANoFFG$'s set of decisions, $\VFFG^{\proposing{t_i}}_i$ and $\VNoFFG^{\proposing{t_i}}_i$ are dynamically-equivalent.
  From this follows that $\FFGExec$ and $\NoFFGExec$ are also honest-output-dynamically-equivalent up to round $\voting{t_i}$ and, therefore, $\VFFG^{\voting{t_i}}_i$ and $\VNoFFG^{\voting{t_i}}_i$ are dynamically-equivalent as well.
  This and \Cref{cond:1-2,cond:2-2} imply \Cref{cond:3-2,cond:4-2}.
  \item[\Cref{cond:5-2}.]
  \sloppy{By \Cref{line:algtob-vote-chainava}, 
  $\chainava^{\voting{t_i}}_i \in \{\chainava^{\fastconfirming{(t_i-1)}}_i,\GJfrozen[\voting{t_i}]_i.\chain,(\mfcvoteFFG{t_i}_i)^{\lceil \kappa}\}$.
  Let us consider each case.}
  \begin{description}
    \item[Case 1: $\chainava^{\voting{t_i}}_i = \chainava^{\fastconfirming{(t_i-1)}}_i$.] 
    This case implies $t_i > 0$.
    Hence, we can apply the inductive hypothesis.
    Specifically, \Cref{cond:6-2} for slot $t_i-1$ implies 
    \Cref{cond:5-2} for slot $t_i$.
    \item[Case 2: {$\chainava^{\voting{t_i}}_i = \GJfrozen[t_i]_i.\chain$}]
    Due to \Cref{line:algtob-merge-ch-frozen,line:algtob-prop-if,line:algtob-prop-merge-gj}, $\mathsf{J}(\GJfrozen[t_i]_i, \V^{\voting{t_i}})$. 
    Hence, by following the reasoning applied when discussing \Cref{cond:1-2,cond:2-2},
    there exists a slot $t_k \in [0,t_i)$ and validator $v_k \in H_{\voting{(t_k)}}$ such that  $\T^{\voting{t_k}}_k\succeq \chainava^{\voting{t_i}}_i$ from which we can conclude that $\chainava^{\voting{t_k}}_k\succeq \chainava^{\voting{t_i}}_i$.

    Therefore, we can apply the inductive hypothesis.
    Specifically, \Cref{cond:5-2} for slot $t_k$ implies 
    \Cref{cond:5-2} for slot $t_i$.
    \item[Case 3: {\normalfont $\chainava^{\voting{t_i}}_i = (\mfcvoteFFG{t_i}_i)^{\lceil \kappa}$}.] 
    \Cref{line:algga-no-fin-vote-chainava} of \Cref{algo:prob-ga-fast} implies that $\Chain^{\voting{t_i}}_i \succeq \mfcvoteNoFFG{t_i}_i$.
    Then, 
    \Cref{cond:4-2} implies that $\Chain^{\voting{t_i}}_i \succeq \chainava^{\voting{t_i}}_i$.
  \end{description}
  \item[\Cref{cond:6-2}.]
  By \Crefrange{line:algotb-at-confirm}{line:algtob-set-chaava-to-bcand}, $\chainava^{\fastconfirming{t_i}}_i \in \{\chainava^{\voting{t_i}}_i, \GJ(\VFFG^{\fastconfirming{t_i}}).\chain, \chain^C\}$ with $(\chain^C, Q) = \texttt{fastconfirm}(\VFFG^{\fastconfirming{t_i}}_i,t_i) \land Q\neq \emptyset$.
  Let us consider each case.
  \begin{description}
    \item[Case 1: $\chainava^{\fastconfirming{t_i}}_i = \chainava^{\voting{t_i}}_i$.] In this case, \Cref{cond:5-2} implies \Cref{cond:6-2}.
    \item[Case 2: \normalfont$\chainava^{\fastconfirming{t_i}}_i = \GJ(\VFFG^{\fastconfirming{t_i}}_i).\chain$.] 
    By following the reasoning applied when discussing \Cref{cond:1-2,cond:2-2}, this means that there exists a slot $t_k \in [0,t_i]$ and validator $v_k \in H_{\voting{(t_ k)}}$ such that $\chainava^{\voting{t_k}}_k \succeq \chainava^{\fastconfirming{t_i}}_i$.
    Then, \Cref{cond:5-2} for slot $t_k$ implies 
    \Cref{cond:6-2} for slot $t_i$.
    \item[Case 3: $\chainava^{\fastconfirming{t_i}}_i = \chain^C$ with $(\chain^C, Q) = \texttt{fastconfirm}(\VFFG^{\fastconfirming{t_i}}_i,t_i) \land Q\neq \emptyset$.]
    This implies that $\VFFG^{\fastconfirming{t_i}}_i$ includes a quorum of \textsc{vote} message for chain $\chain^C$.
    Given \Cref{cond:4-2},  $\VNoFFG^{\fastconfirming{t_i}}_i$ also includes a quorum of \textsc{vote} message for chain $\chain^C$.
    Hence, $\Chain^{\fastconfirming{t_i}}_i = \chainava^{\fastconfirming{t_i}}_i$. 
    \end{description}
  \item[\Cref{cond:0bt-2}.]
  As argued in the proof of \Cref{cond:3-2,cond:4-2}, we know that $\FFGExec$ and $\NoFFGExec$ are honest-output-dynamically-equivalent up to round $\proposing{t_i}$.
  This, \Cref{cond:3-2,cond:4-2} and \cref{cond:adv-decision-4} of $\ANoFFG$'s set of decisions clearly imply \Cref{cond:0bt-2} up to round $\proposing{(t_i+1)}$.\qed
 \end{description}
\end{proof}

In the following Lemmas and Theorems, unless specified, we refer to executions of \Cref{alg:3sf-tob-noga}.

\begin{lemma}[Analogous of \Cref{lem:keep-voting-tob-fast-conf}]
  \label{lem:keep-voting-tob-fast-conf-ffg}
  If, in slot $t$, all validators in $H_{\voting{t}}$ send \textsc{vote} messages for chains extending chain \(\chain\),
  then, for any validator $v_i\in H_{\voting{(t+1)}}$, $\mfcpropose{(t+1)}_i \succeq \chain$ and $\mfcvote{(t+1)}_i \succeq \chain$, which implies that, in slot $t+1$, all validators in $H_{\voting{(t+1)}}$ send \textsc{vote} messages for chains extending \(\chain\).
\end{lemma}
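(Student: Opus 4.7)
The plan is to reduce this to the already-proved \Cref{lem:keep-voting-tob-fast-conf} by invoking the equivalence machinery of \Cref{lem:equiv-ga2}. The two algorithms differ only in their finality-related payloads, and \Cref{def:dyn-equiv} was designed so that the chain component of each \textsc{vote} and the output of $\mfc$ are preserved across dynamically-equivalent executions. So the work is essentially bookkeeping.

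Concretely, I would fix any compliant execution \FFGExec{} of \Cref{alg:3sf-tob-noga} satisfying the hypothesis and apply \Cref{lem:equiv-ga2} to obtain an \mfc-equivalent compliant execution \NoFFGExec{} of \Cref{algo:prob-ga-fast}. By \Cref{cond:0b-2-t}, every \textsc{vote} sent by a validator in $H_{\voting{t}}$ during slot $t$ in \FFGExec{} has a dynamically-equivalent counterpart in \NoFFGExec{}; since such pairs of \textsc{vote} messages can differ only in their \textsc{fin-vote} components, the chain voted for coincides, and the hypothesis of \Cref{lem:keep-voting-tob-fast-conf} holds in \NoFFGExec{} as well. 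Applying that lemma yields $\mfcproposeNoFFG{(t+1)}_i \succeq \chain$ and $\mfcvoteNoFFG{(t+1)}_i \succeq \chain$ for every $v_i \in H_{\voting{(t+1)}}$, and \Cref{cond:3-2,cond:4-2} transfer these inequalities back to $\mfcproposeFFG{(t+1)}_i$ and $\mfcvoteFFG{(t+1)}_i$. The voting conclusion then follows from \Cref{line:algtob-vote-comm,line:algtob-vote}: the chain component of the \textsc{vote} of any $v_i \in H_{\voting{(t+1)}}$ is either $\mfcvoteFFG{(t+1)}_i$ itself, or a proposed chain from slot $t{+}1$ extending it — both extending $\chain$.

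I do not expect any real obstacle. The only point requiring attention is making sure the hypothesis — stated in terms of \textsc{vote} messages in \FFGExec{} — transfers cleanly to \NoFFGExec{}, which relies on the precise formulation of \Cref{def:dyn-equiv} restricting differences to finality components. Everything else is a direct appeal to \Cref{lem:equiv-ga2} and \Cref{lem:keep-voting-tob-fast-conf}.
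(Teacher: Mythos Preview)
Your proposal is correct and follows essentially the same route as the paper: invoke \Cref{lem:equiv-ga2} to pass to an \mfc-equivalent execution of \Cref{algo:prob-ga-fast}, transfer the voting hypothesis via dynamic equivalence, apply \Cref{lem:keep-voting-tob-fast-conf}, and pull the $\mfc$ conclusions back via \Cref{cond:3-2,cond:4-2}. Your write-up is in fact slightly more explicit than the paper's about why the hypothesis transfers and why the final voting conclusion follows, but the argument is the same.
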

\begin{proof}
  Take any compliant execution $\FFGExec$ of \Cref{alg:3sf-tob-noga} where in slot $t$, all validators in $H_{\voting{t}}$ send \textsc{vote} messages for chains extending chain \(\chain\).
  Let $\NoFFGExec$ be an \mfc-equivalent execution of \Cref{algo:prob-ga-fast} which \Cref{lem:equiv-ga2} proves to exist.
  It is clear from the definition of $\mfc$-equivalent honest-output-dynamically-equivalent executions that in slot $t$ of execution $\NoFFGExec$ all validators in $H_{\voting{t}}$ send \textsc{vote} messages for chains extending chain \(\chain\) as well.
  Hence, by \Cref{lem:keep-voting-tob-fast-conf}, in execution $\NoFFGExec$, for any validator $v_i\in H_{\voting{(t+1)}}$, $\mfcpropose{(t+1)}_i \succeq \chain$ and $\mfcvote{(t+1)}_i \succeq \chain$.
  From \Cref{cond:3-2,cond:4-2} of \Cref{lem:equiv-ga2} then it follows that this holds for execution $\FFGExec$ as well, which implies that, in slot $t+1$, all validators in $H_{\voting{(t+1)}}$ send \textsc{vote} messages for chains extending \(\chain\).
\end{proof}

\begin{lemma}[Analogous of \Cref{lem:vote-proposal-fast-conf}]
  \label{lem:vote-proposal-fast-conf-ffg}
  Let $t$ be a slot with an honest proposer $v_p$ and assume that~$v_p$ sends a $[\textsc{propose}, \chain_p, \chain^C_p, Q^C_p, t, v_p]$ message. 
  Then, for any slot $t'\geq t$, all validators in $H_{\voting{(t')}}$ send a \textsc{vote} message for a chain extending $\chain_p$.
  Additionally, for any slot $t'' > t$ and any validator $v_i\in H_{\voting{(t'')}}$, $\mfcpropose{t''}_i\succeq \chain_p$ and $\mfcvote{t''}_i\succeq \chain_p$.
\end{lemma}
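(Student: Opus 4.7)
The strategy mirrors the proof of \Cref{lem:keep-voting-tob-fast-conf-ffg}: reduce to \Cref{lem:vote-proposal-fast-conf} (the analogous statement for \Cref{algo:prob-ga-fast}) by invoking the $\mfc$-equivalence of \Cref{lem:equiv-ga2}. First, given any compliant execution $\FFGExec$ of \Cref{alg:3sf-tob-noga} in which an honest proposer $v_p$ sends the \textsc{propose} message $[\textsc{propose}, \chain_p, \chain^C_p, Q^C_p, \GJ_p, t, v_p]$, I apply \Cref{lem:equiv-ga2} to obtain a compliant execution $\NoFFGExec$ of \Cref{algo:prob-ga-fast} that is $\mfc$-equivalent to $\FFGExec$. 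Because $\FFGExec$ and $\NoFFGExec$ are honest-output-dynamically-equivalent up to every round, and because by the adversary construction inside the proof of \Cref{lem:equiv-ga2} the validator $v_p$ remains the honest proposer of slot $t$ in $\NoFFGExec$, $v_p$ sends in $\NoFFGExec$ a \textsc{propose} message dynamically-equivalent to the one it sends in $\FFGExec$; by \Cref{def:dyn-equiv} that message carries the very same block payload $\chain_p$.

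Next, I invoke \Cref{lem:vote-proposal-fast-conf} on $\NoFFGExec$ to conclude, for every slot $t' \geq t$, that every validator in $H_{\voting{(t')}}$ sends in $\NoFFGExec$ a \textsc{vote} message for some chain extending $\chain_p$ and, for every slot $t'' > t$ and every $v_i \in H_{\voting{(t'')}}$, that $\mfcproposeNoFFG{t''}_i \succeq \chain_p$ and $\mfcvoteNoFFG{t''}_i \succeq \chain_p$. To transfer the voting conclusion back to $\FFGExec$, I again invoke honest-output-dynamic-equivalence: dynamically-equivalent \textsc{vote} messages share the same chain component (they differ at most in their embedded \textsc{fin-vote}), so each validator in $H_{\voting{(t')}}$ also sends in $\FFGExec$ a \textsc{vote} message for a chain extending $\chain_p$. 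For the fork-choice conclusion I apply \Cref{cond:3-2} and \Cref{cond:4-2} of $\mfc$-equivalence directly, obtaining $\mfcpropose{t''}_i = \mfcproposeNoFFG{t''}_i \succeq \chain_p$ and $\mfcvote{t''}_i = \mfcvoteNoFFG{t''}_i \succeq \chain_p$.

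The principal obstacle is pinning down the first step, namely that the \textsc{propose} message in $\NoFFGExec$ truly carries the same block payload $\chain_p$ rather than merely inducing matching $\mfc$ outputs. This relies on (i) $v_p$'s view at $\proposing{t}$ being dynamically-equivalent in the two executions, which follows from honest-output-dynamic-equivalence and the construction of the adversary in \Cref{lem:equiv-ga2}; (ii) the equality of $\mfc$ inputs and outputs at $\proposing{t}$, which is exactly \Cref{cond:3-2} specialized to $v_p$; and (iii) $\mathsf{Extend}$ being deterministic on the shared transaction pool $\txpool^{\proposing{t}}$. Once the propose payloads are shown identical, the rest of the argument is a mechanical application of the equivalence conditions, exactly parallel to the treatment of \Cref{lem:keep-voting-tob-fast-conf-ffg}.
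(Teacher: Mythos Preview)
Your proposal is correct and follows essentially the same approach as the paper: invoke \Cref{lem:equiv-ga2} to obtain an $\mfc$-equivalent execution of \Cref{algo:prob-ga-fast}, observe that honest-output-dynamic-equivalence forces $v_p$ to send a dynamically-equivalent \textsc{propose} message (hence with the same $\chain_p$), apply \Cref{lem:vote-proposal-fast-conf}, and transfer back via \Cref{cond:0b-2-t,cond:3-2,cond:4-2}. Your extra care in justifying that the propose payload is identical (via determinism of $\mathsf{Extend}$ and \Cref{cond:3-2}) is more explicit than the paper's ``it is clear from the definition,'' but note that you need not reach into the adversary construction inside the proof of \Cref{lem:equiv-ga2}: the identity of the propose payload already follows from \Cref{def:dyn-equiv} applied to the message sets, since dynamically-equivalent \textsc{propose} messages share everything except the $\GJ$ component.
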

\begin{proof}
  Take any compliant execution $\FFGExec$ of \Cref{alg:3sf-tob-noga} where $v_p$, the proposer of slot $t$, is honest and sends a $[\textsc{propose}, \chain_p, \chain^C_p, Q^C_p t, v_p]$ message.
  Let $\NoFFGExec$ be an $\mfc$-equivalent execution of \Cref{algo:prob-ga-fast} which \Cref{lem:equiv-ga2} proves to exist.
  It is clear from the definition of $\mfc$-equivalent honest-output-dynamically-equivalent executions that $v_p$ sends the same \textsc{propose} message in slot $t$ of execution $\NoFFGExec$.
  Hence, by \Cref{lem:vote-proposal-fast-conf}, in execution $\NoFFGExec$, for any slot $t'\geq t$, all validators in $H_{\voting{(t')}}$ send a \textsc{vote} message for a chain extending $\chain_p$, and, for any slot $t'' > t$ and any validator $v_i\in H_{\voting{(t'')}}$, $\mfcpropose{t''}_i\succeq \chain_p$ and $\mfcvote{t''}_i\succeq \chain_p$.
  From \Cref{cond:0b-2-t,cond:3-2,cond:4-2} of \Cref{lem:equiv-ga2} then it follows that this holds for execution $\FFGExec$ as well.
\end{proof}

\begin{lemma}[Analogous of \Cref{lem:ga-confirmed-always-canonical}]\label{lem:ga-confirmed-always-canonical-ffg}
  Let $r_i$ be any round and $r_j$ be any round such that $r_j\ge r_i$ and $r_j \in \{\proposing{r_j},\voting{r_j}\}$. Then, for any validator~$v_i$ honest in round $r_i$ and any validator $v_j \in H_{r_j}$, 
  $\chainava^{r_i}_i \preceq \mfc^{r_j}_j$.
\end{lemma}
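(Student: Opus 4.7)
The plan is to follow the reduction template established in the proofs of \Cref{lem:keep-voting-tob-fast-conf-ffg} and \Cref{lem:vote-proposal-fast-conf-ffg}. Given a compliant execution $\FFGExec$ of \Cref{alg:3sf-tob-noga}, I would first invoke \Cref{lem:equiv-ga2} to obtain an \mfc-equivalent execution $\NoFFGExec$ of \Cref{algo:prob-ga-fast}, then import the statement into $\NoFFGExec$, apply the already-established \Cref{lem:ga-confirmed-always-canonical}, and transfer the resulting prefix relation back to $\FFGExec$.

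Unlike in the two previous analogues, the translation step requires some care, because the left-hand side of the target inequality refers to $\chainava^{r_i}_i$ in $\FFGExec$, which is not the direct counterpart of $\Chain^{r_i}_i$ in $\NoFFGExec$ (in particular, in $\FFGExec$ the available chain may be advanced through the finality gadget to $\GJfrozen_i.\chain$). The bridge is precisely \Cref{cond:5-2} and \Cref{cond:6-2} of \Cref{def:exec-equiv}. Since $\chainava_i$ is updated only at vote and fast-confirm rounds (\Cref{line:algtob-vote-chainava} and \Cref{line:algtob-set-chaava-to-bcand}), I would first locate the largest round $r^*_i \le r_i$ at which $v_i$ is active and which has the form $\voting{t^*_i}$ or $\fastconfirming{t^*_i}$. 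If no such round exists, $\chainava^{r_i}_i = \genesis$ and the claim is trivial. Otherwise $\chainava^{r_i}_i = \chainava^{r^*_i}_i$, and, since honesty is persistent and since activeness at $r^*_i$ implies (via the joining protocol) activeness already at $\voting{t^*_i}$, $v_i$ lies in $H_{\voting{t^*_i}}$ and is honest there. Applying \Cref{cond:5-2} when $r^*_i$ is a vote round, or \Cref{cond:6-2} when it is a fast-confirm round, then yields a slot $t_k \le t^*_i$ and a validator $v_k \in H_{\voting{t_k}}$ such that, in $\NoFFGExec$, $\Chain^{\voting{t_k}}_k \succeq \chainava^{r_i}_i$.

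The rest is a clean chaining argument. Because $\voting{t_k} \le \voting{t^*_i} \le r^*_i \le r_i \le r_j$ and $r_j \in \{\proposing{\slot(r_j)}, \voting{\slot(r_j)}\}$ by assumption, the hypotheses of \Cref{lem:ga-confirmed-always-canonical} are met for $(v_k, \voting{t_k})$ and $(v_j, r_j)$ inside $\NoFFGExec$, delivering $\Chain^{\voting{t_k}}_k \preceq \mfc^{r_j}_j$ there. Finally, $\mfc^{r_j}_j$ coincides in the two executions -- by \Cref{cond:3-2} if $r_j = \proposing{\slot(r_j)}$ and by \Cref{cond:4-2} if $r_j = \voting{\slot(r_j)}$ -- so concatenating the two prefix inequalities gives $\chainava^{r_i}_i \preceq \mfc^{r_j}_j$ in $\FFGExec$, as required. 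The main obstacle is just the bookkeeping needed to identify $r^*_i$ and the corresponding witness $(t_k, v_k)$; once that is handled, no conceptual ingredient beyond \Cref{lem:ga-confirmed-always-canonical} and the conditions of \Cref{def:exec-equiv} is used.
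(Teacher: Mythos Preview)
Your proposal is correct and follows essentially the same route as the paper: invoke \Cref{lem:equiv-ga2} to obtain the \mfc-equivalent execution, locate the vote/fast-confirm round at which $\chainava^{r_i}_i$ was last set, use \Cref{cond:5-2} or \Cref{cond:6-2} to obtain a witness $(t_k,v_k)$ with $\Chain^{\voting{t_k}}_k \succeq \chainava^{r_i}_i$ in the simpler protocol, apply \Cref{lem:ga-confirmed-always-canonical} there, and transfer back via \Cref{cond:3-2}/\Cref{cond:4-2}. Your bookkeeping around $r^*_i$ (the genesis fallback, and the joining-protocol argument that activity at $\fastconfirming{t^*_i}$ implies $v_i \in H_{\voting{t^*_i}}$) is more explicit than the paper's, but the underlying argument is identical.
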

\begin{proof}
  Take any compliant execution $\FFGExec$ of \Cref{alg:3sf-tob-noga}.
  Let $\NoFFGExec$ be an equivalent execution of \Cref{algo:prob-ga-fast} which \Cref{lem:equiv-ga2} proves to exist.
  Let $r_i$ be any round and $r_j$ be any round such that $r_j\ge r_i$ and $r_j \in \{\proposing{r_j},\voting{r_j}\}$. 
  We know that $\chainava^{r_i}_i$ is first set by $v_i$ in round $r_i' \leq r_i$ such that $r_i' \in \{\voting{\slot(r_i')},\fastconfirming{\slot(r_i')}\}$.
  From \Cref{cond:5-2,cond:6-2} of \Cref{lem:equiv-ga2}, we know that there exits a slot $t_m \leq r_k$ and a validator $v_m \in H_{\voting{(t_m)}}$ such that $\Chain^{\voting{t_m}}_m \succeq \chainava^{r_i'}_i = \chainava^{r_i}_i$.
  \Cref{lem:ga-confirmed-always-canonical} implies that $\mfcNoFFG^{r_j}_j \succeq \Chain^{\voting{t_m}}_m$.
  From \Cref{lem:equiv-ga2}, we know that $\mfcFFG^{r_j}_j= \mfcNoFFG^{r_j}_j$.
  Hence,  $\mfcFFG^{r_j}_j= \mfcNoFFG^{r_j}_j \succeq \Chain^{\voting{t_m}}_m\succeq \chainava^{r_i'}_i = \chainava^{r_i}_i$.
\end{proof}

\begin{theorem}[Reorg Resilience - Analogous of \Cref{thm:reorg-res-prop-tob}]\label{thm:reorg-res-prop-tob-ffg}
  \Cref{alg:3sf-tob-noga} is reorg-resilient.
\end{theorem}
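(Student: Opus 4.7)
The plan is to mirror the proof of \Cref{thm:reorg-res-prop-tob}, substituting each invocation of a lemma about \Cref{algo:prob-ga-fast} with its ebb-and-flow counterpart already established above for \Cref{alg:3sf-tob-noga}. The two ingredients I need are \Cref{lem:vote-proposal-fast-conf-ffg} (an honest proposal is adopted by every later honest voter in \protocol, and in particular is extended by $\mfc$ in any subsequent slot) and \Cref{lem:ga-confirmed-always-canonical-ffg} (the available chain $\chainava$ of any honest validator is always a prefix of the $\mfc$ output of any later honest validator at a propose or vote round). Because these analogues are already in place, the reorg-resilience argument for \Cref{alg:3sf-tob-noga} is essentially a transcription of the proof of \Cref{thm:reorg-res-prop-tob}.

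Concretely, I would fix a slot $t_p$ whose proposer $v_p$ is honest in round $\proposing{t_p}$ and broadcasts a \textsc{propose} message for some chain $\chain_p$, together with an arbitrary round $r_i$ and a validator $v_i$ honest in $r_i$; the goal is to show that $\chain_p$ does not conflict with $\chainava^{r_i}_i$. I would then pick any slot $t_j > \max(t_p, \slot(r_i))$. By Constraint~\eqref{eq:pi-sleepiness} the set $H_{\voting{(t_j)}}$ is non-empty, so I can choose some $v_j \in H_{\voting{(t_j)}}$. Applying \Cref{lem:vote-proposal-fast-conf-ffg} with $t'' = t_j$ yields $\mfcvote{t_j}_j \succeq \chain_p$, while \Cref{lem:ga-confirmed-always-canonical-ffg} applied at $r_j = \voting{t_j}$ yields $\mfcvote{t_j}_j \succeq \chainava^{r_i}_i$. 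Consequently both $\chain_p$ and $\chainava^{r_i}_i$ are prefixes of the common chain $\mfcvote{t_j}_j$, and are therefore prefix-comparable, so they do not conflict.

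The potentially delicate step is really hidden in \Cref{lem:ga-confirmed-always-canonical-ffg} rather than here: $\chainava^{r_i}_i$ can be set at the vote line, the fast-confirm line, or via the $\GJfrozen_i.\chain$ term in \Cref{alg:3sf-tob-noga}, and $v_i$ need not even be active at $r_i$. All of these cases are already discharged by \Cref{lem:ga-confirmed-always-canonical-ffg}, which routes through the $\mfc$-equivalent execution of \Cref{algo:prob-ga-fast} provided by \Cref{lem:equiv-ga2}; so by the time we reach the present theorem nothing further is needed. In particular, no new probabilistic estimates, no appeal to $\kappa$-deep confirmation, and no slashing-side conditions enter the proof—reorg resilience of $\chainava$ follows deterministically from the two analogue lemmas and the non-emptiness of $H_{\voting{(t_j)}}$.
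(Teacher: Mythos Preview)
Your proposal is correct and mirrors the paper's own proof essentially verbatim: the paper proves this theorem by transcribing the proof of \Cref{thm:reorg-res-prop-tob}, replacing $\Chain$ with $\chainava$, \Cref{lem:vote-proposal-fast-conf} with \Cref{lem:vote-proposal-fast-conf-ffg}, and \Cref{lem:ga-confirmed-always-canonical} with \Cref{lem:ga-confirmed-always-canonical-ffg}. Your explicit appeal to Constraint~\eqref{eq:pi-sleepiness} to guarantee $H_{\voting{(t_j)}}\neq\emptyset$ is a small clarification the paper leaves implicit here (it makes the same remark in the proof of \Cref{thm:dyn-avail-fast-conf-tob}), but the argument is otherwise identical.
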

\begin{proof}
  The proof follows the one for \Cref{thm:reorg-res-prop-tob} with the following changes only.
  \begin{enumerate}
    \item Replace $\Chain$ with $\chainava$,
    \item Replace \Cref{lem:vote-proposal-fast-conf} with \Cref{lem:vote-proposal-fast-conf-ffg}, and
    \item Replace \Cref{lem:ga-confirmed-always-canonical} with \Cref{lem:ga-confirmed-always-canonical-ffg}.\qed
  \end{enumerate}
\end{proof}

\begin{theorem}[dynamic-availability - Analogous of \Cref{thm:dyn-avail-fast-conf-tob}]
  \label{thm:dyn-avail-fast-conf-tob-ffg}
\Cref{alg:3sf-tob-noga} is dynamically available \wop.
\end{theorem}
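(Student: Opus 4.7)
The plan is to mirror the proof of \Cref{thm:dyn-avail-fast-conf-tob} with the substitutions $\Chain \to \chainava$, $\texttt{fastconfirmsimple} \to \texttt{fastconfirm}$, and replacing \Cref{lem:vote-proposal-fast-conf} and \Cref{lem:ga-confirmed-always-canonical} by their analogues \Cref{lem:vote-proposal-fast-conf-ffg} and \Cref{lem:ga-confirmed-always-canonical-ffg}. The target confirmation time is still $T_\mathsf{conf} = 8\kappa\Delta + \Delta$, and the structure is the same: prove liveness first, then safety.

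For safety, I would take any two rounds $r_i,r_j$ and validators $v_i,v_j$ honest in those rounds, then pick any slot $t_k > \max(\slot(r_i),\slot(r_j))$ and any validator $v_k \in H_{\voting{(t_k)}}$ (nonempty by Constraint~\eqref{eq:pi-sleepiness}). \Cref{lem:ga-confirmed-always-canonical-ffg} then yields $\mfcvote{t_k}_k \succeq \chainava^{r_i}_i$ and $\mfcvote{t_k}_k \succeq \chainava^{r_j}_j$, so the two available chains extend a common chain and cannot conflict.

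For liveness, I would take any round $r$ in slot $t$, any round $r_i \ge r + 8\kappa\Delta + \Delta \ge \voting{(t+2\kappa)}$, and any $v_i \in H_{r_i}$. By the joining protocol, the first active round of $v_i$ at or after $\voting{(t+2\kappa)}$ is a vote round; call its slot $t_i$. By \Cref{lem:2honestproposers}, \wop there is an honest proposer $v_p$ in some slot $t_p \in [t+1,t+\kappa]$ whose \textsc{propose} message carries $\chain_p$. Applying \Cref{lem:vote-proposal-fast-conf-ffg} to $t_p$ gives $\mfcvote{t_i}_i \succeq \chain_p$, and since $t_p \le t_i - \kappa$ we obtain $\chain_p \preceq (\mfcvote{t_i}_i)^{\lceil \kappa, t_i}$. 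At \Cref{line:algtob-vote-chainava} the candidate $(\chaincanmfc)^{\lceil \kappa}$ is always a prefix of $\chaincanmfc$, so it survives the filter, and thus $\chainava^{\voting{t_i}}_i \succeq \chain_p$. For the transaction-inclusion guarantee one needs the analogue of \Cref{lem:ga-mfc-proposer-shorter-than-t}, which follows directly from $\mfc$-equivalence (\Cref{cond:3-2} of \Cref{lem:equiv-ga2}) applied to the NoFFG execution produced by \Cref{lem:equiv-ga2}: since $\mfcpropose{t_p}$ coincides in both executions and is shorter than $t_p$, the honest proposer includes all of $\txpool^{\proposing{t_p}} \supseteq \txpool^r$ in $\chain_p$. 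Finally, if $r_i$ happens to be a fast-confirmation round, the conclusion $\chainava^{r_i}_i \succeq \chain_p$ is preserved because by \Cref{lem:vote-proposal-fast-conf-ffg} every \textsc{vote} in slot $t_i$ extends $\chain_p$, so any fast-confirmed chain in $\V^{r_i}_i$ (under $f<n/3$) also extends $\chain_p$, and the update at \Cref{line:algtob-set-chaava-to-bcand} only replaces $\chainava_i$ when the new candidate extends it.

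The main obstacle, relative to \Cref{thm:dyn-avail-fast-conf-tob}, is that the vote-phase update at \Cref{line:algtob-vote-chainava} now takes a maximum over three candidates including $\GJfrozen_i.\chain$, and the fast-confirmation step routes through $\texttt{fastconfirm}$ rather than $\texttt{fastconfirmsimple}$, potentially substituting a justified chain. I would handle this by observing that every candidate entering these updates is explicitly filtered to be a prefix of $\chaincanmfc$ (respectively consistent with $\GJ(\V).\chain$), so $\chainava_i$ never overshoots the MFC output; combined with \Cref{cond:5-2,cond:6-2} of \Cref{lem:equiv-ga2}, which embed every $\chainava$ value as a prefix of some $\Chain$ value in the equivalent \Cref{algo:prob-ga-fast} execution, the argument reduces cleanly to the already-proved dynamic availability of \Cref{algo:prob-ga-fast}.
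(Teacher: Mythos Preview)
Your overall strategy matches the paper's: mirror the proof of \Cref{thm:dyn-avail-fast-conf-tob} with $\Chain \to \chainava$, swap in \Cref{lem:vote-proposal-fast-conf-ffg} and \Cref{lem:ga-confirmed-always-canonical-ffg}, and note that \Cref{lem:ga-mfc-proposer-shorter-than-t} carries over. The safety argument is fine and identical to the paper's.

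The gap is in your liveness argument at the fast-confirmation round. Two problems:
\begin{enumerate}
  \item Your claim that the update at \Cref{line:algtob-set-chaava-to-bcand} ``only replaces $\chainava_i$ when the new candidate extends it'' is wrong. The guard is $\chainava_i \nsucceq \fastcand$, so the replacement fires whenever $\chainava_i$ does \emph{not} extend $\fastcand$ --- which includes the case where they conflict, not only where $\fastcand$ extends $\chainava_i$.
  \item More importantly, $\texttt{fastconfirm}$ can return $(\GJ(\V_i).\chain,\emptyset)$ rather than a quorum-backed chain. Your argument ``any fast-confirmed chain extends $\chain_p$ because all slot-$t_i$ \textsc{vote}s extend $\chain_p$'' only covers the $\texttt{fastconfirmsimple}$ branch. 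When $\fastcand = \GJ(\V^{r_i}_i).\chain$, you have not shown $\fastcand \succeq \chain_p$, nor that the replacement preserves $\chainava^{r_i}_i \succeq \chain_p$. Your fallback via \Cref{cond:5-2,cond:6-2} gives only an \emph{upper} bound on $\chainava$ (some $\Chain$ extends it), which is the wrong direction for liveness.
\end{enumerate}

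The paper treats this explicitly as an additional case: since justification requires a supermajority and $f<n/3$, some honest validator $v_k$ at an earlier slot has $\chainava^{\voting{t_k}}_k \succeq \GJ(\V^{r_i}_i).\chain$; then \Cref{lem:ga-confirmed-always-canonical-ffg} applied to both $\chainava^{\voting{t_k}}_k$ and $\chainava^{\voting{\slot(r_i)}}_i$ (which you already showed extends $\chain_p$) forces $\GJ(\V^{r_i}_i).\chain$ and $\chainava^{\voting{\slot(r_i)}}_i$ to be comparable. Given the guard at \Cref{line:algtob-set-chaava-to-bcand} fires, the justified chain must strictly extend the vote-phase $\chainava$, hence $\chainava^{r_i}_i = \GJ(\V^{r_i}_i).\chain \succ \chainava^{\voting{\slot(r_i)}}_i \succeq \chain_p$. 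You should add this case rather than rely on the reduction in your last paragraph.
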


\begin{proof}
  {Note that \Cref{lem:ga-mfc-proposer-shorter-than-t} holds for \Cref{alg:3sf-tob-noga} as well.}
  Then, the proof follows the one for \Cref{thm:dyn-avail-fast-conf-tob} with the following changes only.
  \begin{enumerate}
    \item Replace {$\Chain$ with $\chainava$},
    \item Replace the reference to \Cref{line:algga-no-fin-vote-chainava} of \Cref{algo:prob-ga-fast} with \Cref{line:algtob-vote-chainava} of \Cref{alg:3sf-tob-noga},
    \item Replace \Cref{lem:vote-proposal-fast-conf} with \Cref{lem:vote-proposal-fast-conf-ffg},
    \item Replace \Cref{lem:ga-confirmed-always-canonical} by \Cref{lem:ga-confirmed-always-canonical-ffg}, and
    \item In the proof of liveness, consider the following additional case.
    Assume that $r_i \geq \voting{(t+2\kappa)}$ is a fast confirmation round and that $\chainava^{r_i}_i$ is set to $\GJ(\V^{r_i}_i).\chain$ via \Cref{line:algtob-set-chaava-to-bcand}.
    Because the chain of the target checkpoint of an \textsc{fin-vote} sent by an honest validator $v_\ell$ in round $r_{\ell}$ corresponds to $(\makeFFG{\T}^{r_{\ell}}_\ell.\chain,\cdot)$, with $\makeFFG{\chainava}^{r_{\ell}}_\ell.\chain\succeq\makeFFG{\T}^{r_{\ell}}_\ell.\chain$, and we assume $f<\frac{n}{3}$, there exists a slot $t_k \in [0,\slot(r_i)]$ and a validator $v_k \in H_{\voting{(t_k)}}$ such that $\chainava^{\voting{t_k}}_k \succeq \GJ(\V^{r_i}_i).\chain = \chainava^{r_i}_i$.
    We have already established that $\chainava^{\voting{\slot(r_i)}}_i \succeq \chain_p$.
    Then, \Cref{lem:ga-confirmed-always-canonical-ffg} implies that $\mfcvote{\slot(r_i)}_i \succeq \GJ(\V^{r_i}_i).\chain =  \chainava^{r_i}_i $ and $\mfcvote{\slot(r_i)}_i \succeq \chainava^{\voting{\slot(r_i)}}_i \succeq \chain_p$.

    This implies that $\chainava^{\voting{\slot(r_i)}}_i$ and $\GJ(\V^{r_i}_i).\chain$ do not conflict.
    Hence, given that we assume that \Cref{line:algtob-set-chaava-to-bcand} is executed, $\chainava^{\voting{\slot(r_i)}}_i \prec \GJ(\V^{r_i}_i).\chain$.
    Because, $\chainava^{\voting{\slot(r_i)}}_i \succeq \chain_p$, we can conclude that 
    $\chainava^{r_i}_i = \GJ(\V^{r_i}_i).\chain \succ \chainava^{\voting{\slot(r_i)}}_i\succeq \chain_p$.\qed
  \end{enumerate}
\end{proof}

{
\begin{lemma}[Analogous of \Cref{thm:fast-liveness-tob}]
\label{thm:fast-liveness-modified-tob}
Take a slot $t$ in which \(|H_{\voting{t}}| \geq \frac{2}{3}n\).
If in slot $t$ an honest validator sends a \textsc{propose} message for chain $\chain_p$, then, for any validator $ v_i \in H_{\fastconfirming{t}}$, $\chainava^{\fastconfirming{t}}_i \succeq \chain_p$.
\end{lemma}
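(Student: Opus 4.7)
The plan is to mirror the proof of Lemma~\ref{thm:fast-liveness-tob}, adjusting for the fact that Algorithm~\ref{alg:3sf-tob-noga} invokes the wrapped function \texttt{fastconfirm} rather than \texttt{fastconfirmsimple} directly --- the wrapper overrides the simple output whenever it does not extend the greatest justified chain. First I would invoke Lemma~\ref{lem:vote-proposal-fast-conf-ffg}: since $v_p$ is honest and sends $[\textsc{propose}, \chain_p, \ldots]$ in slot $t$, every validator in $H_{\voting{t}}$ issues a \textsc{vote} for a chain extending $\chain_p$. Combined with $|H_{\voting{t}}| \geq \tfrac{2}{3}n$, any $v_i \in H_{\fastconfirming{t}}$ has received at least $\tfrac{2}{3}n$ such \textsc{vote}s by round $\fastconfirming{t}$, so $\chain_p \in \fastcands$ and $\texttt{fastconfirmsimple}(\V_i^{\fastconfirming{t}}, t)$ returns $(\chain^C, Q)$ with $\chain^C \succeq \chain_p$ and $Q \neq \emptyset$.

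Next I would establish the key auxiliary claim $(\star)$: every checkpoint $\C$ justified by round $\fastconfirming{t}$ satisfies that $\C.\chain$ does not conflict with $\chain_p$. I plan to prove $(\star)$ by strong induction on $\C.c$. The base case $\C = (\genesis, 0)$ is trivial. For $\C.c = c > 0$, justification of $\C$ needs $\geq \tfrac{2}{3}n$ \textsc{fin-vote}s with target $\C$, so since $f < \tfrac{n}{3}$ more than $\tfrac{n}{3}$ of the contributors are honest at $\voting{c}$. Each such contributor $v_j$ constructs its target via \Crefrange{line:algtob-set-target-checkpoint-if}{line:algtob-rejustification}, so $\C.\chain$ equals either $\chainava^{\voting{c}}_j$ --- in which case Theorem~\ref{thm:reorg-res-prop-tob-ffg} rules out a conflict with the honestly-proposed $\chain_p$ --- or $\GJfrozen^{\voting{c}}_j.\chain$, where $\GJfrozen^{\voting{c}}_j$ is itself justified with slot strictly less than $c$ (at $\voting{c}$ the adversarial budget $f < \tfrac{n}{3}$ is insufficient to justify any slot-$c$ checkpoint on its own, and the honest fin-votes with slot-$c$ target are only sent during the slot-$c$ vote phase), so the inductive hypothesis closes the case. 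This induction will be the main obstacle, since justified checkpoints can arise from \textsc{fin-vote}s that the adversary has embedded in arbitrary slot-$c$ vote messages; the argument works because each supermajority link still recruits more than $\tfrac{n}{3}$ honest contributors, whose target chains are tightly constrained by the algorithm.

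Armed with $(\star)$, I would finish by case analysis on the output $\fastcand$ of $\texttt{fastconfirm}(\V_i^{\fastconfirming{t}}, t)$. If $\chain^C \succeq \GJ(\V_i^{\fastconfirming{t}}).\chain$, then $\fastcand = \chain^C \succeq \chain_p$. Otherwise $\fastcand = \GJ(\V_i^{\fastconfirming{t}}).\chain$; by $(\star)$, $\GJ(\V_i^{\fastconfirming{t}}).\chain$ is compatible with $\chain_p$, and together with $\chain^C \succeq \chain_p$ the only possibility consistent with $\chain^C \not\succeq \GJ(\V_i^{\fastconfirming{t}}).\chain$ is $\GJ(\V_i^{\fastconfirming{t}}).\chain \succ \chain^C \succeq \chain_p$, again giving $\fastcand \succeq \chain_p$. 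To conclude, by the base-case reasoning in Lemma~\ref{lem:vote-proposal-fast-conf-ffg} we have $\chainava^{\voting{t}}_i \preceq \chaincanmfc \preceq \chain_p \preceq \fastcand$, so either $\chainava^{\voting{t}}_i$ already equals $\fastcand$ or the conditional at \Cref{line:algtob-set-chaava-to-bcand} triggers and sets $\chainava_i := \fastcand$; in both cases $\chainava^{\fastconfirming{t}}_i \succeq \chain_p$, as required.
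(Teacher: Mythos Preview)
Your argument is correct, but it is substantially more elaborate than what the paper does. The paper's proof is a single sentence: it says the proof follows that of Lemma~\ref{thm:fast-liveness-tob} verbatim, with the only change being the replacement of $\Chain_i$ by $\chainava_i$. In other words, the paper treats the wrapper \texttt{fastconfirm} and the modified conditional at \Cref{line:algtob-set-chaava-to-bcand} as immaterial in the synchronous setting, implicitly relying on the $\mfc$-equivalence machinery of Lemma~\ref{lem:equiv-ga2} (specifically the conditions showing that every justified chain is a prefix of honest $\mfc$ outputs) to guarantee that $\GJ(\V_i).\chain \preceq \chain_p \preceq \chain^C$, so the wrapper never overrides \texttt{fastconfirmsimple}.

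Your route is genuinely different: rather than appeal to the equivalence lemma, you establish the auxiliary claim~$(\star)$ directly, by strong induction on the checkpoint slot and by invoking reorg resilience (Theorem~\ref{thm:reorg-res-prop-tob-ffg}). This gives a self-contained argument that does not depend on the paired-execution framework, at the cost of re-deriving a fact that the equivalence lemma already encodes. One small remark: in your Case~2 you assert ``the only possibility \ldots\ is $\GJ(\V_i).\chain \succ \chain^C$''; what you actually need (and what follows cleanly from $(\star)$ plus the case assumption $\chain^C \not\succeq \GJ(\V_i).\chain$) is merely $\GJ(\V_i).\chain \succeq \chain_p$, since $(\star)$ forces $\GJ(\V_i).\chain$ and $\chain_p$ to be comparable and the sub-case $\GJ(\V_i).\chain \preceq \chain_p \preceq \chain^C$ is excluded. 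The stronger comparison with $\chain^C$ happens to hold here because in slot~$t$ all honest votes are for $\chain_p$ itself, forcing $\chain^C = \chain_p$, but you do not need it.
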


\begin{proof}
    The proof follows the one for \Cref{thm:fast-liveness-tob} {with the only change being replacing $\Chain_i$ with $\chainava_i$}.
\end{proof}
}

\subsection{Partial synchrony}
\label{sec:analysis-tob-psync}
In this section, we show that \Cref{alg:3sf-tob-noga} ensures that the chain $\chainfin$ is always Accountably Safe and is live after time $\max(\GST, \GAT) + {\Delta}$, meaning that \Cref{alg:3sf-tob-noga} is a secure ebb-and-flow protocol {under the assumption that $f<\frac{n}{3}$}.

We start by showing that honest validators following \Cref{alg:3sf-tob-noga} are never slashed.
 
\begin{lemma} \label{lem:never-slashed-3sf-tob-noga}
Honest validators following \Cref{alg:3sf-tob-noga} are never slashed.
\end{lemma}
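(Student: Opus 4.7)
The plan is to show that an honest validator $v_i$ following \Cref{alg:3sf-tob-noga} can never produce a pair of \textsc{fin-vote}s violating either $\mathbf{E_1}$ or $\mathbf{E_2}$. Inspecting \Cref{line:algtob-vote}, every honest \textsc{vote} contains exactly one embedded \textsc{fin-vote} of the form $\GJfrozen_i \to \T$, and \Cref{line:algtob-set-target-checkpoint,line:algtob-rejustification} fix $\T.c = t$, where $t$ is the slot in which the vote is sent. Since an honest validator sends at most one \textsc{vote} per slot (the code executes the vote block once at round $4\Delta t + \Delta$), any two distinct \textsc{fin-vote}s from $v_i$ must come from distinct slots, and therefore have distinct target-slot values. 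This immediately rules out $\mathbf{E_1}$.

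To rule out $\mathbf{E_2}$, I would prove the auxiliary monotonicity statement that the value of $\GJfrozen_i$ at the vote rounds is non-decreasing in the slot index, i.e., for any slots $t_1 < t_2$ we have $\GJfrozen[\voting{t_1}]_i.c \leq \GJfrozen[\voting{t_2}]_i.c$. The variable $\GJfrozen_i$ is modified only in two places: at \Cref{line:algtob-merge-gj-frozen} during Merge, where it is overwritten with $\GJ(\V_i)$, and at \Cref{line:algtob-prop-merge-gj} during the upon-handler for a \textsc{propose} message, which additionally guards the assignment with the explicit check $\GJ_p \geq \GJfrozen_i$ (\Cref{line:algtob-prop-if}). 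The second case is monotone by construction.

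For the first case, the key observation is that whenever $\GJfrozen_i$ is assigned a value, that value is a justified checkpoint in $\V_i$ at that moment: for the Merge update this is immediate, while for the upon-update it follows from the guard $\mathsf{J}(\GJ_p, \V_i)$. Since views only grow and justification is monotone in the view, the previous $\GJfrozen_i$ remains justified in $\V_i$ at the time of the next Merge, so $\GJ(\V_i).c \geq \GJfrozen_i.c$ by definition of the greatest justified checkpoint. Chaining these inequalities along the sequence of updates between $\voting{t_1}$ and $\voting{t_2}$ establishes the monotonicity statement.

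Combining the two parts, suppose $v_i$ sent $\C_1 \to \C_2$ in slot $t_1$ and $\C_3 \to \C_4$ in slot $t_2$. Without loss of generality $t_1 \leq t_2$, so $\C_2.c = t_1 \leq t_2 = \C_4.c$; if equality holds we already ruled it out (both votes coincide). Otherwise $\C_2.c < \C_4.c$, and the monotonicity of $\GJfrozen_i.c$ gives $\C_1.c = \GJfrozen[\voting{t_1}]_i.c \leq \GJfrozen[\voting{t_2}]_i.c = \C_3.c$, contradicting the inequality $\C_3.c < \C_1.c$ required by $\mathbf{E_2}$. The main subtlety, which I expect to be the only non-mechanical step, is the argument that the value assigned at Merge cannot drop $\GJfrozen_i.c$; this relies precisely on the monotonicity of justification under view expansion together with the invariant that the previous $\GJfrozen_i$ was itself justified at the moment it was stored.
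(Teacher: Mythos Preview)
Your proposal is correct and follows essentially the same approach as the paper: both arguments show that distinct honest \textsc{fin-vote}s have distinct target slots (ruling out $\mathbf{E_1}$) and that the source slot $\GJfrozen_i.c$ is non-decreasing over time (ruling out $\mathbf{E_2}$), with the latter resting on the invariant that $\GJfrozen_i$ is always justified in $\V_i$ when assigned, combined with monotonicity of views and of justification. Your phrasing of the monotonicity step as ``every individual update to $\GJfrozen_i$ is non-decreasing in $.c$'' is slightly cleaner than the paper's, which chains through specific rounds and invokes the joining protocol to ensure the validator is awake at the relevant Merge; your version avoids that detour but is otherwise the same argument.
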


\begin{proof}
First of all, observe that

  \begin{enumerate}
    \item By  \Cref{line:algtob-vote}, an honest active validator sends only one \textsc{fin-vote} in any slot.
    \item As a consequence of \Cref{line:algtob-set-target-checkpoint}, if $\T$ is the target checkpoint in the \textsc{fin-vote} sent in slot $t$, then $\T.c = t$.
    \item Take any two slots $t$ and $t'$ with $t< t'$ such that $v_i \in H_\voting{t}$ and $v_i \in H_\voting{t'}$.
    Let $\calS^\ell$ be the source checkpoint of the \textsc{fin-vote} that validator $v_i$ sent in any slot $\ell$.
    Then $\calS^t.c \leq \calS^{t'}.c$.
    Specifically, take any two slots $k$ and $k'$ such that $t \leq k < k' \leq t'$, $v_i \in H_{\voting{k}}$ and $k'$ is the first slot after $k$ such that $v_i \in H_{\voting{k'}}$.
    By \Cref{line:algtob-vote}, we know that $\calS^k = \GJfrozen[\voting{k}]_i$ and by \Cref{line:algtob-prop-if,line:algtob-merge-ch-frozen}, we have that $\mathsf{J}( \GJfrozen[\voting{k}]_i,\V^\voting{k}_i)$.
    Then, note that $k \leq k'-1$ and that because of the joining protocol, we know that $v_i$ is awake at round $\merging{(k'-1)}$.
    Hence, we have that  $\GJfrozen[\voting{k}]_i.c \leq \GJ(\V^\merging{(k'-1)}_i).c = \GJfrozen[\merging{(k'-1)}]_i.c$.
    Finally, by \Cref{line:algtob-prop-if,line:algtob-prop-merge-gj}, $\GJfrozen[\merging{(k'-1)}]_i.c \leq \GJfrozen[\voting{k'}]_i.c = \calS^{k'}.c$ and $\calS^{k}.c \leq \calS^{k'}.c$.
  \end{enumerate}

  Now, take any round $r$ and any validator $v_i$ honest in round $r$.
  Take also any two different \textsc{fin-vote}s $\C_1 \rightarrow \C_2$ and $\C_3 \rightarrow \C_4$ that $v_i$ has sent by round $r$.
  If no such \textsc{fin-vote}s exists, then clearly $v_i$ cannot be slashed.
  By 1 and 2 above we have that $\C_2.c \neq \C_4.c$.
  Hence, $\textbf{E}_\textbf{1}$ is not violated.
  Then, without loss of generality, assume $\C_2.c < \C_4.c$.
  This means that \textsc{fin-vote} $\C_1 \rightarrow \C_2$ has been sent before  \textsc{fin-vote} $\C_3 \rightarrow \C_4$.
  Then, by 3 above, we know that $\C_1.c  \leq  \C_3.c$.
  Given that $\C_2.c < \C_4.c$, $\textbf{E}_\textbf{2}$ cannot be violated.
  Therefore, until $v_i$ is honest, it cannot be slashed.
  \qed
\end{proof}

Let $\Vglobal^r$ denote the global view at time $r$, that is, the set of all messages sent up to and including time $r$. The chain $\chainfin^r_i$ output by an honest validator $v_i$ in round $r$ of \Cref{alg:3sf-tob-noga} is any chain that is finalized according to the view $\Vglobal^r$. This directly follows from \Cref{line:algotb-set-chfin-init,line:algotb-set-chfin-vote,line:algotb-set-chfin-fast} and the definition of $\GF$.

\begin{theorem}[Accountable Safety]
  \label{thm:accountable-safety}
  Let $f \in [0,n]$. The finalized chain $\chainfin$ is {$\frac{n}{3}$-accountable}.
\end{theorem}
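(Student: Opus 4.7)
The plan is to derive Accountable Safety as a direct corollary of \Cref{lem:never-slashed-3sf-tob-noga} (honest validators are not slashable) and \Cref{lem:accountable-safety} (conflicting finalized chains force $n/3$ slashable validators). The two pieces fit together almost mechanically, so the bulk of the argument consists in inspecting the algorithm to justify the hypothesis of \Cref{lem:accountable-safety}, namely that any $\chainfin$ produced by an honest validator at any round is in fact a finalized chain according to some view (the global view at that round suffices).

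First, I would suppose, toward a contradiction of $n/3$-accountable safety, that a cryptographic proof identifying $n/3$ misbehaving validators is \emph{not} extractable from the message transcript of some execution exhibiting a safety violation. By \Cref{def:safety}, this means there exist honest validators $v_i, v_j$ and rounds $r, r'$ such that $\chainfin_i^r$ and $\chainfin_j^{r'}$ are conflicting. Inspecting \Cref{line:algotb-set-chfin-init,line:algotb-set-chfin-vote,line:algotb-set-chfin-fast} of \Cref{alg:3sf-tob-noga}, the value of $\chainfin_i^r$ is either $\genesis$ (trivially finalized) or a chain $\chain$ such that $\chain \preceq \GF(\V_i^{r''}).\chain$ for some round $r'' \le r$; in particular, $\chain$ is finalized according to $\V_i^{r''} \subseteq \Vglobal^r$, hence finalized according to the global view. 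The same holds for $\chainfin_j^{r'}$. Since $\chainfin_i^r$ and $\chainfin_j^{r'}$ conflict, neither is $\genesis$, so both are finalized according to views $\V_1, \V_2 \subseteq \Vglobal^{\max(r,r')}$.

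Second, I would invoke \Cref{lem:accountable-safety} with these two views to conclude that at least $\frac{n}{3}$ validators can be detected, from the \textsc{fin-vote} messages contained in $\Vglobal^{\max(r,r')}$, to have violated condition $\mathbf{E_1}$ or condition $\mathbf{E_2}$. Such detections constitute cryptographic proofs of misbehaviour, since each \textsc{fin-vote} is authenticated by the signature of the issuing validator. Finally, by \Cref{lem:never-slashed-3sf-tob-noga}, no validator that has followed the protocol throughout can be among the $\frac{n}{3}$ detected validators, so no honest protocol-following participant is falsely accused. This contradicts the initial assumption and yields the theorem.

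The main (and only) potential obstacle is the first step: confirming that every $\chainfin$ ever output by an honest validator is genuinely a finalized chain according to some view that is a subset of the global view. This requires walking through the three lines where $\chainfin_i$ is assigned and checking that each assignment preserves this invariant; once this is done, the rest is a near-immediate application of the preceding two lemmas.
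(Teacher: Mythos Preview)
Your proposal is correct and follows essentially the same approach as the paper: the paper's proof is the one-liner ``Follows from \Cref{lem:accountable-safety,lem:never-slashed-3sf-tob-noga},'' relying on the observation (stated just before the theorem) that every $\chainfin_i^r$ is finalized according to the global view, which is exactly the invariant you verify by inspecting \Cref{line:algotb-set-chfin-init,line:algotb-set-chfin-vote,line:algotb-set-chfin-fast}. Your write-up simply unpacks the same two lemmas with a bit more care about why the hypothesis of \Cref{lem:accountable-safety} is met.
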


\begin{proof}
Follows from \Cref{lem:accountable-safety,lem:never-slashed-3sf-tob-noga}.
\end{proof}

We now show that the finalized chain of an honest validator grows monotonically. Although this property is not explicitly defined in \Cref{sec:security}, it is frequently desired by Ethereum protocol implementers~\cite{ethereum-properties}.

\begin{lemma}\label{lem:chfin-ga-always-grows}
  For any two round $r' \geq r$ and validator $v_i$ honest in round $r'$, $\chainfin^{r'}_i \succeq \chainfin^r_i$.
\end{lemma}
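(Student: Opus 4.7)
The plan is to argue monotonicity at each of the two update sites of $\chainfin_i$---\Cref{line:algotb-set-chfin-vote} in the vote phase and \Cref{line:algotb-set-chfin-fast} in the fast-confirm phase---and observe that $\chainfin_i$ is not modified elsewhere. Along the way I will carry the auxiliary invariant $\chainfin_i \preceq \chainava_i$: this is immediate from \Cref{line:algotb-set-chfin-vote} in the vote branch, and at the fast-confirm branch it follows from the chain of prefixes $\GF(\V_i).\chain \preceq \GJ(\V_i).\chain \preceq \fastcand \preceq \chainava_i$, which combines the definition of \texttt{fastconfirm} with whichever branch of \Cref{line:algtob-set-chaava-to-bcand} is taken.

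Two monotonicity facts drive the argument. First, $\chainava_i^{r_1} \preceq \chainava_i^{r_2}$ for any honest $v_i$ and any $r_1 \le r_2$: at \Cref{line:algtob-vote-chainava}, \Cref{lem:ga-confirmed-always-canonical-ffg} gives $\chainava_i \preceq \chaincanmfc$, so the old $\chainava_i$ lies in the $\max$ set; at \Cref{line:algtob-set-chaava-to-bcand}, the guard $\chainava_i \nsucceq \fastcand$ combined with \Cref{lem:one-fast-confirm-all-vote-fast-conf} (applied to $\fastcand$) ensures the new assignment extends the old $\chainava_i$. Second, $\GF(\V_i^{r}).\chain$ grows monotonically in $r$: the view $\V_i^{r}$ is itself monotone, so the set of finalized checkpoints can only accumulate, and any two finalized checkpoints in $\V_i^{r}$ must have nested chains---otherwise \Cref{lem:accountable-jutification} would produce at least $n/3$ slashable validators, which together with \Cref{lem:never-slashed-3sf-tob-noga} and $f < n/3$ yields a contradiction.

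With these ingredients the two update sites preserve the order. At a vote round the previous $\chainfin_i$ satisfies $\chainfin_i \preceq \chainava_i^{\text{new}}$ by the invariant plus $\chainava$-monotonicity, and $\chainfin_i \preceq \GF(\V_i^{\text{new}}).\chain$ by $\GF$-monotonicity applied to whichever $\GF$-chain the previous $\chainfin_i$ was derived from; hence the old $\chainfin_i$ belongs to the set being maximized, so the new value extends it. At a fast-confirm round the assignment $\chainfin_i \gets \GF(\V_i).\chain$ extends the old $\chainfin_i$ directly by $\GF$-monotonicity. The base case is trivial since $\chainfin_i = \genesis$ initially.

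The main obstacle I expect is the $\GF$-monotonicity step: the finalized checkpoints being compared may have entered $v_i$'s view at very different times and originated from proposals in arbitrary slots, so one has to be precise that \emph{every} pair of finalized checkpoints ever appearing in any $\V_i^{r}$ is simultaneously covered by a single application of the accountability lemma, with the same $n/3$ slashing bound ruling out all the bad cases under $f < n/3$.
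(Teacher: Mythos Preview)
Your argument has a genuine gap: it relies on monotonicity of $\chainava_i$, but \Cref{lem:chfin-ga-always-grows} lives in the \emph{partial synchrony} section and must hold without assuming $\GST=0$. The two lemmas you invoke for $\chainava$-monotonicity, \Cref{lem:ga-confirmed-always-canonical-ffg} and \Cref{lem:one-fast-confirm-all-vote-fast-conf}, are both proved only under the standing assumption $\GST=0$ of \Cref{sec:analysis-tob-sync} (and the latter is even stated for \Cref{algo:prob-ga-fast}, not \Cref{alg:3sf-tob-noga}). Before $\GST$, the adversary can delay messages arbitrarily, so at \Cref{line:algtob-vote-chainava} the old $\chainava_i$ need not be a prefix of $\chaincanmfc$ and may therefore be dropped from the $\max$ set; likewise at \Cref{line:algtob-set-chaava-to-bcand}, $\fastcand$ may conflict with the old $\chainava_i$. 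Thus $\chainava_i$ is simply not monotone in general, and the step ``$\chainfin_i^{\text{old}} \preceq \chainava_i^{\text{old}} \preceq \chainava_i^{\text{new}}$'' fails.

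The paper avoids this entirely. Instead of chaining through $\chainava$-monotonicity, it observes that at every update round $r'$ the new $\chainava_i^{r'}$ extends the chain of some \emph{justified} checkpoint $\J$ with $\J.c \ge \GF(\V_i^{r}).c$: at a vote round $\J = \GJfrozen[r']_i$ (with $\chainava_i^{r'} \succeq \GJfrozen[r']_i.\chain$ forced by \Cref{line:algtob-vote-chainava}, since $\GJfrozen_i.\chain \preceq \chainfrozen_i \preceq \chaincanmfc$), and at a fast-confirm round $\J = \GJ(\V_i^{r'})$. Then \Cref{lem:accountable-jutification} plus $f<n/3$ and \Cref{lem:never-slashed-3sf-tob-noga} give $\J.\chain \succeq \GF(\V_i^{r}).\chain \succeq \chainfin_i^{r}$, hence $\chainava_i^{r'} \succeq \chainfin_i^{r}$ directly, without ever needing $\chainava$ to be monotone. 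Your $\GF$-monotonicity argument is fine and is essentially what the paper uses as well; the missing piece is to replace the $\chainava$-monotonicity bridge by this justified-checkpoint bridge.
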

\begin{proof}
  Let us prove this Lemma by contradiction.
  Let $r$ be the smallest round such that there exist round $r' \geq r$ and a validator $v_i$ honest in round $r'$ such that $\chainfin^{r'}_i \nsucceq \chainfin^r_i$.
  {Clearly $r' > r$.}
  Assume $r'$ to be the smallest such round.
  First, we want to show that there exists a checkpoint $\J$ such that $\chainava^{r'}_i \succeq \J.\chain \land \mathsf{J}(\J, \V^{r'}_i) \land$ {$\J \geq \GF(\V^r_i)$}.
  Note that both $r$ are $r'$ are either a vote or a fast confirmation round.
  This is due to the minimality of $r$ and $r'$ and the fact that $\chainfin_i$ is only set in these types of rounds.
  Then, let us proceed by cases.

  \begin{description}
    \item[Case 1: $r'$ is a vote round.]
    In this case $\J = \GJfrozen[r']_i$.
    {Note that due to \Cref{line:algtob-merge-ch-frozen,line:algtob-merge-gj-frozen,line:algtob-prop-check-chfrozen-gj,line:algtob-prop-set-ch-frozen-to-gj}, $\chainfrozen[r']_i \succeq \GJfrozen[r']_i.\chain$.
    Then, \Cref{line:algtob-vote-chainava} implies that $\chainava^{r'}_i \succeq \GJfrozen[r']_i.\chain$.
    Also, \Cref{line:algtob-merge-gj-frozen,line:algtob-prop-if,line:algtob-prop-merge-gj} imply that $\mathsf{J}(\GJfrozen[r']_i,\V^{r'}_i)$.
    Finally, \Cref{line:algtob-prop-if,line:algtob-prop-merge-gj,line:algtob-merge-gj-frozen}, and $r' > r$ imply that 
    {$\GJfrozen[r']_i \geq \GJ(\V^{r}_i)$.}}
    \item[Case 2: $r'$ is a fast confirmation round.]
    In this case, $\J = \GJ(\V^{r'}_i)$.
    Clearly, $\mathsf{J}(\GJ(\V^{r'}_i),\V^{r'}_i)$.
    {\Crefrange{line:algtob-set-fastcand-fconf}{line:algtob-set-chaava-to-bcand} ensure that $\chainava^{r'}_i \succeq \GJ(\V^{r'}_i).\chain$.}
    Then, from \Cref{alg:justification-finalization} and the fact that $\V^r_i\subseteq \V^{r'}_i$, we can conclude that {$\GJ(\V^{r'}_i) \geq \GF(\V^{r'}_i) \geq \GF(\V^{r}_i).c$}.
  \end{description}  
  From \Cref{alg:justification-finalization} and $\V^{r}_i \subseteq \V^{r'}_i$, we also have that {$\GF(\V^{r'}_i) \geq \GF(\V^r_i)$}.
  Then, \Cref{lem:accountable-jutification} and $f<\frac{n}{3}$ imply $\chainava^{r'}_i \succeq \J.\chain \succeq \GF(\V^r_i).\chain \succeq \chainfin^r_i$ and
  $\GF(\V^{r'}_i).\chain\succeq \GF(\V^r_i).\chain$, which further imply
  $\chainfin^{r'}_i \succeq \GF(\V^r_i).\chain$.
  Hence, $\chainfin^{r'}_i  \succeq \GF(\V^r_i).\chain \succeq \chainfin^r_i$ reaching a contradiction.
\end{proof}

Now let us move to the liveness property.

\begin{lemma}
  \label{lem:vote-proposal-fast-conf-ffg-only-base}
  Let $t$ be a slot with an honest proposer $v_p$  such that $\proposing{t} \geq \GST + \Delta$ and assume that~$v_p$ sends a $[\textsc{propose}, \chain_p, \chain^C_p, Q^C_p, \GJ_p, t, v_p]$ message.
  Then, for any validator $v_i \in H_{\voting{t}}$, $v_i$ sends a \textsc{vote} message for chain $\chain_p$ and $\GJfrozen[\voting{t}]_i = \GJ_p$.
\end{lemma}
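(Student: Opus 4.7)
The plan is to establish the two conclusions in parallel by exploiting synchrony (via $\proposing{t}\ge \GST+\Delta$), the uniqueness of honest proposers, and the gossip rule that a received proposal re-broadcasts the blocks and votes it carries.

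First I will prove the $\GJfrozen$ equality. By synchrony, $\V^{\merging{(t-1)}}_i \subseteq \V^{\proposing{t}}_p$, so every checkpoint justified in $\V^{\merging{(t-1)}}_i$ is also justified in $\V^{\proposing{t}}_p$; in particular $\GJfrozen[\merging{(t-1)}]_i.c \le \GJ(\V^{\proposing{t}}_p).c = \GJ_p.c$. Between $\merging{(t-1)}$ and the moment the upon-block for the proposal fires at $\voting{t}$, the variable $\GJfrozen_i$ is not altered: the Merge step does not run in this interval, and by Uniqueness the only valid \textsc{propose} message for slot $t$ is the one sent by the honest $v_p$. Next, the propose message is valid by hypothesis, and because gossip re-broadcasts the votes embedded in a proposal, the supermajority link justifying $\GJ_p$ (present in $\V^{\proposing{t}}_p$) reaches $v_i$ by round $\voting{t}$, so $\mathsf{J}(\GJ_p,\V^{\voting{t}}_i)$ holds. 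Consequently the condition on \Cref{line:algtob-prop-if} is satisfied and \Cref{line:algtob-prop-merge-gj} sets $\GJfrozen[\voting{t}]_i=\GJ_p$.

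For the vote claim I would mirror the base-case argument of \Cref{lem:vote-proposal-fast-conf-ffg}. By synchrony $\V^{\merging{(t-1)}}_i \subseteq \V^{\proposing{t}}_p$, and since $f<n/3$ no two distinct quorum certificates for slot $t-1$ exist, so $\chainfrozen[\merging{(t-1)}]_i$ is either $\genesis$ or equals $\chain^C_p$; in either case $\chain^C_p\succeq \chainfrozen[\merging{(t-1)}]_i$. Combined with the $\GJfrozen$-update above, the checks on \Crefrange{line:algtob-prop-check-chfrozen-gj}{line:algtob-if-setchainfrozen-to-chainc} drive $\chainfrozen[\voting{t}]_i$ to $\chain^C_p$. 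Now at the voting round I split on whether $\mfcvote{t}_i$ is pinned to $\chainfrozen[\voting{t}]_i$ or not: in the pinned case the fork-choice output is a prefix of $\chain_p$, and in the non-pinned case the Graded Delivery property of \TOBSVD{} transfers the majority witness from $v_i$'s view into $\V^{\proposing{t}}_p$, so the proposer's chain $\chain_p$ extends $\mfcvote{t}_i$. Either way, \Crefrange{line:algtob-vote-comm}{line:algtob-vote} cause $v_i$ to vote for $\chain_p$.

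The main obstacle is the bookkeeping around $\GJfrozen_i$ between $\merging{(t-1)}$ and the processing of the proposal at $\voting{t}$: one must argue that no spurious update happens in this window and that the supermajority link for $\GJ_p$ actually arrives in time. Both follow from Uniqueness of honest proposers and from the gossip primitive propagating the proposal's embedded votes within $\Delta$ after $\proposing{t}$, but this is the step where the partial synchrony assumption $\proposing{t}\ge\GST+\Delta$ is genuinely used.
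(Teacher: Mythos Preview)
Your treatment of the $\GJfrozen$ equality is essentially correct and matches the paper. The gap is in the vote claim, specifically in your handling of $\chainfrozen[\merging{(t-1)}]_i$.

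You assert that, by quorum uniqueness, $\chainfrozen[\merging{(t-1)}]_i$ is ``either $\genesis$ or equals $\chain^C_p$'', and hence $\chain^C_p\succeq\chainfrozen[\merging{(t-1)}]_i$. This imports the reasoning from \Cref{algo:prob-ga-fast}, but it is false for \Cref{alg:3sf-tob-noga}: there $\chainfrozen$ is set via $\texttt{fastconfirm}$ (\Cref{line:algtob-merge-ch-frozen}), not $\texttt{fastconfirmsimple}$, and $\texttt{fastconfirm}$ falls back to $\GJ(\V).\chain$ whenever the quorum-backed chain fails to extend the greatest justified chain. Thus $\chainfrozen[\merging{(t-1)}]_i$ can be $\GJ(\V^{\merging{(t-1)}}_i).\chain$, which is neither $\genesis$ nor necessarily a prefix of $\chain^C_p$; likewise $\chain^C_p$ itself may be $\GJ_p.\chain$ rather than a quorum-backed chain. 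Your premise ``$\chain^C_p\succeq\chainfrozen[\merging{(t-1)}]_i$'' is therefore unsupported, and the appeal to \Crefrange{line:algtob-prop-check-chfrozen-gj}{line:algtob-if-setchainfrozen-to-chainc} does not go through as stated.

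The paper's proof addresses exactly this by a case split on whether $\chainfrozen[\merging{(t-1)}]_i \succ \GJfrozen[\merging{(t-1)}]_i.\chain$ (genuine quorum) or $\chainfrozen[\merging{(t-1)}]_i = \GJfrozen[\merging{(t-1)}]_i.\chain$ (the $\GJ$ fallback), and in the quorum case it further splits on how $\chainfrozen$ relates to $\GJ_p.\chain$. The mechanism you are missing is the reset at \Cref{line:algtob-prop-set-ch-frozen-to-gj}: when $\chainfrozen_i \not\succeq \GJ_p.\chain$, the algorithm first overwrites $\chainfrozen_i$ with $\GJ_p.\chain$, after which the test $\chainfrozen_i \preceq \chain^C_p$ on \Cref{line:algtob-if-setchainfrozen-to-chainc} passes because $\chain^C_p \succeq \GJ_p.\chain$ always holds by construction of $\texttt{fastconfirm}$. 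Only in the remaining sub-cases (where $\chainfrozen$ already extends $\GJ_p.\chain$) does quorum uniqueness, combined with $\V^{\merging{(t-1)}}_i\subseteq\V^{\proposing{t}}_p$, yield $\chainfrozen \preceq \chain^C_p$ directly. Your sketch needs this finer case analysis to close.
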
  
\begin{proof}
  \sloppy{Suppose that in slot \(t\), an honest proposer sends a $[\textsc{propose}, \chain_p, \chain^C_p, Q^C_p,\GJ_p, t, v_p]$ message.}
  Consider an honest validator \(v_i\) in \(H_{\voting{t}}\).
  Note that due to the synchrony assumption, $\V^{\merging{(t-1)}}_i \subseteq \V^{\proposing{t}}_p$.
  Hence, {$\GJ_p \geq \GJfrozen[\merging{(t-1)}]_i$}.
  Now, we want to show that  $\chain^{\text{frozen},\voting{t}}_i = \chain^C_p$.
  Let us consider two cases.
  \begin{description}
    \item[Case 1: {$\chainfrozen[\merging{(t-1)}]_i\succ \GJfrozen[\merging{(t-1)}]_i.\chain$}.]
    \sloppy{This implies that at time $\merging{(t-1)}$, $v_i$ has received a  quorum of \textsc{vote} messages for $\chainfrozen[\merging{(t-1)}]_i$.
    Let us consider three sub cases.}
    \begin{description}
      \item[Case 1.1: {$\chainfrozen[\merging{(t-1)}]_i = \GJ_p.\chain$}.] Given that $\chain^C_p \succeq \GJ_p.\chain$, \Cref{line:algtob-if-setchainfrozen-to-chainc,algo:3sf-noga-setchainfrozen-to-chainc} set  $\chain^{\text{frozen},\voting{t}}_i = \chain^C_p$.
      \item[Case 1.2: {$\chainfrozen[\merging{(t-1)}]_i \succ \GJ_p.\chain$}.] 
      This implies that $\V^{\proposing{t}}_p$ include a quorum of \textsc{vote} messages for $\chain^C_p$.
      Due to the synchrony assumption, the quorum of \textsc{vote} messages for $\chainfrozen[\merging{(t-1)}]_i$ are in the view of validator $v_p$ at time $\proposing{t}$.
      Given that $f<\frac{n}{3}$, this case implies that $\chainfrozen[\merging{(t-1)}]_i = \chain^C_p$, and therefore
      $\chainfrozen[\voting{t}]_i = \chain^C_p$.
      \item[Case 1.3: {$\chainfrozen[\merging{(t-1)}]_i \nsucceq \GJ_p.\chain$}.] In this case, \Crefrange{line:algtob-prop-check-chfrozen-gj}{algo:3sf-noga-setchainfrozen-to-chainc} set $\chain^{\text{frozen},\voting{t}}_i = \chain^C_p$.
    \end{description}
    \item[Case 2: {$\chainfrozen[\merging{(t-1)}]_i = \GJfrozen[\merging{(t-1)}]_i.\chain$}.] Given that $\chain^C_p \succeq \GJ_p$ and {$\GJ_p \geq \GJfrozen[\merging{(t-1)}]_i$}, \Crefrange{line:algtob-prop-if}{algo:3sf-noga-setchainfrozen-to-chainc} imply that $\chainfrozen[\voting{t}]_i = \chain^C_p$.
  \end{description}
  
  \sloppy{Now, first suppose that at round $\voting{t}$ there exists a chain $\chain \succeq \chain^{\text{frozen}}_i = \chain^C_p$ such that \(\left|\left(\Vfrozen\right)^{\chain,t} \cap \V_i^{\chain,t} \right| > \frac{\left|\mathsf{S}(\V_i,t)\right|}{2}\).}
  By the Graded Delivery property~\cite{streamliningSBFT}, this implies that at time $\proposing{t}$, $\left|\V_p^{\chain,t}\right|>\frac{\left|\mathsf{S}(\V_p,t)\right|}{2}$ meaning that $\chain_p\succeq \chain$ and hence, in slot $t$, $v_i$ sends a \textsc{vote} message for $\chain_p$.
  
  If no such a chain exists, $v_i$ still sends a \textsc{vote} for $\chain_p$ as $\chain_p \succeq \chain^C_p = \chain^\text{frozen}_i$.

  \medskip

  Also, given that {$\GJ_p \geq \GJfrozen[\merging{(t-1)}]_i$}, due to \Cref{line:algtob-prop-if,line:algtob-prop-merge-gj}, $\GJfrozen[\voting{t}] = \GJ_p$.
\end{proof}

\begin{lemma}
\label{lem:rejustification}
Assume that $f< \frac{n}{3}$.
Let $v_p$ be any validator honest by the time it \textsc{propose}s chain $\chain_p$ in a slot $t$ such that $\proposing{t} \geq \max(\GST, \GAT) + 4\Delta$ and assume that $\GJ_p^\proposing{t}.c < t -1$.
Then, chain $\chain_p$ is justified at slot~$t+1$, and finalized at slot~$t+2$.
\end{lemma}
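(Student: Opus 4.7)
The plan is to trace the \protocol{} pipeline through slots $t$, $t+1$, and $t+2$, identifying at each vote round the $\GJfrozen$-based source and the $\chainava$-based target of the honest \textsc{fin-vote}s. From $\proposing{t}\geq\max(\GST,\GAT)+4\Delta$ I have synchrony and $|H_{\voting{t'}}|>\tfrac{2}{3}n$ for $t'\in\{t,t+1,t+2\}$; I additionally inherit from the surrounding liveness theorem (\Cref{thm:liveness-chfin}) that slots $t+1$ and $t+2$ are led by honest proposers $v_{p'}$ and $v_{p''}$. My two workhorses are \Cref{lem:vote-proposal-fast-conf-ffg-only-base} -- an honest-proposer slot locks every honest $\GJfrozen$ to the proposer's $\GJ$ and every honest \textsc{vote} to the proposer's chain -- and \Cref{thm:fast-liveness-modified-tob}, which then pushes the proposer's chain into $\chainava$ via fast confirmation within the same slot. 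Applying these to slot $t$: every $v_i\in H_{\voting{t}}$ sets $\GJfrozen[\voting{t}]_i=\GJ_p$ and votes for $\chain_p$; since $\GJ_p.c<t-1$, the condition at \Cref{line:algtob-set-target-checkpoint-if} fails and \Cref{line:algtob-rejustification} produces the target $(\GJ_p.\chain,t)$, yielding a supermajority link $\GJ_p\to(\GJ_p.\chain,t)$ that rejustifies $(\GJ_p.\chain,t)$; fast confirmation then gives $\chainava^{\fastconfirming{t}}_i=\chain_p$.

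For slot $t+1$, synchrony forces $\GJ_{p'}=(\GJ_p.\chain,t)$ (the unique greatest justified checkpoint at $\proposing{(t+1)}$), hence $\GJfrozen[\voting{(t+1)}]_i=(\GJ_p.\chain,t)$ by \Cref{lem:vote-proposal-fast-conf-ffg-only-base}. Now $\GJfrozen_i.c=t=(t+1)-1$, so the if-branch of \Cref{line:algtob-set-target-checkpoint-if} activates and the target is $(\chainava_i,t+1)$. The crucial step is to argue $\chainava_i=\chain_p$ at the vote: in the max of \Cref{line:algtob-vote-chainava} the inherited $\chainava_i$ (pinned to $\chain_p$ by slot~$t$'s fast-confirm) has length $t$, whereas $(\mfcvote{t+1}_i)^{\lceil\kappa}$ has length at most $t+1-\kappa<t$ (using $\kappa>1$) and $\GJfrozen_i.\chain=\GJ_p.\chain$ is strictly shorter than $\chain_p$; reorg-resilience (\Cref{thm:reorg-res-prop-tob-ffg}) places all three below $\mfcvote{t+1}_i$, so the max is $\chain_p$. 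Every honest validator therefore sends \textsc{fin-vote} $(\GJ_p.\chain,t)\to(\chain_p,t+1)$, justifying $(\chain_p,t+1)$.

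Re-running the slot-$t$ template on $v_{p'}$ fast-confirms $\chain_{p'}\succeq\chain_p$ in slot $t+1$, so $\chainava_i=\chain_{p'}$ entering slot $t+2$'s vote. Applying \Cref{lem:vote-proposal-fast-conf-ffg-only-base} to $v_{p''}$ yields $\GJfrozen[\voting{(t+2)}]_i=(\chain_p,t+1)$ -- again firing the if-branch -- and the analogous length comparison returns $\chainava_i=\chain_{p'}$, so every honest validator sends \textsc{fin-vote} $(\chain_p,t+1)\to(\chain_{p'},t+2)$, a supermajority link whose source and target slots differ by exactly $1$. This finalizes $(\chain_p,t+1)$, and hence $\chain_p$. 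The main obstacle is this target-coalescence step: for the fin-votes to form a single supermajority link per slot, every honest validator must agree on the same target checkpoint in slots $t+1$ and $t+2$, which reduces to the length comparison above and hinges on (i) $\kappa>1$ so that the $\kappa$-deep prefix never overtakes the pinned $\chainava_i$, (ii) reorg-resilience placing all three candidates inside the common $\mfcvote{}$-prefix, and (iii) fast confirmation in the previous slot pinning $\chainava_i$ to the freshly proposed chain.
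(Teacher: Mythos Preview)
Your three-slot pipeline---rejustify $(\GJ_p.\chain,t)$ at slot $t$, justify $(\chain_p,t{+}1)$ at slot $t{+}1$, finalize via a link with source $(\chain_p,t{+}1)$ at slot $t{+}2$---is exactly the paper's argument, and your explicit length-comparison pinning $\chainava_i=\chain_p$ at $\voting{(t{+}1)}$ spells out what the paper compresses into a bare citation of \Cref{lem:vote-proposal-fast-conf-ffg-only-base}.

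Two issues. First, you import honest proposers $v_{p'},v_{p''}$ at slots $t{+}1$ and $t{+}2$, but the lemma does not assume this, and your ``inheritance'' from \Cref{thm:liveness-chfin} does not deliver both: that theorem posits honest proposers at only two consecutive slots and applies the present lemma at one of them, so in one of its two invocations neither $t{+}1$ nor $t{+}2$ (in the lemma's numbering) has an assumed-honest proposer. The paper instead obtains $\GJfrozen[\voting{(t{+}1)}]_i=\T$ and $\GJfrozen[\voting{(t{+}2)}]_i=(\chain_p,t{+}1)$ directly from the merge-phase updates at $\merging{t}$ and $\merging{(t{+}1)}$---after $\GAT$ every honest validator is awake there and sees the fresh justification---rather than via \Cref{lem:vote-proposal-fast-conf-ffg-only-base} applied to later proposers; in particular the honest $v_{p''}$ is unnecessary. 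Second, your appeal to \Cref{thm:reorg-res-prop-tob-ffg} is out of scope: that reorg-resilience result is proved only under $\GST=0$, whereas this lemma lives in the partial-synchrony section. The fact you actually need, $\chain_p\preceq\mfcvote{t{+}1}_i$, follows directly from $\chainfrozen[\voting{(t{+}1)}]_i=\chain_p$ (which you already have via slot-$t$ fast confirmation and the merge) together with the definition of $\mfc$, and the remaining two candidates in \Cref{line:algtob-vote-chainava} are prefixes of $\chain_p$ and hence automatically below $\mfcvote{t{+}1}_i$.
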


\begin{proof}
Let $\chain_p$ be the chain proposed by the honest proposer at slot~$t$, with $\proposing{t} \geq \max(\GST, \GAT) + 4\Delta$, extending both the output of the fork-choice function $\mfc$ and the greatest justified checkpoint $\GJ_p$ with $\GJ_p^\text{frozen}.c < t - 1$.

By Line~\ref{line:algtob-rejustification} and \Cref{lem:vote-proposal-fast-conf-ffg-only-base}, each honest validator sets the target checkpoint to the same checkpoint $\T := (\GJ_p^\proposing{t}.\chain, t).$

Then, again due to \Cref{lem:vote-proposal-fast-conf-ffg-only-base}, in slot~$t$, all honest validators $v_i$ send \textsc{vote} messages of the form $[\textsc{vote},\, \chain_p,\, \GJ_p^\proposing{t} \to \T,\, t,\, v_i]$, which justify $\T$ (since $f < n/3$ ensures sufficient honest quorum).

Consequently, in slot~$t+1$, $\T$ is the greatest justified checkpoint in the view of all honest validators and is used as the source of the \textsc{fin-vote}s. \Cref{lem:vote-proposal-fast-conf-ffg-only-base} also implies that $\chainava_i^\voting{t+1} = \chain_p$.

Therefore, due to \Cref{line:algtob-set-target-checkpoint-if,line:algtob-set-target-checkpoint}, in slot $t+1$, any honest validator sends \textsc{fin-vote}s with source $\T$ and target $(\chain_p, t+1)$, thereby justifying $(\chain_p, t+1)$. In slot~$t+2$, all honest validators vote with source $(\chain_p, t+1)$, and by round $4\Delta(t+2) + 2\Delta$, these votes are visible to all honest validators, finalizing $\chain_p$ at slot~$t+2$.
\end{proof}

\begin{theorem}[Liveness]
\label{thm:liveness-chfin}
Assume $f < \frac{n}{3}$, and let $t$ and $t+1$ be slots after $\max(\GST, \GAT) + 4\Delta$. Then, the chain $\chain_p$ proposed by the proposer of slot $t$ is finalized at slot~$t+2$ or slot~$t+3$.

In particular, for any validator $v_i \in H_{r}$, where $r \ge 4\Delta(t+3) + 2\Delta$, it holds that $\chain_p \preceq \chainfin^{\,r}_i \quad \text{and} \quad \txpool^{\proposing{t}} \subseteq \chainfin^{\,r}_i.$
\end{theorem}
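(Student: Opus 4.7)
The plan is to reduce the argument to \Cref{lem:rejustification} after a case split on whether rejustification is needed at slot~$t$, using the honest proposer at slot~$t+1$ as a fallback that triggers rejustification if slot~$t$ alone does not complete the pipeline. Throughout I would rely on two preparatory facts: by \Cref{thm:fast-liveness-modified-tob}, since after $\GAT$ we have $|H_{\voting{t}}| \geq n - f > \tfrac{2}{3}n$, every honest validator fast-confirms $\chain_p$ by round $\fastconfirming{t}$, so $\chainava_i^{\fastconfirming{t}} \succeq \chain_p$; and by \Cref{lem:vote-proposal-fast-conf-ffg} together with \Cref{lem:vote-proposal-fast-conf-ffg-only-base}, the honest proposer at slot~$t+1$ proposes some $\chain_{p'} \succeq \chain_p$ and, once its proposal is processed, all honest validators share the same $\GJfrozen$.

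If $\GJ_p^{\proposing{t}}.c < t-1$, I would apply \Cref{lem:rejustification} directly: it yields $(\chain_p, t+1)$ justified in slot~$t+1$ and finalized in slot~$t+2$. If instead $\GJ_p^{\proposing{t}}.c = t-1$, no rejustification takes place at slot~$t$ and honest validators emit \textsc{fin-vote}s with target $(\chainava_i^{\voting{t}}, t)$. I would then branch into two sub-cases. If these targets coincide on a common chain $\chain^{\star} \preceq \chain_p$, a supermajority of honest \textsc{fin-vote}s justifies $(\chain^{\star}, t)$; then at slot~$t+1$ every honest $v_i$ has $\GJfrozen_i.c = t$ and previous $\chainava_i \succeq \chain_p$ (from the fast confirmation at $\fastconfirming{t}$), so its target $(\chainava_i^{\voting{t+1}}, t+1) = (\chain_p, t+1)$ is justified, and the fin-vote from $(\chain_p, t+1)$ at slot~$t+2$ finalizes $\chain_p$ at slot~$t+2$. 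Otherwise, no checkpoint at slot~$t$ gathers a supermajority, so every honest $v_i$ enters slot~$t+1$ with $\GJfrozen_i.c = t-1 < (t+1)-1$; I would then apply \Cref{lem:rejustification} to the honest proposer at slot~$t+1$ (with proposal $\chain_{p'} \succeq \chain_p$), which finalizes $\chain_{p'}$, and hence $\chain_p$, at slot~$t+3$.

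For the ``in particular'' statement, once $\chain_p$ is finalized (by slot~$t+2$ or slot~$t+3$), the $\Delta$-gossip bound after $\GST$ ensures that by round $r \geq 4\Delta(t+3) + 2\Delta$ every active honest validator has received the corresponding supermajority link, so $\GF(\V_i^r).\chain \succeq \chain_p$. Combined with \Cref{line:algotb-set-chfin-vote,line:algotb-set-chfin-fast} of \Cref{alg:3sf-tob-noga} and the monotonicity of $\chainfin_i$ established in \Cref{lem:chfin-ga-always-grows}, this yields $\chain_p \preceq \chainfin_i^r$; the inclusion $\txpool^{\proposing{t}} \subseteq \chain_p$ then follows from \Cref{lem:ga-mfc-proposer-shorter-than-t} together with the defining property of $\mathsf{Extend}$, which packages all transactions in the pool into the proposer's chain. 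The main obstacle I anticipate is the fragmentation sub-case inside $\GJ_p^{\proposing{t}}.c = t-1$: one must carefully argue that, when no clean justification forms at slot~$t$, every honest validator genuinely retains $\GJfrozen_i.c = t-1$ at $\voting{t+1}$ so that the second invocation of \Cref{lem:rejustification} is legitimate, while still ensuring $\chain_{p'} \succeq \chain_p$ via the reorg-resilient fork choice provided by \Cref{thm:reorg-res-prop-tob-ffg}.
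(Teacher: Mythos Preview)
Your overall structure matches the paper's: split on whether $\GJ_p^{\proposing{t}}.c = t-1$, invoke \Cref{lem:rejustification} directly when it is $< t-1$, and handle the remaining case via a further sub-split that eventually appeals to \Cref{lem:rejustification} at slot~$t+1$. The ``in particular'' paragraph is also essentially the paper's argument.

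The genuine gap is in your sub-split inside the case $\GJ_p^{\proposing{t}}.c = t-1$. You branch on whether the honest validators' targets at slot~$t$ coincide, and in sub-case~B you conclude that ``no checkpoint at slot~$t$ gathers a supermajority'' and hence that every honest $v_i$ enters slot~$t+1$ with $\GJfrozen_i.c = t-1$. This inference is not sound: even when the honest targets are split, the adversary can contribute its $f < n/3$ fin-votes to one of the honest groups and thereby push a checkpoint $(\chain',t)$ over the $2n/3$ threshold. In that scenario the proposer $v_{p_2}$ of slot~$t+1$ may well have $\GJ(\V_{p_2}^{\proposing{t+1}}).c = t$, and then by \Cref{lem:vote-proposal-fast-conf-ffg-only-base} every honest validator at $\voting{t+1}$ has $\GJfrozen_i.c = t$, not $t-1$; the hypothesis of \Cref{lem:rejustification} at slot~$t+1$ fails and your sub-case~B argument collapses. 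You yourself flag exactly this as ``the main obstacle I anticipate,'' but you do not resolve it.

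The paper sidesteps the problem by choosing a different dichotomy: it splits on whether $\GJ(\V_{p_2}^{\proposing{t+1}}).c = t$, i.e., on what the slot-$(t+1)$ proposer actually observes. This is the right pivot, because via \Cref{lem:vote-proposal-fast-conf-ffg-only-base} it directly controls the common $\GJfrozen_i$ at $\voting{t+1}$ and hence whether \Cref{line:algtob-set-target-checkpoint} or \Cref{line:algtob-rejustification} fires. If $\GJ(\V_{p_2}^{\proposing{t+1}}).c = t$, all honest validators at slot~$t+1$ use that checkpoint as source and target $(\chain_p,t+1)$, giving finalization at~$t+2$; otherwise $\GJ(\V_{p_2}^{\proposing{t+1}}).c < t$ and \Cref{lem:rejustification} applies cleanly to slot~$t+1$, giving finalization at~$t+3$. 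Replacing your ``do honest targets coincide at slot~$t$'' split with this ``what does $v_{p_2}$ see at $\proposing{t+1}$'' split closes the gap.
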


\begin{proof}
Let $v_{p_1}$ and $v_{p_2}$ be the honest proposers in slot $t$ and $t+1$, respectively. 
Let us first assume that $\GJ_p^\proposing{t}.c = t - 1$. 
We distinguish two cases:

\begin{enumerate}
    \item \textbf{Case 1: The slot of the greatest justified checkpoint in the view of proposer $v_{p_2}$ at propose time in slot $t+1$ is $t$, i.e., $\GJ_{p_2}^\proposing{t+1} = \C = (\chain',t)$ for some chain $\chain'$.} By \Cref{lem:vote-proposal-fast-conf-ffg-only-base}, and \Cref{line:algtob-set-target-checkpoint-if,line:algtob-set-target-checkpoint}, all honest validators $v_i$ in slot~$t+1$ have $\GJfrozen[\voting{t+1}]_i = \GJ_{p_2}^\proposing{t+1} = \C$, and send \textsc{fin-vote}s with source $\C$ and target $(\chain_p, t+1)$. Since $f < \frac{n}{3}$, these votes justify $(\chain_p, t+1)$ in slot~$t+1$. In slot~$t+2$, honest validators vote with source $(\chain_p, t+1)$. By round $4\Delta (t+2) + 2\Delta$, these votes are included in every honest validator's view, finalizing $\chain_p$ at slot~$t+2$. In particular, $\chain_p \preceq \GF(\V^{4\Delta(t+2)+2\Delta}).\chain$ and due to Line~\ref{line:algotb-set-chfin-fast}, $\chain_p \preceq \chainfin^{4\Delta(t+2)+2\Delta}_i$. Due to \Cref{lem:ga-mfc-proposer-shorter-than-t}, it follows that $\txpool^\proposing{t} \subseteq \chainfin^{4\Delta (t+2) + 2 \Delta}_i$. As a consequence of \Cref{lem:chfin-ga-always-grows}, $\chainfin^{4\Delta(t+2)+2\Delta}_i \preceq \chainfin^{4\Delta(t+3)+2\Delta}_i \preceq \chainfin^{r}_i$, for every $r\ge 4\Delta(t+3)+2\Delta$, and the result follows.

    \item \textbf{Case 2: No justification for slot $t$ is observed by $v_{p_2}$ at propose time in slot $t+1$.} In this case, we apply \Cref{lem:rejustification} to slot~$t+1$, which implies that the chain \textsc{propose}d by $v_{p_2}$ at slot~$t+1$ is justified in slot~$t+2$ and finalized in slot~$t+3$. Since this chain extends $\chain_p$ proposed by $v_{p_1}$ (due to \Cref{lem:vote-proposal-fast-conf-ffg}), $\chain_p$ is finalized as a prefix of the finalized chain in slot $t+3$. In particular, $\chain_p \preceq \GF(\V^{4\Delta(t+3)+2\Delta}).\chain$ and due to Line~\ref{line:algotb-set-chfin-fast}, $\chain_p \preceq  \chainfin^{4\Delta(t+3)+2\Delta}_i$. Due to \Cref{lem:ga-mfc-proposer-shorter-than-t}, it follows that $\txpool^\proposing{t} \subseteq \chainfin^{4\Delta (t+3) + 2 \Delta}_i$
\end{enumerate}
Observe that if $\GJ_p^\proposing{t}.c<  t - 1$, then we can immediately apply Lemma~\ref{lem:rejustification} finalizing the chain \textsc{propose}d by $v^t_p$ in slot~$t+2$. 
\end{proof}

\begin{theorem}\label{thm:ga-ebb-and-flow}
  \Cref{alg:3sf-tob-noga} is a secure ebb-and-flow protocol \wop.
\end{theorem}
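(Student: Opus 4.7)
The plan is to verify each requirement in the definition of a secure ebb-and-flow protocol in turn, relying almost entirely on results already established in this appendix. There are four things to check: (i) $\chainava$ is dynamically available, (ii) $\chainfin$ is always safe, (iii) $\chainfin$ is live after $\max(\GST,\GAT) + O(\Delta)$ with $\Tconf = O(\kappa)$, and (iv) $\chainfin^r_i \preceq \chainava^r_i$ for every honest $v_i$ and every round $r$.

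Item (i) is immediate from \Cref{thm:dyn-avail-fast-conf-tob-ffg}. For item (ii), I would argue that two conflicting finalized chains would, by \Cref{thm:accountable-safety}, require at least $n/3$ validators to have violated condition $\mathbf{E_1}$ or $\mathbf{E_2}$; since honest validators never do (\Cref{lem:never-slashed-3sf-tob-noga}), all violators must be adversarial, contradicting $f < n/3$. This yields safety across all partially synchronous executions without any time bound. For item (iii), I would fix any $r \geq \max(\GST,\GAT) + 4\Delta$ and any $\tx \in \txpool^r$, then apply \Cref{lem:2honestproposers} to obtain, with overwhelming probability, two consecutive slots $t, t+1$ both starting after $r$ and occurring within $\kappa$ slots of $\slot(r)$ whose proposers are honest. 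The chain $\chain_p$ proposed in slot $t$ contains $\txpool^{\proposing{t}} \supseteq \txpool^r$ by \Cref{lem:ga-mfc-proposer-shorter-than-t} and the \texttt{Extend} rule, and \Cref{thm:liveness-chfin} then yields $\tx \in \chainfin^{r_i}_i$ for every honest $v_i$ and every $r_i \geq 4\Delta(t+3)+2\Delta$. Monotonicity (\Cref{lem:chfin-ga-always-grows}) extends this to all later rounds, giving $\Tconf = O(\kappa\Delta) = O(\kappa)$.

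For item (iv), I would induct on the updates in \Cref{alg:3sf-tob-noga}. Both chains start at $\genesis$; \Cref{line:algotb-set-chfin-vote} explicitly sets $\chainfin_i$ to a prefix of $\chainava_i$; and later rounds only grow $\chainava_i$ while preserving the invariant thanks to \Cref{lem:chfin-ga-always-grows}. The only subtle case, and the step I expect to be the main obstacle, is \Cref{line:algotb-set-chfin-fast}, where $\chainfin_i$ is overwritten by $\GF(\V_i).\chain$: here I would use that $\chainava_i$ has just been updated via \Cref{line:algtob-set-chaava-to-bcand} (through $\texttt{fastconfirm}$) to satisfy $\chainava_i \succeq \GJ(\V_i).\chain$, while \Cref{lem:accountable-jutification} combined with $f < n/3$ and \Cref{lem:never-slashed-3sf-tob-noga} forces $\GF(\V_i).\chain \preceq \GJ(\V_i).\chain$, yielding the required prefix relation. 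The rest is routine bookkeeping.
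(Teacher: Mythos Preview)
Your proposal is correct and follows essentially the same route as the paper: the same four items are checked using the same supporting results (\Cref{thm:dyn-avail-fast-conf-tob-ffg} for (i); \Cref{thm:accountable-safety} with \Cref{lem:never-slashed-3sf-tob-noga} for (ii); \Cref{lem:2honestproposers} plus \Cref{thm:liveness-chfin} for (iii); and for (iv) the vote-round update via \Cref{line:algotb-set-chfin-vote} together with the fast-confirm-round argument that $\chainava_i \succeq \GJ(\V_i).\chain \succeq \GF(\V_i).\chain$ via \Cref{lem:accountable-jutification} and $f<n/3$). The only cosmetic difference is that you frame (iv) as an induction and invoke \Cref{lem:chfin-ga-always-grows} in (iii), both of which the paper handles implicitly.
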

\begin{proof}
  Safety of $\chainava$ and $\chainfin$, and  Liveness of $\chainava$ follow from this, \Cref{thm:accountable-safety,thm:dyn-avail-fast-conf-tob-ffg}, and \Cref{lem:never-slashed-3sf-tob-noga}.

  Then, we need to show that $\chainfin$ is live after $\max(\GST,\GAT) + O(\Delta)$ with confirmation time $O(\kappa)$.
  Let $t$ be any slot such that $\proposing{t} \geq \max(\GST,\GAT) + \Delta$.
  There is a high probability of finding a slot \( t_p  \in [t, t + \kappa)\) such that both $t_p$ and $t_p+1$ have honest proposers (\Cref{lem:2honestproposers}).
  By \Cref{thm:liveness-chfin}, $\txpool^\proposing{t}  \subseteq \txpool^\proposing{t_p} \subseteq \chainfin^\fastconfirming{(t_p+3)}_i$ for any validator honest in $\fastconfirming{(t_p+3)}$ and, clearly, $\fastconfirming{(t_p+3)}-\proposing{t} \in O(\kappa)$.

  {
    Finally, we need to show that, for any round $r$ and validator $v_i \in H_r$, $\chainfin^r_i \preceq \chainava^r_i$.
    \Cref{line:algotb-set-chfin-vote} clearly ensures that this is the case when $r$ is a vote round.
    Note that the only other times at which either $\chainava$ or $\chainfin$ are potentially modified are fast confirmation rounds.
    Then, let $r$ be any fast confirmation round.
    \Crefrange{line:algtob-set-fastcand-fconf}{line:algtob-set-chaava-to-bcand} ensure that $\GJ(\V^r_i).\chain \preceq \chainava^r_i$.
    By \Cref{alg:justification-finalization}, 
   {$\GJ(\V^r_i)\geq\GF(\V^r_i)$} 
    which, due to \Cref{lem:accountable-jutification} and the assumption that $f < \frac{n}{3}$, implies $\chainfin  \preceq \GJ(\V^r_i).\chain \preceq \chainava^r_i$.
  }
\end{proof}

\end{document}